\newcommand{\bv}{\mathbf{v}}
\newcommand{\sd}{\mathbb{S}^{d-1}}
\newcommand{\R}{\mathbb{R}}
\newcommand{\td}{\widetilde }
\newcommand{\ol}{\overline }
\newcommand{\ca}{\mathcal }
\newcommand{\tn}{\, \textnormal}
\newcommand{\beq}{\begin{equation}\label }
\newcommand{\eeq}{\end{equation} }
\newcommand{\bal}{\begin{align*} }
\newcommand{\ve}{\varepsilon}
\newcommand{\bs}{\boldsymbol}
\newcommand{\eal}{\end{align*} }
\numberwithin{equation}{section}
\newtheorem{thm}{Theorem}[section]
\newtheorem{lem}[thm]{Lemma}
\newtheorem{assump}{Assumption}
\newtheorem{thm2}{Theorem}[section]
\newtheorem{lem2}{Lemma}[section]
\newtheorem{prop2}{Proposition}[section]
\newcommand{\Var}{\operatorname{Var}}
\newcommand{\Vol}{\operatorname{Vol}}
\newcommand{\E}{\mathbb{E}}
\renewcommand{\P}{\mathbb{P}}
\newcommand{\Unif}{\operatorname{Unif}}
\newcommand{\bbeta}{{\boldsymbol{\beta}}}
\newcommand{\btheta}{{\boldsymbol{\theta}}}
\newcommand{\bTheta}{{\boldsymbol{\Theta}}}
\newcommand{\bgamma}{{\boldsymbol{\gamma}}}
\newcommand{\bt}{\mathbf{t}}
\newcommand{\bX}{\mathbf{X}}
\newcommand{\bb}{\mathbf{b}}
\newcommand{\bu}{\mathbf{u}}
\begin{document}

\title{\vspace{-40pt}Tests for qualitative features in the random coefficients model}

\author{\begin{tabular}{ccc}
Fabian Dunker\footnote{School of Mathematics and Statistics, University of Canterbury, Private Bag 4800, Christchurch 8140, New Zealand
}& & Konstantin Eckle\footnote{Mathematical Institute of the University of Leiden, Niels Bohrweg 1, 2333 CA Leiden, Netherlands
} \footnote{Corresponding autoher, email k.j.eckle@math.leidenuniv.nl}\\
\small{University of Canterbury} & & \small{University of Leiden}\\[7pt]
 Katharina Proksch\footnote{Institute for Mathematical Stochastics, Georg-August-University of Goettingen, Goldschmidtstrasse 7, 37077 Goettingen, Germany
 } & & Johannes Schmidt-Hieber\footnote{Mathematical Institute of the University of Leiden, Niels Bohrweg 1, 2333 CA Leiden, Netherlands
 }\\
 \small{University of Goettingen} & & \small{University of Leiden}\\[-30pt]
 \end{tabular}
}


\date{}
\maketitle

\begin{abstract}
The random coefficients model is an extension of the linear regression model that allows for unobserved heterogeneity in the population by modeling the regression coefficients as random variables. Given data from this model, the statistical challenge is to recover information about the joint density of the random coefficients which is a multivariate and ill-posed problem.  Because of the curse of dimensionality and the ill-posedness, pointwise nonparametric estimation of the joint density is difficult and suffers from slow convergence rates. Larger features, such as an increase of the density along some direction or a well-accentuated mode can, however, be much easier detected from data by means of statistical tests. In this article, we follow this strategy and construct tests and confidence statements for qualitative features of the joint density, such as increases, decreases and modes. We propose a multiple testing approach based on aggregating single tests which are designed to extract shape information on fixed scales and directions. Using recent tools for Gaussian approximations of multivariate empirical processes, we derive expressions for the critical value. We apply our method to simulated and real data.
\end{abstract}

%
%
\paragraph{Keywords:}  Gaussian approximation; mode detection; monotonicity; multiscale statistics; shape constraints; Radon transform; ill-posed problems. 

\section{Introduction}
In the random coefficients model, $n$ i.i.d. random vectors $(\mathbf{X}_i,Y_i),$ $i=1,\ldots,n$ are observed, with $\mathbf{X}_i=(X_{i,1}, \ldots, X_{i,d})$ a $d$-dimensional vector of design variables and
\begin{align}
	Y_i = \beta_{i,1}X_{i,1} + \beta_{i,2} X_{i,2} + \ldots + \beta_{i,d} X_{i,d}, \quad i=1,\ldots,n.
	\label{eq:model}
\end{align} 
The unobserved random coefficients $\bbeta_i=(\beta_{i,1}, \ldots, \beta_{i,d}),$ $i=1,\ldots,n,$ are i.i.d. 
realizations of an unknown $d$-dimensional distribution $F_\bbeta$ with Lebesgue density $f_\bbeta.$ Design variables and random coefficients are assumed to be independent. The statistical task is to recover properties of the joint density $f_\bbeta,$ which is assumed to belong to some nonparametric class. In this work, we derive tests for increases and modes of $f_\bbeta.$

For $d=1,$ the random coefficients model simplifies to nonparametric density estimation. For $d>1,$ recovery of $f_\bbeta$ is an inverse problem with ill-posedness depending on the distribution of the design vectors $\mathbf{X}_i.$ If the design is sufficiently regular, the inverse problem is mildly ill-posed. Otherwise, the model can be severely ill-posed or even be non-identifiable. In this work, we study the mildly ill-posed regime and consider in particular the random coefficients model with random intercept
\begin{align}
	Y_i = \beta_{i,1} + \beta_{i,2} X_{i,2} + \ldots + \beta_{i,d} X_{i,d}, \quad i=1,\ldots,n,
	\label{eq:model_with_int}
\end{align} 
which can be obtained from \eqref{eq:model} setting $X_{i,1} = 1,$ almost surely.

Random coefficients models appear in econometrics and epidemiology and are used to model unobserved heterogeneity in the population. While the standard linear regression model accounts for unobserved heterogeneity only by an intercept that varies across the population, the random coefficients model allows in addition that different individuals have different slopes. Applications in epidemiology are considered by \cite{Greenland00, Greenland06}. In economics, random coefficients models are frequently used to evaluate panel data, cf. \cite{Hsiao14} or \cite{Hasio04}, Chapter 6, for an overview. Modeling and estimating consumer demand in industrial organization and marketing often makes use of random coefficients \cite{BLP:95, Petrin:02, Nevo:01, Berry07, Dube12}. In all these works, parametric assumptions on $f_{\bbeta}$ are imposed. Recently, nonparametric approaches for random coefficients became popular in microeconometrics \cite{hoderlein2008, Masten15, HHM:15, DHK:17}, frequently combined with binary choice \cite{Ichimura98, Gautier12, Gautier13, Masten14, DHK:13, Fox16, DHKS:18}, among others.

The random coefficients model also includes quantum homodyne tomography. In this case, we observe an angle $\Phi_i$ and
\begin{align}\label{qht}
	Y_i = Q_i  \cos(\Phi_i) + P_i \sin(\Phi_i) , \quad i=1,\ldots,n,
\end{align}
with $(Q_i,P_i)$ i.i.d. random variables which are unobserved and independent of $\Phi_i.$ The angles $\Phi_i$ can be chosen 
by the experimenter and are typically uniform on $[0,\pi].$ The interest is in reconstruction of the Wigner function which 
takes the role of the joint density of $(Q_i,P_i).$ Because $P_i$ and $Q_i$ are not jointly observable, the Wigner 
function can take negative values. For more on quantum homodyne tomography and the Wigner function, see \cite{butucea2007}.

We propose a nonparametric test for shape information of the joint density $f_\bbeta$ in the random coefficients model. The focus will be on a test for directional derivatives and modes. The nonparametric estimation theory for $f_{\bbeta}$ has been developed in \cite{Beran1992,Beran1996,Feuerverger2000,hoderlein2008}. Due to the ill-posedness of the problem and the curse of dimensionality induced by $d,$ pointwise estimation rates are slow. The reason is that small perturbations in the signal are indistinguishable given the data. Nevertheless, we can get good detection rates for larger features, such as an accentuated mode or a strong increase in the joint density along some direction. From a practical point of view, the  relevant information regarding an unknown density is typically its shape rater than its precise, full reconstruction. It is therefore essential to recover increases/decreases and the modes of a density.  If, say, two modes in the joint density of two random quantities are detected, this indicates that two different groups can be identified. Hence, shape information allows to interpret a given dataset.

Larger features of the density will also be discovered by a nonparametric estimator even if it suffers from slow pointwise 
convergence. There are, however, two important reasons why a testing approach might be more appropriate. Firstly, with a significance test of level $\alpha$
 we can conclude that with probability $1-\alpha$ a detected feature is not an artifact. Secondly, for an estimator we need to pick one bandwidth or smoothing parameter while detection of different features might require different bandwidth choices depending on the size of the hidden features themselves. Indeed, a short and steep increase will be best detected on a small scale whereas for finding a longer and less strong increase the choice of a larger bandwidth is beneficial. Using multiple testing methods, it is possible to combine a whole range of smoothness parameters into one test and to adapt to different shapes of features.

We construct a so called multiscale test, aggregating single tests on different scales and directions. Multiscale tests can be viewed as a multiple testing procedure specifically designed for nonparametric models. Given a model, the theoretical challenge is to prove that a multiscale statistic can be approximated by a distribution free statistic which is independent of the observations. This allows us then to compute quantiles and to find approximations for the critical values of the multiscale statistic. So far, qualitative feature detection based on multiscale statistics has been studied for various nonparametric models, including the Gaussian white noise model \cite{duembgen2001}, density estimation \cite{duembgen2008} and  deconvolution \cite{schmidthieber13}.
In multivariate settings the classical KMT approximation suffers from the curse of dimensionality which then leads to very restrictive conditions on the usable scales. Instead, very recent results on Gaussian approximations of suprema of multivariate empirical processes developed by  \cite{Chernozhukov2017} can be used \cite[in the context of multivariate deconvolution and multivariate linear inverse problems with additive noise, respectively]{Eckle,proksch16}. In this work, we extend these techniques. The main difficulties are twofold.  First, we need to derive specific properties of the inverse Radon transform for general dimension $d$. Second, in contrast to the other works  on multiscale inference, no distribution free approximation can be obtained  and we therefore need to study the approximating process if several unobserved functions are replaced by estimators.

In order to study the power of the multiscale test, a theoretical detection bound and numerical simulations are provided. The theoretical result gives conditions under which a mode can be detected. In a numerical simulation study, we investigate the power of the test for increases/decreases along some direction and mode detection in dependence on the sample size and the design variables. We also analyze real consumer demand data from the British Family Expenditure Survey. 

Let us briefly summarize related literature on testing in the random coefficients model. Under a parametric assumption on the density $f_\bbeta,$ \cite{Beran1993} considers goodness-of-fit testing and \cite{swamy1970, andrews2001} test whether some of the random coefficients are deterministic. The only test based on a nonparametric assumption was proposed recently by \cite{Breunig16}. It allows to assess  whether a given set of data  follows the random coefficients model.

This paper is organized as follows. In Section \ref{sec:inv_prob}, we describe the connection between the random coefficients model and the Radon transform. Rewriting the model as an inverse problem in terms of the Radon transform reveals the ill-posed nature of the model. This allows us to construct and to analyze the multiscale test in Section \ref{sec:multiscale_test}. In this part we also derive the asymptotic theory of the estimator and obtain theoretical detection bounds. In Section \ref{Sec4} the test is studied for simulated data. As a real data example, consumer demand is analyzed in Section \ref{sec:real_data}. Proofs and technicalities are deferred to a supplement.  An R package and Python code is available as a supplement as well, \url{https://arxiv.org/abs/1704.01066}.

{\it Notation:} Throughout the paper, vectors are displayed by bold letters, e.g. $\mathbf{X}, \bbeta.$ Inequalities between vectors are understood componentwise. The Euclidean norm on $\mathbb{R}^d$ is 
denoted by $\|\cdot\|$ and the corresponding standard inner product by $\langle \cdot, \cdot \rangle.$ We further denote by 
$\mathbf{e}_1,\ldots,\mathbf{e}_d\in\mathbb{R}^d$ the standard ON-basis of the $d$-dimensional Euclidean space, $\mathbb{S}^{d-1}$ denotes the unit 
sphere in $\mathbb{R}^d$ and we write $\mathcal{Z}$ for the cylinder $\mathcal{Z}=\mathbb{R}\times \mathbb{S}^{d-1}$. Furthermore, we write 
$\bv$ for any direction $\bv=\sum_{j=1}^d\mathrm{v}_j\mathbf{e}_j\in \mathbb{S}^{d-1}$. For two positive sequences $(a_n)_n,$ $(b_n)_n,$ $a_n \lesssim b_n$ or $b_n\gtrsim a_n$ mean that for some positive constant $C,$ $a_n \leq Cb_n$ for all $n.$ As usual, we write $a_n \asymp b_n$ if $a_n \lesssim b_n$ and $b_n \lesssim a_n.$

\section{The random coefficients model as an inverse problem}
\label{sec:inv_prob}

\begin{minipage}{0.5\textwidth}
The random coefficients model can be written in terms of the Radon transform \cite[cf.][]{Beran1996}. This allows us then to interpret
the model as an inverse problem. In this section, we summarize the main steps and review relevant results on the inversion of 
the Radon transform. Let $H^s$ denote the $L^2$-Sobolev space. The Radon transform is the operator $R: H^{s}(\mathbb{R}^d)\rightarrow H^{s+\frac{d-1}{2}}(\mathcal{Z}),$ with
\end{minipage}
\hspace{1cm}
\begin{minipage}{0.35\textwidth}
\includegraphics[scale=0.45]{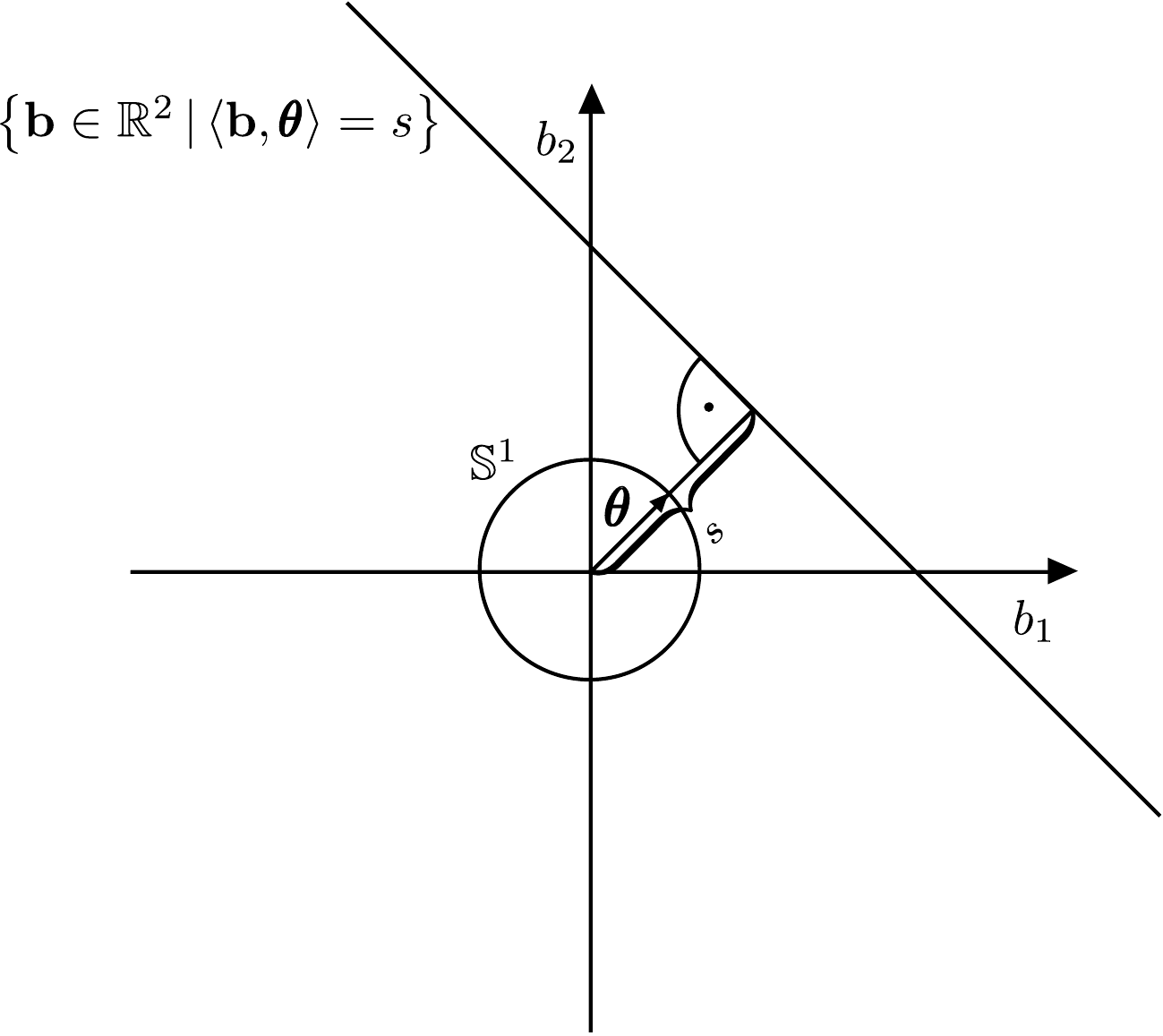}
\end{minipage}


  \begin{align*}
    Rf(s,\btheta)=\int_{\langle\bb,\btheta\rangle=s}f(\bb)\,d\mu_{d-1}(\bb)
  \end{align*}  
and $\mu_{d-1}$ the surface measure on the $(d-1)$-dimensional hyperplane $\{\bb\in\mathbb{R}^d:\langle\bb,\btheta\rangle=s\}$. The Radon transform maps therefore a function to all its integrals over hyperplanes parametrized by $(s,\btheta)\in\mathcal{Z}.$ The figure above shows the parametrization in two dimensions.

For the connection between the Radon transform and the random coefficients model \eqref{eq:model} consider the normalized observations
\begin{align*}
	S_i :=  \frac{Y_i}{\|\mathbf{X}_i\|}, \quad \bTheta_i := \frac{\mathbf{X}_i}{\|\mathbf{X}_i\|}, \quad i=1, \ldots, n.
\end{align*}
The random vectors $\bTheta_i$ take values in the $(d-1)$-dimensional sphere $\mathbb{S}^{d-1}$. In the random coefficients 
model with intercept \eqref{eq:model_with_int}, $\bTheta_i$ is always in the upper hemisphere, i.e. the first component of 
$\bTheta_i$ is positive. In this case, we extend the distribution of $\bTheta_i$ to the whole sphere by randomizing the signs of the
design variables. For this purpose, we generate independent random variables $\zeta_i$, $i=1,\hdots,n$, with 
$\P(\zeta_i=1) =\P(\zeta_i=-1) =1/2,$ which are independent of the data $(\mathbf{X}_i, Y_i),$ $i=1,\ldots,n$, and define 
$S_i := \zeta_i Y_i/\|\mathbf{X}_i\|$ and $\bTheta_i := \zeta_i \mathbf{X}_i/\|\mathbf{X}_i\|.$ Independent of the 
symmetrization, we have $S_i = \langle \bTheta_i , \bbeta_i \rangle$. The conditional distribution of 
$S_1 | \bTheta_1$ is therefore  
\[F_{S|\bTheta}(x|\btheta)=\mathbb{P}(S_1\leq x|\bTheta_1=\btheta)=\mathbb{P}(\langle\bTheta_1,\bbeta_1\rangle\leq x|
\bTheta_1=\btheta)=\int_{-\infty}^{x}(Rf_{\bbeta})(s, \btheta) \,ds, \]
and the conditional density becomes
\begin{align}\label{1.3}
f_{S|\bTheta}(s | \btheta) = (Rf_{\bbeta})(s, \btheta).
\end{align}
Recall that we have access to an i.i.d. sample $(S_i, \bTheta_i)$ of the joint density $f_{S, \bTheta}$. This allows for nonparametric estimation of $f_{S|\bTheta}$ via $f_{S|\bTheta}(s | \btheta) = f_{S, \bTheta}(s , \btheta)/f_\bTheta(\btheta)$. Applying the inverse Radon transform to this estimate gives an estimator for the joint density $f_\bbeta.$ This inversion scheme suffers from two sources of ill-posedness. Firstly, dividing by $f_\bTheta$ might result in very unstable reconstructions if $f_\bTheta$ is small. This happens if the normalized design variables $\bTheta_i =\bX_i/\|\bX_i\|$ systematically miss observations from some directions. In this case the 
problem becomes unevenly harder and only logarithmic convergence rates can be obtained  \cite[see][]{Davison1983,Frikel2013,Hohmann2016}. When the support of $\bTheta_i$ does not contain an open ball, $f_\bbeta$ might be non-identifiable. Secondly, even with regularity on the distribution of the design, the Radon inversion is known to be an ill-posed problem with degree of ill-posedness $(d-1)/2$. Hence, regularization of the inversion scheme is necessary. 

In this work, we study the mildly ill-posed case where the random directions $\bTheta_i,$ $i=1,\ldots,n,$ are sufficiently regularly distributed over the sphere and the ill-posedness is only due to the inversion of the Radon transform. The precise assumptions on the design are stated in Section \ref{sec.assump_on_design}.

Our approach makes use of the following explicit inversion formula of the Radon transform. Define the operator $\Lambda$ via
\begin{align}\label{.3}
	\Lambda f(s, \btheta) =
	\mathcal{H}_d\partial_s^{d-1} Rf (s, \btheta),
\end{align}
where $\mathcal{H}_d$ denotes the identity for $d$ odd and the Hilbert transform 
\[
	\ca H_d f(u)= \frac{1}{\pi}\lim_{\epsilon\rightarrow 0^+} 
	\int_{(-\infty, u-\epsilon] \cup [u+\epsilon, \infty) }  \frac{f(s)}{u-s}  ds =\frac{1}{\pi} \text{p.v.}  \int_{-\infty}^{\infty}  \frac{f(s)}{u-s}  ds\quad\big(u\in\mathbb R\big)
\]
 for $d$ even. Let $c_d^{-1}=(-1)^{({d-1})/{2}} 2^{-d}\pi^{{1-d}}$ for $d$ odd and ${c}_d^{-1}= -(-1)^{{d}/{2}}2^{-d}\pi^{{1-d}}$ for $d$ even. If $\varphi$ is a Schwartz function on $\R^d$, then we have the inversion formula
\begin{align}
	\varphi=c_d^{-1}R^*\Lambda\varphi,
	\label{eq.inversion_formula}
\end{align}
cf. Theorem 3.8 in \cite{Helgason2011}. The so called back projection operator $R^*$ is the adjoint of the Radon transform with respect to the $L^2$ scalar product. Notice that our constant $c_d$ differs from the constant in  \cite{Helgason2011} as we use the standard definition of the Hilbert transform and define $R^*$ as the adjoint of the Radon transform (as opposed to the dual transform).

\section{Multiscale tests for qualitative features}\label{sec:multiscale_test}
\subsection{Multiscale inference}\label{subsec}

The goal of this work is to derive confidence statements for qualitative features of the joint density of 
the random coefficients. In particular, we are interested in the detection of modes (local maxima) of the density. Following the approach of
\cite{schmidthieber13}, we express the features in terms of differential operators. To be precise, for a collection of compactly supported,
non-negative and sufficiently smooth test functions $\phi_{\bt,h}$ consider the integral 
 \begin{align}\label{test}
\int_{\mathbb{R}^d} \phi_{\bt,h}(\bb)\partial_\bv f_{\bbeta} (\bb)\, d\bb=\sum_{k=1}^d\mathrm{v}_k\int_{\mathbb{R}^d} \phi_{\bt,h}(\bb)\frac{\partial}{\partial \mathrm{b}_k}f_{\bbeta}(\bb)\,d\bb
\end{align} for some directional vector $\bv=(\rm v_1,\hdots,\rm v_d)^\top$ and $d\geq2$. Since there should not be any favored direction, we consider in the following radially symmetric test functions,
 \begin{align}
	\phi_{\bt,h}(\cdot) = \frac 1{h^d \Vol(\mathbb{S}^{d-2})}\phi\Big(\frac{\| \cdot -\bt\|}{h}\Big)
	\label{eq.def_radially_symm}
\end{align}
with a non-negative and sufficiently smooth kernel 
 $\phi:[0,\infty)\rightarrow [0,\infty)$ with $\int_0^\infty \phi(u) du=1$ and support on $[0,1]$.
Moreover, $\Vol(\mathbb{S}^{d-2})$ denotes the volume of the sphere $\mathbb{S}^{d-2}\subset \mathbb{R}^{d-1}$ and $\Vol(\mathbb{S}^0):=2$. Notice that $\phi_{\bt,h}$ is supported on the ball $B_h(\bt)$ with center $\bt$ and radius $h.$ 
The normalization for $\phi_{\bt,h}$ turns out to be convenient but does not entail that $\phi_{\bt,h}$ integrates to one.

If the integral \eqref{test} is positive, there exists a subset of $B_h(\bt)$ with positive Lebesgue measure on which $\partial_\bv f_{\bbeta}$ is positive.  On this subset, $f_{\bbeta}$ is thus strictly increasing in direction $\bv.$ Similarly, we can recover a decrease if the integral \eqref{test} is negative. To construct a statistical test for increases and decreases it is therefore natural to use an empirical counterpart of the functional defined in \eqref{test}. 

Let $\mathcal{T} = \{(\bt,h,\bv) : h\in(0,1], \mathbf a_1+h\leq \bt \leq \mathbf{a}_2-h,\bv\in \mathbb{S}^{d-1}\}$ for $\mathbf{a}_1\leq\mathbf{a}_2\in\mathbb{R}^d$,  where the inequalities for the 
vectors $\mathbf a_1,\mathbf a_2,\bt$ are understood componentwise. For statistical inference regarding the sign of the directional derivatives of $f_\bbeta$, we   fix a subset $\mathcal{T}_{n}\subset\mathcal{T}$ and test for all $(\bt,h,\bv)\in\ca{T}_n$ simultaneously the corresponding hypotheses of the form 
\begin{align}
\label{t5neu}H_{0,+}^{\bt,h,\bv}: \int_{\R^d} \phi_{\bt,h}(\bb)\partial_{\bv}f_\bbeta(\bb){d}\bb\leq0 ~~   
&\mbox{versus}  ~ ~H_{1,+}^{\bt,h,\bv}:\int_{\R^d} \phi_{\bt,h}(\bb)\partial_{\bv}f_\bbeta(\bb){d}\bb>0
\end{align}
and
\begin{align}
\label{t5neu1}H_{0,-}^{\bt,h,\bv}: \int_{\R^d} \phi_{\bt,h}(\bb)\partial_{\bv}f_\bbeta(\bb){d}\bb\geq0 ~~   
&\mbox{versus}  ~  ~H_{1,-}^{\bt,h,\bv}:\int_{\R^d} \phi_{\bt,h}(\bb)\partial_{\bv}f_\bbeta(\bb){d}\bb<0.
\end{align}
For constructing global tests, we can now argue as in \cite{Eckle}. Our main interest are the following three global testing problems (i)-(iii).

{\bf (i) Testing for the presence of a mode at a fixed location.} Tests for the hypotheses \eqref{t5neu} and \eqref{t5neu1} can be used for the detection of specific shape constraints such as a mode at a given point $\bb_0\in\R^d$. For this purpose, we consider several bandwidths/scales $h$ and for each $h$ consider pairs 
$(\bt_1,\bv_1),\hdots,(\bt_p,\bv_p)$, where $\bv_j,$ $j=1,\hdots,p$, are 
directional vectors and the test locations $\bt_j$ are points on the line $\{\bb_0+r\bv_j:r\geq h\}$ $(j=1,\hdots,p)$ in a neighborhood of $\bb_0$.
Inference
for the presence of a mode at the point $\bb_0$ can now be conducted by studying the testing problem
\begin{align}
\label{t5neu2}H_{0,-}^{\bt_j,h,\bv_j} ~~   
&\mbox{versus}  ~  ~H_{1,-}^{\bt_j,h,\bv_j},
\end{align}
with $h$ ranging over all chosen scales and $j=1,\hdots,p.$ Level and power of the mode test \eqref{t5neu2} for different designs are reported in Section \ref{loctest}. In Section \ref{multi}, we also show that it is essential to include several bandwidths/scales $h$ in order to separate modes which are close.

{\bf (ii) A global testing procedure for all modes.}
Simultaneous tests for the hypotheses \eqref{t5neu} and \eqref{t5neu1} can be used for a global testing procedure to detect all modes of the density on a domain. Compared to the previous case, we search for evidence over a range of different $\bb_0$ which then inflates the number of local tests.

{\bf (iii) A graphical representation of the local monotonicity behavior for bivariate densities.} Let $d=2$ and define a subset 
$\ca{T}_n=\{(\widetilde\bt_j,h_0,\widetilde\bv_j):j=1,\hdots,p\}$ for a fixed scale $h_0$ of the form
$\ca{T}_n=\ca{T}_{\mathbf{t}}\times\{h_0\}\times\ca{T}_{\mathbf{v}}$,
where $\ca{T}_{\mathbf{t}}$ contains the $p/|\ca{T}_{\mathbf{v}}|$ vertices of an equidistant grid of width $2h_0$ and 
$\ca{T}_{\mathbf{v}}$ contains the directions. We restrict the testing procedure to one fixed scale for an easy to read graphical representation. We consider four equidistant directions on 
$\mathbb S^1$ given by $\ca{T}_{\mathbf{v}}=\{\bv_1,-\bv_1,\bv_2,-\bv_2\}.$ Since $\ca{T}_{\mathbf{v}}=-\ca{T}_{\mathbf{v}}$, we have symmetry in the hypotheses, i.e. $H_{0,+}^{\widetilde\bt_j,h_0,\widetilde\bv_j}=H_{0,-}^{\widetilde\bt_j,h_0,-\widetilde\bv_j}$. Therefore, we test only $H_{0,-}^{\widetilde\bt_j,h_0,\widetilde\bv_j}$
for all triples $(\widetilde\bt_j,h_0,\widetilde\bv_j)\in\ca{T}_n.$
Figure \ref{g25} displays an example for the test outcome with the hypotheses in \eqref{t5neu} and $\ca T_n$ as above.
 An arrow in a direction  $\widetilde\bv_j$ at a location $\widetilde\bt_j$ represents a rejection of the corresponding hypothesis 
 $H_{0,-}^{\widetilde\bt_j,h_0,\widetilde\bv_j}$
 and provides an indication of a negative directional derivative 
 of $f_\bbeta$ in direction $\widetilde\bv_j$ at the  location $\widetilde\bt_j$. Thus, Figure \ref{g25} provides strong evidence that the density is trimodal with modes close to the locations $(-0.5,-0.5)^\top$, $(1.5,-0.5)^\top$, and $(0.5,1.5)^\top$. A detailed description of the settings used to generate Figure \ref{g25} and an analysis of the results is given in Section \ref{g27}. 
\begin{figure}[h]
\begin{center}
\includegraphics[width=0.35\textwidth]{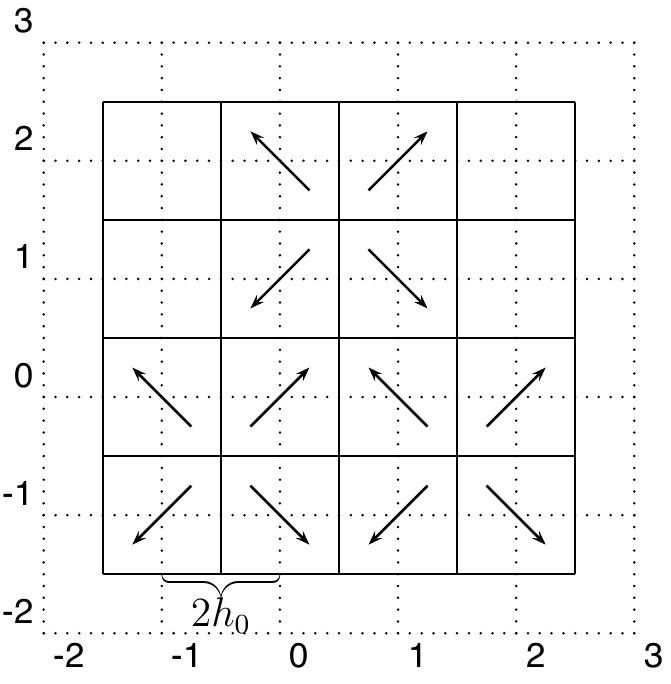}
\caption{\it Example of a global map for monotonicity of a bivariate density. Each arrow indicates an decrease in the respective direction.\\[-40pt]} 
\label{g25}
\end{center}
\end{figure}

We now derive an empirical counterpart of  the functional \eqref{test} in terms
of the Radon transform $Rf_\bbeta$. We make the following assumptions for the inversion of the Radon transform.

\begin{assump}\label{ass:radial_symmetric}
Suppose that
the function $\phi$ in \eqref{eq.def_radially_symm}  is $(d+2)$-times
continuously diffe\-rentiable with $\phi'(0)=\phi''(0)=0$.
\end{assump}
\begin{assump}\label{ass:f_beta}
Suppose that the density $f_\bbeta$ is compactly supported, continuously differentiable 
 and bounded from below in the test region by a constant $c_\bbeta>0$
\[
 f_\bbeta(\bb)\geq c_\bbeta \ \text{ for all } \ \bb\in[\mathbf a_1,\mathbf a_2].
\]
\end{assump}

\newenvironment{thmbis}[1]
  {\renewcommand{\theassump}{\ref{#1}$'$}%
   \addtocounter{assump}{-1}%
   \begin{assump}}
  {\end{assump}}

Assumption \ref{ass:f_beta} is too restrictive for quantum homodyne tomography (model \eqref{qht}), where the density $f_\bbeta$ is given by
the Wigner function. The Wigner function can take negative values and is not compactly supported. In this case, we replace Assumption \ref{ass:f_beta} by the following conditions. 
\begin{thmbis}{ass:f_beta}
Suppose that $f_\bbeta$ is continuously differentiable and, for some $\gamma',\ve>0$,
\begin{enumerate}
 \item  $\bb \mapsto \|\bb\|^{d+\ve} \displaystyle|f_\bbeta(\bb)|$ is bounded;
  \item $\displaystyle|f_\bbeta(\bb)-f_\bbeta(\bb')|\lesssim \frac{\|\bb-\bb'\|^{\gamma'}}{(1+\min\{\|\bb\|,\|\bb'\|\})^{d+\gamma'+\ve}}$
 for all $\bb,\bb'\in\R^d$;
 \item  
There exist constants $\delta, c_\bbeta>0$ such that for every hyperplane $P\subset\R^d$ with $P\cap [\mathbf a_1-\delta,\mathbf a_2+\delta] \neq\emptyset$, $ \int_P f_\bbeta(\bb)d\mu_{d-1}(\bb)\geq c_\bbeta.$
\end{enumerate}
\end{thmbis}

Under the assumptions above the inversion formula \eqref{eq.inversion_formula} holds for partial derivatives of the test functions $\partial_{\bv}\phi_{\bt,h}$. This is a direct consequence of 
Theorem 3.8 in \cite{Helgason2011}. The following lemma analyzes the structure of a partial derivative of the test function transformed by the operator $\Lambda$ introduced in \eqref{.3} and how this transform depends on $h$.

\begin{lem}
\label{lem.rad_symm_kernel_explicit}
Work under Assumption \ref{ass:radial_symmetric} and let
\begin{align}
	\widetilde{\phi}(z) := \int_0^\infty r^{d-2} \frac{\partial}{\partial z} \phi\left(\sqrt{z^2+r^2}\right) dr \qquad \mbox{with } z \in \R.
	\label{eq.tilde_phi_n_def}
\end{align}
Then
\begin{align*}
	\Lambda(\partial_{\bv} \phi_{\bt,h})(s,{\btheta}) = \frac{\langle {\btheta} , \bv \rangle}{h^{d+1}}
	(\mathcal{H}_d \widetilde{\phi}^{(d-1)})\Big(\frac{s - \langle \bt, {\btheta} \rangle}{h}\Big),
\end{align*}
where $\Lambda$ and $\phi_{\bt,h}$ are defined in \eqref{.3} and \eqref{test}, respectively.
Moreover,

\begin{tabular}{ll}
(i) & \  $\|\Lambda(\partial_{\bv} \phi_{\bt,h})\|_\infty \lesssim h^{-d-1}$; \\[0.3cm] 
(ii) & \  $\|\Lambda(\partial_{\bv} \phi_{\bt,h})\|_{L^k(\ca{Z})}^k\lesssim h^{-dk-k+1}$ for $k> 1$. \\
\end{tabular}
\end{lem}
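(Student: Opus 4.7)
My plan is to reduce everything to a one-dimensional computation along the $s$-axis by exploiting three structural facts: the intertwining identity $R\partial_j=\theta_j\partial_s R$ for the Radon transform, the radial symmetry of $\phi_{\bt,h}$ around $\bt$, and the translation and scale commutation of $\mathcal{H}_d$.

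\textbf{Derivation of the explicit formula.} I would first recall that for Schwartz $f$,
\[
R(\partial_jf)(s,\btheta)=\theta_j\,\partial_sRf(s,\btheta),
\]
which follows from the Fourier slice identity $\widehat{Rf}(\sigma,\btheta)=\hat f(\sigma\btheta)$; summing against $\bv$ yields $R(\partial_\bv\phi_{\bt,h})=\langle\bv,\btheta\rangle\,\partial_sR\phi_{\bt,h}$. Next I would parametrize the hyperplane $\{\bb:\langle\bb,\btheta\rangle=s\}$ by $\bb=\bt+z\btheta+\mathbf{y}$ with $z=s-\langle\bt,\btheta\rangle$ and $\mathbf{y}\in\btheta^\perp$, switch to polar coordinates $\mathbf{y}=\rho\mathbf{u}$ on $\btheta^\perp$, and use $\|\bb-\bt\|^2=z^2+\rho^2$. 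The factor $\Vol(\mathbb{S}^{d-2})$ produced by the angular integration cancels the normalisation in \eqref{eq.def_radially_symm}, and substituting $r=\rho/h$ yields
\[
R\phi_{\bt,h}(s,\btheta)=\frac{1}{h}\Phi\!\left(\frac{s-\langle\bt,\btheta\rangle}{h}\right),\qquad \Phi(z):=\int_0^\infty r^{d-2}\phi\!\left(\sqrt{z^2+r^2}\right)dr,
\]
with $\Phi'=\widetilde\phi$. Applying $\partial_s^d$ to this expression produces a factor $h^{-(d+1)}\widetilde\phi^{(d-1)}$ evaluated at $(s-\langle\bt,\btheta\rangle)/h$, and a short computation shows that $\mathcal{H}_d$ commutes with translation and with rescaling in $s$. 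Putting these pieces together delivers the claimed identity.

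\textbf{Norm bounds and main obstacle.} For (i) the inequality $|\langle\bv,\btheta\rangle|\leq 1$ and the explicit formula give $\|\Lambda(\partial_\bv\phi_{\bt,h})\|_\infty\leq h^{-(d+1)}\|\mathcal{H}_d\widetilde\phi^{(d-1)}\|_\infty$, so it suffices to bound $\mathcal{H}_d\widetilde\phi^{(d-1)}$ in sup norm. For (ii) the substitution $u=(s-\langle\bt,\btheta\rangle)/h$ in the $s$-integral separates variables and gives
\[
\|\Lambda(\partial_\bv\phi_{\bt,h})\|_{L^k(\mathcal Z)}^k=h^{1-k(d+1)}\int_{\mathbb{S}^{d-1}}|\langle\bv,\btheta\rangle|^k\,d\sigma(\btheta)\cdot\|\mathcal{H}_d\widetilde\phi^{(d-1)}\|_{L^k(\mathbb{R})}^k,
\]
with an exponent matching the claim. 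The one genuinely delicate point is the behaviour of $\mathcal{H}_d$ in even dimensions, where it is the true Hilbert transform and is bounded neither on $L^\infty$ nor on $L^1$. For (i) I would use that the conditions $\phi'(0)=\phi''(0)=0$ together with Assumption \ref{ass:radial_symmetric} make $\phi(\sqrt{z^2+r^2})$ jointly smooth in $(z,r)$ at the origin, so that $\widetilde\phi^{(d-1)}$ inherits enough regularity from the $(d+2)$-fold differentiability of $\phi$ to be compactly supported and H\"older continuous; standard estimates for the Hilbert transform of such functions yield boundedness. For (ii) the restriction $k>1$ in the statement corresponds exactly to the range in which $\mathcal{H}_d:L^k\to L^k$ is bounded (M.\ Riesz), and the $L^k$-norms on the right are finite because $\widetilde\phi^{(d-1)}$ is compactly supported and continuous.
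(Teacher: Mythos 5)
Your route to the explicit formula is sound and essentially equivalent to the paper's: where you invoke the intertwining identity $R(\partial_{\bv}f)=\langle\bv,\btheta\rangle\,\partial_sRf$ and then compute $R\phi_{\bt,h}=h^{-1}\Phi(h^{-1}(s-\langle\bt,\btheta\rangle))$, the paper Radon-transforms $\partial_{\bv}\phi_{\bt,h}$ directly and obtains the factor $\langle\btheta,\bv\rangle$ from the vanishing of $\int_{\mathbf{w}\perp\btheta,\|\mathbf{w}\|=r}\langle\mathbf{w},\bv\rangle\,d\mathbf{w}$; these are the same computation in a different order. The scaling in (i) and (ii), the reduction of (ii) to $\|\mathcal{H}_d\widetilde\phi^{(d-1)}\|_{L^k(\R)}<\infty$, and the appeal to M.~Riesz for even $d$ all match the paper.

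There is, however, a genuine gap at the point you dismiss in one sentence: the claim that $\phi'(0)=\phi''(0)=0$ makes $(z,r)\mapsto\phi(\sqrt{z^2+r^2})$ ``jointly smooth'' at the origin, so that $\widetilde\phi^{(d-1)}$ ``inherits enough regularity,'' is false as stated and hides the technical core of the lemma. Under Assumption \ref{ass:radial_symmetric} one only has $\phi(u)=\phi(0)+\tfrac16\phi'''(0)u^3+O(u^4)$, and $(z^2+r^2)^{3/2}$ is roughly $C^{2,1}$ but not $C^3$ at the origin; yet the identity $\widetilde\phi^{(d-1)}=\Phi^{(d)}$ requires differentiating the integrand $d+1$ times in $z$ (and $d+2$ times to get the Lipschitz continuity of $\widetilde\phi^{(d-1)}$ needed for $\|\mathcal{H}_d\widetilde\phi^{(d-1)}\|_\infty<\infty$ when $d$ is even). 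These individual $z$-derivatives blow up near $(z,r)=(0,0)$; what saves the statement is the integration in $r$ against the weight $r^{d-2}$. The paper's proof spends most of its length exactly here: it expands $\partial_z^{k+1}\phi(\sqrt{z^2+r^2})$ via Fa\`a di Bruno, uses $\phi^{(j)}(\sqrt{z^2+r^2})\lesssim(z^2+r^2)^{(3-j)/2}$ for $j=1,2$ (this is where $\phi'(0)=\phi''(0)=0$ enters), and splits $\int_0^{\sqrt{1-z^2}}=\int_0^{|z|}+\int_{|z|}^{\sqrt{1-z^2}}$ to show each resulting integral is uniformly bounded in $z$; this simultaneously justifies the repeated interchange of $\partial_z$ with $\int_0^\infty\cdot\,dr$, which your argument also uses implicitly when writing $\Phi'=\widetilde\phi$ and $\Phi^{(d)}=\widetilde\phi^{(d-1)}$. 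Without this analysis neither the displayed formula nor the bounds (i)--(ii) are established.
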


For a given triple $(\bt,h,\bv)\in\ca T$ we study the  statistic
\begin{align*}
	T_{\bt,h,\bv}:= \frac{1}{n\sqrt{h}} \sum_{i=1}^n 
	\frac{\langle \bTheta_i, \bv \rangle }{f_{\bTheta}(\bTheta_i)} (\mathcal{H}_d\widetilde \phi^{(d-1)} )\Big( \frac{S_i - \langle \bt, \bTheta_i \rangle}{h} \Big).
\end{align*}
By Lemma \ref{lem.rad_symm_kernel_explicit}, the expectation of this statistic can be written as
\begin{align*}
	\E\big[T_{\bt,h,\bv}\big] = \int_{\mathbb{S}^{d-1}}\int_{\mathbb{R}} h^{d+ 1/2}\Lambda(\partial_{\bv} \phi_{\bt,h}) (s,{\btheta})  f_{S|\bTheta}(s|{\btheta}) ds d{\btheta},
\end{align*}
with $d\btheta$ being the surface measure on $\mathbb{S}^{d-1}$, i.e. $|S_\bTheta|=\int_{ S_\bTheta}\,d\btheta$ 
for any measurable $S_\bTheta\subseteq \mathbb{S}^{d-1}.$ By an application of the inversion formula introduced in \eqref{eq.inversion_formula} and Lemma 5.1 in  \cite{Helgason2011}, we obtain
\begin{align}\label{expectation}
	\E\big[T_{\bt,h,\bv}\big] 
	= -c_dh^{d+ 1/2}\int_{\mathbb{R}^d}\phi_{\bt,h}(\bb) \partial_{\bv} f_{\bbeta}(\bb) d\bb.
\end{align}
Up to rescaling, $T_{\bt,h,\bv}$ is thus an empirical counterpart of the functional defined in \eqref{test}.

The statistic $T_{\bt,h,\bv}$  depends on the density $f_\bTheta.$ In quantum homodyne tomography this density is known. For many other applications, however, $f_\bTheta$ needs to be estimated from the data. In this case, we use a standard cut-off kernel density estimator $\widetilde f_{\bTheta}$ for $f_\bTheta$ based on an additional sample $(S_i,\bTheta_i),$ $i=n+1,\hdots,2n$ which is independent of $(S_i,\bTheta_i),$ $i=1,\hdots,n$. See also Appendix \ref{3.1} for the definition of the estimator. We replace $f_\bTheta$ by its estimator and consider the test statistic
\begin{align}\label{.2}
	\widehat{T}_{\bt,h,\bv}:= \frac{1}{n\sqrt{h}} \sum_{i=1}^n 
	\frac{\langle \bTheta_i, \bv \rangle }{\widetilde f_{\bTheta}(\bTheta_i)} (\mathcal{H}_d\widetilde \phi^{(d-1)} )
	\Big( \frac{S_i - \langle \bt, \bTheta_i \rangle}{h} \Big).
\end{align} 

\subsection{Assumptions on the design}
\label{sec.assump_on_design}

As mentioned in Section \ref{sec:inv_prob}, the inverse problem might become severely ill-posed or non-identifiable if the density $f_\bTheta$ approaches zero for some directions. This section provides conditions on the design which ensure that $f_\bTheta$ has positive H\"older smoothness and is bounded from below and above. These results are of independent interest. 

In the random coefficients model \eqref{eq:model}, the density $f_\bTheta$ can be expressed in terms of the density
$f_\bX$ via $f_\bTheta(\btheta)=\int_{0}^\infty r^{d-1}f_\bX(r\btheta)dr$. To enforce that $f_\bTheta$ is bounded from below we restrict ourselves to designs where $\int_{0}^\infty r^{d-1}f_\bX(r\btheta)dr$ is bounded away from $0$. The formula also allows to relate the smoothness of $f_\bX$ to the smoothness of $f_\bTheta.$

Although, the random coefficients model with intercept \eqref{eq:model_with_int} could be viewed as a special case of the more general model \eqref{eq:model} it requires a different set of assumptions. For model \eqref{eq:model_with_int}, we write 
$f_\bX$ as a function of $\bs x =(x_2, \ldots, x_d) \in \mathbb{R}^{d-1}$ and obtain
\begin{align}
	f_\bTheta(\btheta)=\frac 1{2|\theta_1|^{d}}f_\bX\Big(\frac{\theta_2}{\theta_1}, \ldots, \frac{\theta_d}{\theta_1}\Big),
	\label{eq.rcm_with_i_fthete_fX}
\end{align}
see Appendix \ref{B} for a proof. A necessary condition to ensure that  $\inf_{\btheta} f_\bTheta(\btheta) >0,$ is given by
$f_\bX(\bs x) \gtrsim \|\bs x\|^{-d}$ as $\|\bs x\|\rightarrow \infty$. This corresponds to Cauchy-type tails of 
the design variables. Thinner tails will increase the ill-posedness of the problem. In order to avoid very technical proofs, we consider 
in the random coefficients model with intercept only the case where $(X_{i,2}, \ldots, X_{i,d})$ follows a multivariate Cauchy distribution, i.\,e.,
\begin{align}
	 f_\bX(\bs x)=\frac{\Gamma(d/2)}{\pi^{d/2}|\Sigma|^{1/2}\big(1+(\bs x-\bs \mu)^\top \Sigma^{-1}
 (\bs x-\bs\mu)\big)^{d/2}}\quad \tn{for}\quad\bs x\in\R^{d-1},
 	\label{eq.mult_cauchy}
\end{align}
with $ \bs\mu\in\R^{d-1}$ and $\Sigma \in\R^{(d-1)\times (d-1)}$ a symmetric and positive definite matrix. We can compute $f_\bTheta$ explicitly using \eqref{eq.rcm_with_i_fthete_fX}
\begin{align}
	 f_\bTheta(\btheta)=\frac{\Gamma(d/2)}{2\pi^{d/2}|\Sigma|^{1/2}\big(\theta_1^2+
 ((\mathrm{sgn}(\theta_1)\theta_j-|\theta_1|\mu_j)_{j=2}^d)^\top \Sigma^{-1}
 ((\mathrm{sgn}(\theta_1)\theta_j-|\theta_1|\mu_j)_{j=2}^d)\big)^{d/2}},
 	\label{eq.ftheta_explicit}
\end{align}
where $\mathrm{sgn}(\cdot)$ denotes the signum function.
In this case, $f_\bTheta$ is bounded from above and below and is continuously differentiable on the hemispheres 
$\sd_+:= \{\btheta \in \sd\mid\theta_1 > 0\}$ and $\sd_- := \{\btheta \in \sd\mid\theta_1 < 0\}$. In particular, if
$(X_{i,2}, \ldots, X_{i,d})$ is standard Cauchy, then the density $f_\bTheta$ is constant. This leads to the following assumptions on the design.
  
\begin{assump}\label{p1} In model \eqref{eq:model}, suppose that

\begin{tabular}{ll}
(i) & \  $f_\bX(\bs x)\lesssim \|\bs x\|^{-d-\ve} $ for all $\bs x\in\R^d$ and some $\ve>0$; \\[0.3cm] 
(ii) & \  $\displaystyle\int_{0}^\infty r^{d-1}f_\bX(r\btheta)dr\geq c>0$ for all $\btheta\in\sd$; \\[0.3cm] 
(iii) & \  $\displaystyle |f_\bX(\bs x)-f_\bX(\bs x')|\lesssim\frac{\|\bs x-\bs x'\|^\gamma}{1+\|\bs x\|
^{d+\gamma+\ve}}$ for $\bs x,\bs x'\in\R^d$, $\|\bs x\|=\|\bs x'\|$, and  $\gamma>0.$ \\
\end{tabular}

In model \eqref{eq:model_with_int}, assume that  $f_{\bX}$ is of the form \eqref{eq.mult_cauchy} with $\Sigma$ a symmetric and positive definite matrix.
\end{assump}
In quantum homodyne tomography we set a global $\gamma$ equal to the minimum of $\gamma'$ from Assumption 2' and $\gamma$ from Assumption \ref{p1}.

It is important to notice that statistical testing in the random coefficients model relies on two unrelated sets of assumptions. Firstly, there are assumptions on the density of the random coefficients as introduced in Section \ref{subsec}. Restrictions of this kind are common in statistical inference for an unknown density. On the other hand, there are assumptions (see Assumption \ref{p1} above) on the design. These assumptions control the ill-posedness of the problem.

Note that Assumption \ref{p1} (ii) can be weakened to $\int_{-\infty}^\infty r^{d-1}f_\bX(r\btheta)dr\geq c>0 \quad \text{for all }\btheta\in\sd$. For example, if the support of $\Theta$ is a hemisphere this condition can hold while Assumption \ref{p1} (ii) is violated. This relaxation can be achieved by multiplying independently generated $\zeta_i$ to $\Theta_i$ and $S_i$ as proposed for model \eqref{eq:model_with_int} in Section \ref{sec:inv_prob}.

\subsection{Asymptotic properties}

This section presents the main theoretical result of the paper stating that the standardized and properly calibrated test statistic \eqref{.2} can be uniformly approximated by a maximum of a Gaussian process. For that we need the definition of a Gaussian process on the cylinder $\mathcal{Z}$. To this end, let $\mathcal{B}(\mathcal{Z})$ be the Borel $\sigma$-algebra on $\mathcal{Z}$. Define the $\sigma$-finite measure 
  \begin{align*}
    \nu:
    \begin{cases}
      \mathcal{B}(\mathcal{Z})&\rightarrow\mathbb{R}^+_0,\\
      E &\mapsto\nu(E )=\int_{\mathbb{S}^{d-1}}\int_{\mathbb{R}}\mathds{1}_{E }(\btheta,s)\,ds\,d\btheta.
    \end{cases}
  \end{align*}
Let $\bigl(\mathcal{B}(\mathcal{Z})\bigr)_{\nu}$ denote the collection of all sets of finite $\nu$-measure and let $W$ denote 
Gaussian $\nu$-noise on $\bigl(\mathcal{B}(\mathcal{Z}),\nu\bigr)$. For disjoint sets 
$E_1,E_2 \in\bigl(\mathcal{B}(\mathcal{Z})\bigr)_{\nu}$ this implies 
  \begin{align*}
    W(E_1)\sim\mathcal{N}(0,\nu(E_1)),\quad W(E_1\cup E_2)=W(E_1)+W(E_2)
    \text{ a.s.}\quad\text{and}\quad W(E_1)\,\bot\, W(E_2)
  \end{align*}
 \cite[][ Chapter 1.4.3]{Adler2007}. $W$ is a random, finitely additive, signed measure. Integration w.r.t. $W$ can be defined similarly to Lebesgue-integration, starting with a definition for simple functions and an extension to general $f\in L^2(\nu)$ via approximation by simple functions in the $L^2$-limit. Integration with respect to $W$  yields
   \begin{align*}
     \int_{E }\,W(dsd\btheta)=W(E )\sim\mathcal{N}(0,\nu(E ))\quad\text{for}\quad E\in\bigl(\mathcal{B}(\mathcal{Z})\bigr)_{\nu},
   \end{align*}
   \begin{align*}
     W(f):=\int_{\mathbb{S}^{d-1}}\int_{\mathbb{R}}f(s,\btheta)\,W(dsd\btheta)\sim\mathcal{N}(0,\|f\|_{L^2(\nu)})\quad\text{for}\quad f\in L^2(\nu),
   \end{align*}
and $\mathrm{Cov}(W(f),W(g))=\langle W(f),W(g)\rangle_{L^2(\mathbb{P})}=\langle f,g\rangle_{L^2(\nu)}$ for $f,g\in L^2(\nu),$ where $L^k(\mathbb{P})$ denotes the collection of all random variables whose first $k$ absolute moments exist.   For more details, cf.  \cite{Adler2007}, Chapter 5.2.

Let us provide some heuristic for the Gaussian approximation of $T_{\bt,h,\bv}$. The process $(\bt,h,\bv) \mapsto \sqrt{n}(T_{\bt,h,\bv}-\E[T_{\bt,h,\bv}])$ has in the important case $\E[T_{\bt,h,\bv}]=0$ the same mean and covariance structure as the Gaussian process
\begin{align}\label{Zth}
	X_{\bt,h,\bv} = h^{-1/2}\int_{\mathbb{S}^{d-1}}\int_{\mathbb{R}}\langle \btheta, \bv\rangle 
	(\mathcal{H}_d\widetilde \phi^{(d-1)})\Big(\frac{s-\langle \bt, \btheta \rangle}{h}\Big) 
	\frac{\sqrt{f_{S,\bTheta}(s,{\btheta})}}{{f_{\bTheta}({\btheta})}} W(dsd\btheta).
\end{align}
In the proof of Theorem \ref{thm.main} below we show that the expectation $\E[T_{\bt,h,\bv}]$ is asymptotically negligible in the limit process. 
The test statistic and the Gaussian process depend, however, on the unknown densities $f_{S,\bTheta}$ 
and $f_{\bTheta}$ which have to be estimated from the second part of the sample. To this end, we use the standard cut-off kernel density estimates $ \widetilde f_{\bTheta}$ and $\widetilde 
f_{S,\bTheta}$  defined in Appendix \ref{3.1}. For Gaussian $\nu$-noise 
$W$ that is independent of the data let
\begin{align*}
\widehat X_{\bt,h,\bv} :=  h^{-1/2}\int_{\mathbb{S}^{d-1}}\int_{\mathbb{R}}  \langle {\btheta} , \bv \rangle(\mathcal{H}_d
\widetilde \phi^{(d-1)} )\Big(\frac{s - \langle \bt, {\btheta} \rangle}{h}\Big) \frac{\sqrt{\widetilde 
f_{S,\bTheta}(s  ,\btheta)}}{\widetilde f_{\bTheta}(\btheta)} W(dsd\btheta)
\end{align*}
and
\begin{align}
	\widehat\sigma_{\bt,h,\bv} :=  \Big( \int_{\mathbb{S}^{d-1}}\int_{\mathbb{R}} \langle \btheta, \bv \rangle^2
	\big((\mathcal{H}_d\widetilde \phi^{(d-1)})(s) \big)^2 \frac{\widetilde f_{S,\bTheta}(\langle \bt, \btheta \rangle+hs,
	\btheta)}{\widetilde f_{\bTheta}(\btheta)^2} dsd\btheta  \Big)^{1/2}.
	\label{eq.sigma_th_def}
\end{align}
The Gaussian approximation result for the family of test statistics $\widehat{T}_{\bt,h,\bv}$ 
holds for a finite subset $\mathcal{T}_{n} \subset \mathcal{T}$. Its cardinality may, however, grow polynomially
of arbitrary degree with the sample size. Moreover, the range of bandwidths must be bounded from above and
below by $h_{\max}$ and $h_{\min}$, both converging to zero as $n$ goes to infinity. The precise conditions are summarized in the following assumption.
\begin{assump}\label{1.20}
When working in model \eqref{eq:model}, let $\gamma$ be as defined in Assumption \ref{p1}. When considering model \eqref{eq:model_with_int}, let $\gamma=1.$ Set 
$h_{\min} :=  \min \{h : (\bt,h,\bv) \in \mathcal{T}_{n}\}$ and $h_{\max} :=  \max \{h : (\bt,h,\bv) \in 
\mathcal{T}_{n}\}.$ Suppose that 
  $|\ca{T}_n|=p\lesssim n^{L}$ for some $L>0$ and  $h_{\tn{max}}\lesssim \log(n)^{-14\gamma/(d-1)-5}n^{2\gamma/(d-1)-1}\wedge o(1)$,
  $h_{\min} \gtrsim n^{-1+\ve}$ for some $\ve>0$.
\end{assump}

Let $\ca{A}_p$ be the set of half-open hyperrectangles in $\R^p$, i.e. every $A\in\ca{A}_p$ has the representation $A=\{\bs x\in\R^p:-\infty< \bs x\leq\bs a\}$ for some $\bs a\in\R^p$. For finite sets $S_n$ and two stochastic processes $(X_{s,n})_{s\in S_n}$ and $(\widetilde X_{s,n})_{s\in S_n},$ which are defined on the same probability space, we write
\begin{align*}
	(X_{s,n})_{s\in S_n} \leftrightarrow (\widetilde X_{s,n})_{s\in S_n}
\end{align*}
if $\lim_n \sup_{A \in \ca{A}_{|S_n|}} | \P( (X_{s,n})_{s\in S_n} \in A ) - \P( (\widetilde X_{s,n})_{s\in S_n} \in A )| = 0.$
\begin{thm}
\label{thm.main} For the calibration of the standardized statistic define 
  \begin{align*}
    \alpha_h:= \sqrt{(3d-1)\log(1/h)}\quad\text{and}\quad\beta_h:=\frac{\sqrt{\log(e/h)}}{\log(\log(e^e/h))}.
  \end{align*}
Then, under Assumptions \ref{ass:radial_symmetric}-\ref{1.20},
\begin{align*}
\Big( \beta_h\Bigl(\sqrt{n} \frac{|\widehat T_{\bt,h,\bv}-\E[T_{\bt,h,\bv}]|}
{\widehat{\sigma}_{\bt,h,\bv}} - \alpha_h\Bigr)\Big)_{(\bt,h,\bv)\in \ca{T}_n}
\leftrightarrow \Big(\beta_h\Bigl( \frac{|\widehat X_{\bt,h,\bv}|}{\widehat{\sigma}_{\bt,h,\bv}} - \alpha_h\Bigr)\Big)
_{(\bt,h,\bv)\in \ca{T}_n}.
\end{align*}
Furthermore, almost surely conditionally on the estimators $\widetilde f_{S,\bTheta}$ and $\widetilde f_\bTheta,$ the limit distribution is bounded in probability by a constant that is independent of the sample size $n$.
\end{thm}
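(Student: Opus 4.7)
My plan is to pass from the test statistic $\widehat T_{\bt,h,\bv}$ to a Gaussian process in three conceptual moves --- replace the plug-in density by its population counterpart, apply the high-dimensional CLT for hyperrectangles to the resulting empirical process, then plug the estimators back into the limiting Gaussian process --- and finally check that the bias and the calibration are compatible. Concretely, since the cut-off estimator $\widetilde f_\bTheta$ is built from the independent auxiliary sample $(S_i,\bTheta_i)_{i=n+1}^{2n}$, I first work conditionally on that sample. Uniform consistency of the kernel estimator together with the uniform lower bound on $f_\bTheta$ guaranteed by Assumption \ref{p1} (and by the explicit formula \eqref{eq.ftheta_explicit} in the random-intercept case) implies $\sup_\btheta|\widetilde f_\bTheta(\btheta)-f_\bTheta(\btheta)|=O_P(n^{-\kappa})$ for some $\kappa>0$. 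A Bernstein-type bound using $\|\psi_{\bt,h,\bv}\|_\infty\lesssim h_{\min}^{-1/2}$ and uniformly bounded second moment, where $\psi_{\bt,h,\bv}$ denotes the summand of $T_{\bt,h,\bv}$, then shows that the relative perturbation $(\widehat T_{\bt,h,\bv}-T_{\bt,h,\bv})/\widehat\sigma_{\bt,h,\bv}$ is negligible on the $\beta_h(\cdot-\alpha_h)$ scale uniformly over $\ca T_n$, so it suffices to couple the centered version of $T_{\bt,h,\bv}$ to $X_{\bt,h,\bv}$.

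For this coupling I invoke the Chernozhukov--Chetverikov--Kato high-dimensional CLT for hyperrectangles \cite{Chernozhukov2017}, applied to the empirical process indexed by $\mathcal{F}_n=\{\psi_{\bt,h,\bv}:(\bt,h,\bv)\in\ca T_n\}$ with envelope of order $h_{\min}^{-1/2}$, uniformly bounded variances and, thanks to Lemma \ref{lem.rad_symm_kernel_explicit} and Assumption \ref{ass:radial_symmetric}, polynomial VC-type covering numbers. The cardinality growth $p\lesssim n^L$ and the lower bandwidth bound $h_{\min}\gtrsim n^{-1+\ve}$ from Assumption \ref{1.20} ensure that the sufficient conditions of \cite{Chernozhukov2017} are met, yielding coupling to $X_{\bt,h,\bv}$ at rectangle distance $o(1)$. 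The plug-in $X\to\widehat X$ is handled by a Gaussian comparison bound (conditional on the auxiliary sample both processes are centered Gaussian): the difference of covariances is proportional to $\widetilde f_{S,\bTheta}/\widetilde f_\bTheta - f_{S,\bTheta}/f_\bTheta$, uniformly controlled at rate $o((\log p)^{-2})$ by the kernel-density rates, which suffices for the comparison bound and simultaneously justifies substituting $\widehat\sigma_{\bt,h,\bv}$ for its population analogue. The identity \eqref{expectation} together with Assumption \ref{ass:f_beta} yields $|\E T_{\bt,h,\bv}|\lesssim h^{d+1/2}$, so $\sqrt n\,\E T_{\bt,h,\bv}=O(\sqrt n\,h_{\max}^{d+1/2})$, which is absorbed after calibration precisely under the $h_{\max}$ bound in Assumption \ref{1.20}, making the subtraction of $\E[T_{\bt,h,\bv}]$ on the left-hand side harmless in the limit. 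The second assertion --- boundedness in probability of $\sup_{\ca T_n}\beta_h(|\widehat X_{\bt,h,\bv}|/\widehat\sigma_{\bt,h,\bv}-\alpha_h)$ conditional on the auxiliary sample --- follows from D\"umbgen--Spokoiny multiscale tightness for Gaussian suprema, since the calibration $(\alpha_h,\beta_h)$ is designed precisely to offset the $\log(1/h)$ inflation arising from the scale index.

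I expect the principal obstacle to lie in the Gaussian approximation step: the envelope of $\mathcal F_n$ blows up like $h_{\min}^{-1/2}$ while the cardinality $p$ grows polynomially in $n$, so the CCK sufficient conditions require delicate verification that exploits the strip-localization of $\psi_{\bt,h,\bv}$ (which is essentially supported where $|s-\langle\bt,\btheta\rangle|\lesssim h$, by Lemma \ref{lem.rad_symm_kernel_explicit}). A secondary difficulty is the plug-in step, where the Gaussian comparison incurs a $\log(p)$-factor loss and therefore requires the uniform rates for both $\widetilde f_\bTheta$ and $\widetilde f_{S,\bTheta}$ --- together with the H\"older smoothness derived under Assumption \ref{p1} --- to be sharp enough. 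The different roles played by the two halves of the sample, and the need to preserve independence when passing to the Gaussian process, must be tracked carefully throughout.
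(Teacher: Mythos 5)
Your proposal follows essentially the same route as the paper's proof: a Bernstein argument conditional on the auxiliary sample to neutralize the plug-in of $\widetilde f_\bTheta$ into the test statistic, the Chernozhukov--Chetverikov--Kato hyperrectangle CLT (which the paper applies directly to the finite $2p$-dimensional vector obtained by doubling $\ca T_n$ over $\pm\bv$ to capture absolute values, so no covering-number argument is needed at that step), a Gaussian comparison conditional on the second sample to reinstate the estimated densities and the estimated standard deviation in the limit process, and D\"umbgen--Spokoiny multiscale tightness for the boundedness claim. The one point you gloss over is that the white-noise integral $X_{\bt,h,\bv}$ matches the covariance of the \emph{centered} empirical sum only up to a term driven by $\E[\Upsilon_j]$, which the paper handles via the coupling $\widetilde W(f)=W(f\sqrt{f_{S,\bTheta}})-\int f\, f_{S,\bTheta}\;W(\sqrt{f_{S,\bTheta}})$ together with the bound $|\E[\Upsilon_j]|\lesssim \log(h_j)^2\sqrt{h_j}$; your estimate of the expectation is the right ingredient but is attributed to the wrong purpose, since the $h_{\max}$ restriction in Assumption \ref{1.20} is in fact dictated by the bias of $1/\widetilde f_\bTheta$ in the term $V_{\bt,h,\bv}$ rather than by absorbing $\sqrt n\,\E[T_{\bt,h,\bv}]$, which is explicitly subtracted in the statement.
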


\subsection{Construction of the multiscale test}
With the previous theorem, we can now construct simultaneous statistical tests for the hypotheses \eqref{t5neu} and \eqref{t5neu1}. If the constant $c_d$ is positive then the method consists of rejecting the hypotheses $H_{0,+}^{\bt,h,\bv}$
in \eqref{t5neu} for  small  values of  $\widehat{T}_{\bt,h,\bv}$ and rejecting $H_{0,-}^{\bt,h,\bv}$ in \eqref{t5neu1}
for  large  values of $\widehat{T}_{\bt,h,\bv}$, and vice versa if $c_d$ is negative. 
Theorem \ref{thm.main} is used to control the multiple level of the tests.
Let $\alpha\in(0,1)$ and denote by $\kappa_n(\alpha)$ the smallest number such that
\[
 \P\left(\sup_{(\bt,h,\bv)\in\ca T_n}\beta_h\bigg( \frac{|\widehat X_{\bt,h,\bv}|}{\widehat{\sigma}_{\bt,h,\bv}} - \alpha_h\bigg)
\leq\kappa_n(\alpha)\right)\geq 1-\alpha.
\]
By Theorem \ref{thm.main}, $\kappa_n(\alpha)$ is bounded uniformly with respect to $n$. Define
 for $(\bt,h,\bv)\in\ca T_n$ the quantiles
\beq{t11j}
 \kappa^{\bt,h,\bv}_n(\alpha)=\frac{\widehat{\sigma}_{\bt,h,\bv}}{\sqrt{n}}\big(\beta_h^{-1}\kappa_n(\alpha)
+\alpha_h\big)
\eeq
and reject the  hypothesis  \eqref{t5neu}, if
\beq{t8}\mathrm{sgn}(c_d)\widehat T_{\bt,h,\bv} <-\kappa^{\bt,h,\bv}_n(\alpha).\eeq 
Similarly, the  hypothesis  \eqref{t5neu1} is rejected, whenever
\beq{t81}\mathrm{sgn}(c_d)\widehat T_{\bt,h,\bv} >\kappa^{\bt,h,\bv}_n(\alpha).\eeq
\begin{thm}\label{t10}
Let the assumptions of Theorem \ref{thm.main} hold and assume that the tests  \eqref{t8} and \eqref{t81} for the hypotheses \eqref{t5neu} and \eqref{t5neu1} are performed
simultaneously for all $(\bt,h,\bv)\in\ca T_n$.
The  probability of at least one false rejection of any of the tests is asymptotically at most $\alpha$, i.e.
\begin{align*}
\P\Big(\exists  (\bt,h,\bv)\in\ca T_n:|\widehat T_{\bt,h,\bv}|>\kappa^{\bt,h,\bv}_n(\alpha)\Big)\leq\alpha+o(1)
\end{align*}
for $n\rightarrow \infty$.
\end{thm}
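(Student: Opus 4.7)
The plan is to reduce the family-wise error probability to a Kolmogorov-distance statement about the supremum of the standardized statistic, so that Theorem \ref{thm.main} and the definition of $\kappa_n(\alpha)$ together deliver the conclusion. The basic identity is
\[
|\widehat T_{\bt,h,\bv}-\E[T_{\bt,h,\bv}]|>\kappa^{\bt,h,\bv}_n(\alpha)\iff \beta_h\!\left(\frac{\sqrt n|\widehat T_{\bt,h,\bv}-\E[T_{\bt,h,\bv}]|}{\widehat{\sigma}_{\bt,h,\bv}}-\alpha_h\right)>\kappa_n(\alpha),
\]
which follows immediately from the definition \eqref{t11j} of the local quantile, so the simultaneous event is a level set of the multiscale maximum.

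First I would show that every false rejection event is contained in the event $\{|\widehat T_{\bt,h,\bv}-\E[T_{\bt,h,\bv}]|>\kappa^{\bt,h,\bv}_n(\alpha)\}$. From \eqref{expectation}, under $H_{0,+}^{\bt,h,\bv}$ we have $\mathrm{sgn}(c_d)\E[T_{\bt,h,\bv}]\geq 0$, so a rejection by \eqref{t8} forces $\mathrm{sgn}(c_d)\bigl(\widehat T_{\bt,h,\bv}-\E[T_{\bt,h,\bv}]\bigr)<-\kappa^{\bt,h,\bv}_n(\alpha)$, and under $H_{0,-}^{\bt,h,\bv}$ a rejection by \eqref{t81} analogously forces $\mathrm{sgn}(c_d)\bigl(\widehat T_{\bt,h,\bv}-\E[T_{\bt,h,\bv}]\bigr)>\kappa^{\bt,h,\bv}_n(\alpha)$; in either case the absolute centered deviation exceeds the local quantile. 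Taking a union bound across $\mathcal T_n$ therefore bounds the family-wise error by
\[
\P\!\left(\sup_{(\bt,h,\bv)\in\mathcal T_n}\beta_h\!\left(\frac{\sqrt n|\widehat T_{\bt,h,\bv}-\E[T_{\bt,h,\bv}]|}{\widehat{\sigma}_{\bt,h,\bv}}-\alpha_h\right)>\kappa_n(\alpha)\right).
\]

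Next I would invoke Theorem \ref{thm.main}. The event $\{\sup_i x_i\leq\kappa_n(\alpha)\}$ coincides with the half-open hyperrectangle $\{x\in\mathbb R^p:x_i\leq\kappa_n(\alpha)\text{ for all }i\}\in\mathcal A_p$, so the approximation $\leftrightarrow$ gives
\[
\P\!\left(\sup_{(\bt,h,\bv)\in\mathcal T_n}\beta_h\!\left(\frac{\sqrt n|\widehat T_{\bt,h,\bv}-\E[T_{\bt,h,\bv}]|}{\widehat\sigma_{\bt,h,\bv}}-\alpha_h\right)>\kappa_n(\alpha)\right)=\P\!\left(\sup_{(\bt,h,\bv)\in\mathcal T_n}\beta_h\!\left(\frac{|\widehat X_{\bt,h,\bv}|}{\widehat\sigma_{\bt,h,\bv}}-\alpha_h\right)>\kappa_n(\alpha)\right)+o(1).
\]
By the definition of $\kappa_n(\alpha)$ the right-hand side is at most $\alpha+o(1)$; combining with the previous inclusion yields the claimed bound on the probability of at least one false rejection.

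The main substantive point is the first step. Without centering the data statistic by its expectation the bias $\E[T_{\bt,h,\bv}]$ need not be negligible on the small scales $h_{\min}$ allowed by Assumption \ref{1.20}, so one cannot apply Theorem \ref{thm.main} directly to $|\widehat T_{\bt,h,\bv}|$; the reduction via the sign of $\mathrm{sgn}(c_d)\E[T_{\bt,h,\bv}]$ dictated by the null hypotheses is what makes Theorem \ref{thm.main} applicable. The rest of the argument is then a routine use of the Kolmogorov-distance approximation together with the fact that $\kappa_n(\alpha)$ is bounded in $n$, as guaranteed by the last sentence of Theorem \ref{thm.main}.
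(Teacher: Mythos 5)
Your proof is correct and follows essentially the same route as the paper: rewrite the family-wise error as a tail probability of the multiscale supremum via the definition of $\kappa^{\bt,h,\bv}_n(\alpha)$ in \eqref{t11j}, transfer to the Gaussian process $\widehat X_{\bt,h,\bv}$ using Theorem \ref{thm.main} (the sup-event being a half-open hyperrectangle), and conclude from the definition of $\kappa_n(\alpha)$ together with its uniform boundedness. Your first step --- using \eqref{expectation} and the sign of $\E[T_{\bt,h,\bv}]$ under each null to pass from a false rejection to the event $|\widehat T_{\bt,h,\bv}-\E[T_{\bt,h,\bv}]|>\kappa^{\bt,h,\bv}_n(\alpha)$ --- is in fact more careful than the paper's own proof, which applies Theorem \ref{thm.main} directly to the uncentered statistic $|\widehat T_{\bt,h,\bv}|$ even though that theorem is stated for the centered one; your centering argument is exactly what is needed to make that step rigorous.
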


Based on the previous result, we now propose a method for the detection and localization of modes on a subdomain. For convenience, we only consider the case of a hyperrectangle $[\mathbf a_1,\mathbf a_2] \subset \mathbb{R}^d$. We study the case that there is a mode $\bb_0$ in the interior $(\mathbf a_1,\mathbf a_2).$ For the multiscale test to have power we need that the set of local tests is rich enough. This can be expressed in terms of conditions on $\ca{T}_n.$ Let $H(\ca{T}_n)$ be the set of all scales/bandwidths such that for every scale $h$ in this set all triples $(\bt,h,\bv)$ are in $\ca{T}_n,$ where $\bt$ ranges over all grid points of an equidistant grid with component wise mesh size $h$ in the hyperrectangle $[\mathbf a_1+h,\mathbf a_2-h]$, and $\bv$ ranges over all grid points of a grid of $S^{d-1}$ and grid width
converging to zero with increasing sample size.

The testing procedure is as follows. For any $\bb_0$ in $(\mathbf a_1,\mathbf a_2),$ let $\ca{T}_n^{\bb_0}$ be the set of all sequences of triples  $(\bt_n,h_n,\bv_n)_n \in\ca{T}_n$ such that $ch_n\geq\|\bb_0-\bt_n\|\geq2h_n$ for some sufficiently large $c>2$ and $\angle (\bt_n-\bb_0,\bv_n) \rightarrow 0$ as $n\rightarrow\infty$, where $\angle$ denotes the angle between two vectors. The previous conditions ensure that several such sequences can be found. If for all  triples in $\ca{T}_n^{\bb_0}$ all local tests \eqref{t81} reject the hypotheses \eqref{t5neu2}, we have evidence for the existence of a mode at the point $\bb_0$.
By choosing the test locations as the vertices of an equidistant grid
no prior knowledge about the location of $\bb_0$ has to be assumed. Theorem \ref{.1} below  states that the procedure
detects all modes of the density
with probability converging to one as $n\rightarrow\infty$.

\begin{thm}\label{.1}
Assume the conditions of Theorem \ref{thm.main} and suppose that for any mode $\bb_0$ in $(\mathbf a_1,\mathbf a_2)$ there are functions $g_{\bb_0}:\mathbb{R}^d\rightarrow\mathbb{R}$,  
$r_{\bb_0}:\mathbb{R}\rightarrow\mathbb{R}$ 
such that  the density $f_\bbeta$
 has a representation of the form
\beq {frep}
 f_\bbeta(\bb)\equiv (1+g_{\bb_0}(\bb))r_{\bb_0}(\|\bb-\bb_0\|)
\eeq
in an open neighborhood of $\bb_0$.
Furthermore, let $g_{\bb_0}$ be differentiable in an open neighborhood of $\bb_0$ with $g_{\bb_0}(\bb)=o(1)$ and
$\langle\nabla g_{\bb_0}(\bb),\mathbf{e}\rangle=o(\|\bb-\bb_0\|)$ when $\bb\rightarrow \bb_0$  for all
$\mathbf{e}\in\mathbb{R}^d$ with $\|\mathbf{e}\|=1$. In addition, let $\widetilde{f}_{\bb_0}$  be differentiable in an open
neighborhood
of zero with $\widetilde{f}_{\bb_0}'(h)\leq-ch(1+o(1))$ for  $h\rightarrow0$.

If $\min \{h: h\in H(\ca T_n)\}\geq C\log(n)^{1/(2d+3)}n^{-1/(2d+3)}$ is nonempty for some sufficiently large constant $C>0,$ then the procedure described in the previous paragraph detects
 the mode  $\bb_0$  with probability converging to one as $n\rightarrow\infty.$
\end{thm}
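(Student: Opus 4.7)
The plan is to combine a deterministic lower bound on the signal $\mathrm{sgn}(c_d)\mathbb{E}[T_{\bt,h,\bv}]$ with the uniform stochastic approximation of Theorem~\ref{thm.main}, and to verify that for every triple $(\bt_n,h_n,\bv_n)\in\ca T_n^{\bb_0}$ the signal eventually dominates $\kappa_n^{\bt,h,\bv}(\alpha)$ plus the random fluctuation of $\widehat T - \mathbb{E}[T]$. By \eqref{t81} the local test rejects $H_{0,-}^{\bt,h,\bv}$ whenever $\mathrm{sgn}(c_d)\widehat T_{\bt,h,\bv}>\kappa_n^{\bt,h,\bv}(\alpha)$, so it suffices to establish
\[
\mathrm{sgn}(c_d)\mathbb{E}[T_{\bt_n,h_n,\bv_n}]\;\geq\;2\,\frac{\widehat\sigma_{\bt_n,h_n,\bv_n}}{\sqrt n}\bigl(\alpha_{h_n}+\beta_{h_n}^{-1}(\kappa_n(\alpha)+C)\bigr)
\]
for a deterministic $C$, uniformly over triples, and then invoke the uniform stochastic control.

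The signal lower bound starts from \eqref{expectation}. Differentiating the representation $f_\bbeta(\bb)=(1+g_{\bb_0}(\bb))r_{\bb_0}(\|\bb-\bb_0\|)$ in direction $\bv_n$ gives
\[
\partial_{\bv_n}f_\bbeta(\bb)=\langle\nabla g_{\bb_0}(\bb),\bv_n\rangle\, r_{\bb_0}(\|\bb-\bb_0\|)+(1+g_{\bb_0}(\bb))\,r_{\bb_0}'(\|\bb-\bb_0\|)\,\frac{\langle\bb-\bb_0,\bv_n\rangle}{\|\bb-\bb_0\|}.
\]
For $\bb\in B_{h_n}(\bt_n)$, the conditions $2h_n\leq\|\bt_n-\bb_0\|\leq c h_n$ yield $\|\bb-\bb_0\|\in[h_n,(c+1)h_n]$ and the elementary estimate
\[
\frac{\langle\bb-\bb_0,\bt_n-\bb_0\rangle}{\|\bb-\bb_0\|\,\|\bt_n-\bb_0\|}\;\geq\;\frac{\|\bt_n-\bb_0\|-\|\bb-\bt_n\|}{\|\bt_n-\bb_0\|+\|\bb-\bt_n\|}\;\geq\;\frac13.
\]
Together with $\angle(\bt_n-\bb_0,\bv_n)\to 0$ this yields $\langle(\bb-\bb_0)/\|\bb-\bb_0\|,\bv_n\rangle\geq 1/4$ for $n$ large. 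Plugging in the postulated expansion $\widetilde f_{\bb_0}'(r)\leq -cr(1+o(1))$ on $r\in[h_n,(c+1)h_n]$ together with the smallness estimates $g_{\bb_0}(\bb)=o(1)$ and $\langle\nabla g_{\bb_0}(\bb),\bv_n\rangle=o(h_n)$, the second term dominates the first and we obtain $\partial_{\bv_n}f_\bbeta(\bb)\leq-c_1 h_n(1+o(1))$ uniformly on $B_{h_n}(\bt_n)$. Since $\int\phi_{\bt_n,h_n}\,d\bb$ equals a strictly positive constant depending only on $d$ and $\phi$, equation~\eqref{expectation} yields
\[
\mathrm{sgn}(c_d)\mathbb{E}[T_{\bt_n,h_n,\bv_n}]\;\geq\;c_2\, h_n^{d+3/2},
\]
for a constant $c_2>0$ independent of the choice of triple in $\ca T_n^{\bb_0}$.

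On the stochastic side, Theorem~\ref{thm.main} together with the boundedness in probability of the limiting Gaussian supremum implies that, on an event of probability tending to one and uniformly in $(\bt,h,\bv)\in\ca T_n$,
\[
\sqrt n\,\bigl|\widehat T_{\bt,h,\bv}-\mathbb{E}[T_{\bt,h,\bv}]\bigr|\big/\widehat\sigma_{\bt,h,\bv}\;\leq\;\alpha_h+C/\beta_h,
\]
while Assumption~\ref{p1} and Lemma~\ref{lem.rad_symm_kernel_explicit} give $\widehat\sigma_{\bt,h,\bv}=O(1)$ (and bounded below, using $f_\bbeta\geq c_\bbeta$ on the test region). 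Since $\alpha_h\asymp\sqrt{\log(1/h)}$ and $\beta_h^{-1}$ is of strictly smaller order, the sufficient condition from the first paragraph reduces to $h_n^{d+3/2}\gtrsim\sqrt{\log(1/h_n)/n}$, i.e.\ $h_n^{2d+3}\gtrsim\log(1/h_n)/n$. This is implied by the hypothesis $\min\{h\in H(\ca T_n)\}\geq C\log(n)^{1/(2d+3)}n^{-1/(2d+3)}$ for $C$ large enough, so every triple in $\ca T_n^{\bb_0}$ produces a rejection of \eqref{t5neu2} with probability tending to one, and $\bb_0$ is detected.

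The delicate part is the uniform signal lower bound: one has to combine three separate smallness mechanisms---the angular tolerance $\angle(\bt_n-\bb_0,\bv_n)\to 0$, the estimates on $g_{\bb_0}$ and $\nabla g_{\bb_0}$ near $\bb_0$, and the asymptotic $\widetilde f_{\bb_0}'(h)\sim-ch$---into a single constant $c_2$ that remains bounded away from zero for all triples in $\ca T_n^{\bb_0}$ as $h_n\to 0$. The stochastic half is essentially a direct consequence of Theorem~\ref{thm.main}, because the calibration~\eqref{t11j} is tailored precisely to the rate $\sqrt{\log(1/h)/n}$ that emerges from the variance bound in Lemma~\ref{lem.rad_symm_kernel_explicit}.
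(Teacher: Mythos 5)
Your proposal is correct and follows essentially the same route as the paper: a deterministic signal bound $\partial_{\bv_n}f_\bbeta\lesssim -h_n$ on $B_{h_n}(\bt_n)$ (which the paper delegates to the proof of Theorem 10 in \cite{Eckle2017} and you derive explicitly from the representation \eqref{frep} and the geometric constraints on $\ca T_n^{\bb_0}$), combined via \eqref{expectation} with the uniform stochastic control from Theorem \ref{thm.main} and the bound $\kappa_n^{\bt,h,\bv}(\alpha)\lesssim\sqrt{\log n/n}$, yielding the same condition $h^{d+3/2}\gtrsim\sqrt{\log n/n}$. The only cosmetic difference is that the paper lets the level $\alpha_n\to 0$ slowly to turn ``probability at least $1-\varepsilon$'' into ``probability converging to one,'' whereas you work with a fixed constant $C$; replacing your $C$ by a slowly diverging sequence closes this immediately.
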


The rate for the localization of the modes is $n^{-1/(2d+3)}$ (up to some logarithmic factor). Since $2d+3=d+2\tfrac{d-1}2+4,$ this rate matches the optimal rate for mode detection in an inverse problem with ill-posedness of degree $(d-1)/2$ over a 2-H\"older class. Assumption \ref{1.20} requires, however, $h_{\tn{max}}\lesssim \log(n)^{-14\gamma/(d-1)-5}n^{2\gamma/(d-1)-1}.$ To be able to include scales of the order $n^{-1/(2d+3)}$ we need $\gamma>(d^2-1)/(2d+3).$ The right hand side is smaller than one  for $d=2,3.$

\section{Finite sample properties}\label{Sec4}
\def\theequation{4.\arabic{equation}}
\setcounter{equation}{0}
In this section we illustrate the finite sample properties of the proposed test in a bivariate and a trivariate
setting. In the bivariate setting we illustrate  how simultaneous tests for the
hypotheses \eqref{t5neu} and \eqref{t5neu1} can be used to obtain a graphical representation of the local 
monotonicity properties of the density. In the trivariate setting we investigate
the performance of the  test for modality at a given point
$\bb_0$ (see the hypotheses in \eqref{t5neu2}) and the dependence of its power on the distribution of $\bX$. 

As test function we consider the simplest polynomial which satisfies the conditions of Assumption \ref{ass:radial_symmetric} for $d=2,3$, that is,
\[
 \phi(x)=c(56x^3+21x^2+6x+1)(1-x)^6\mathbbm{1}\{x\leq 1\},\quad x\in[0,\infty),
\]
with $c$ such that $\int \phi =1.$  Figure \ref{phi}
displays the function $\ca H_d\widetilde\phi^{(d-1)}$ for $d=2,3.$ Throughout this section the nominal level is fixed as $\alpha=0.05$, and all level and power statements are in percent. Except for Table \ref{ta3}, none of the simulations in this section assumes knowledge of the design density $f_\bTheta$ or uses a parametric specification of it.
\begin{figure}[ht]
\begin{center}
\includegraphics[width=0.3\textwidth]{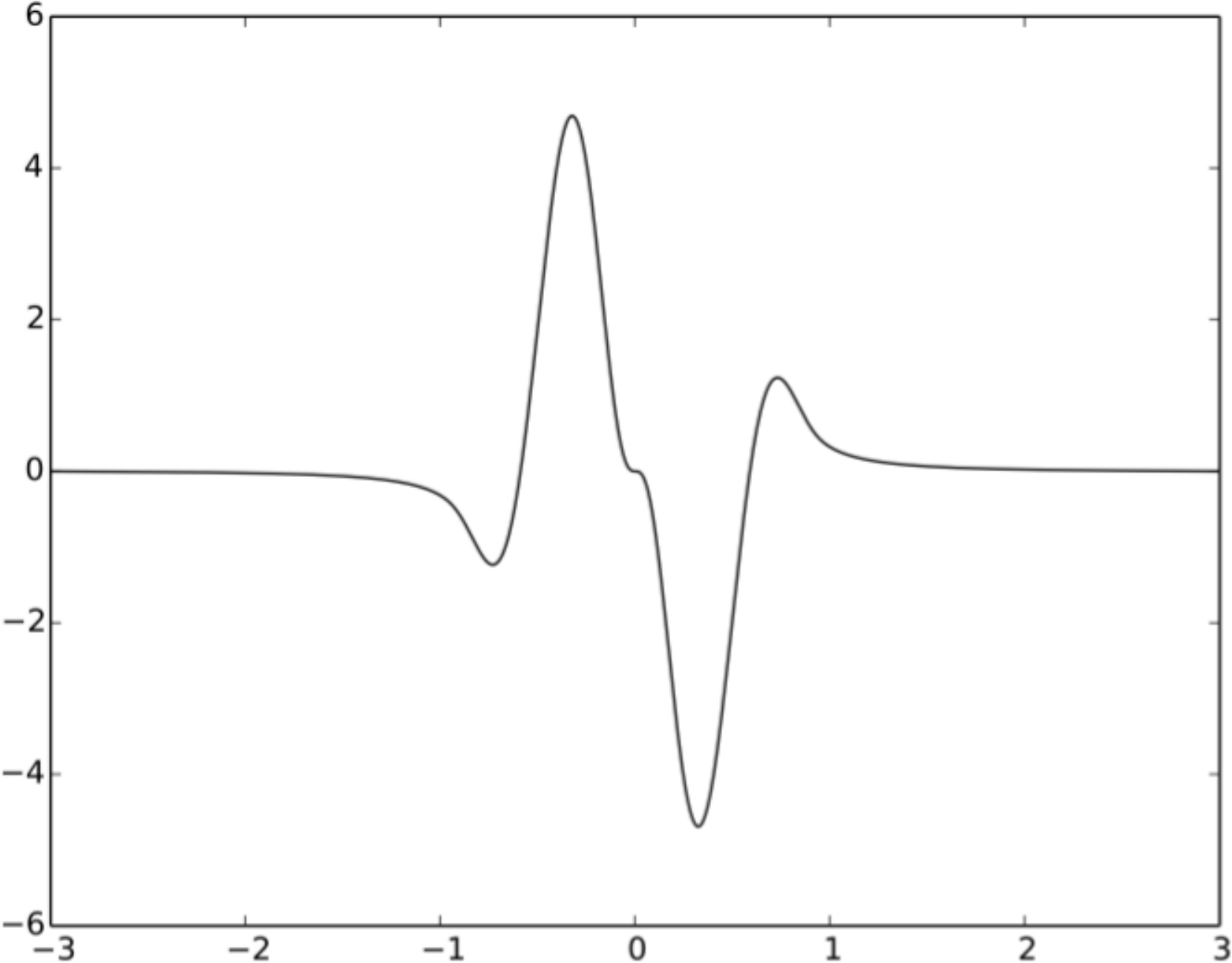}\qquad
\includegraphics[width=0.3\textwidth]{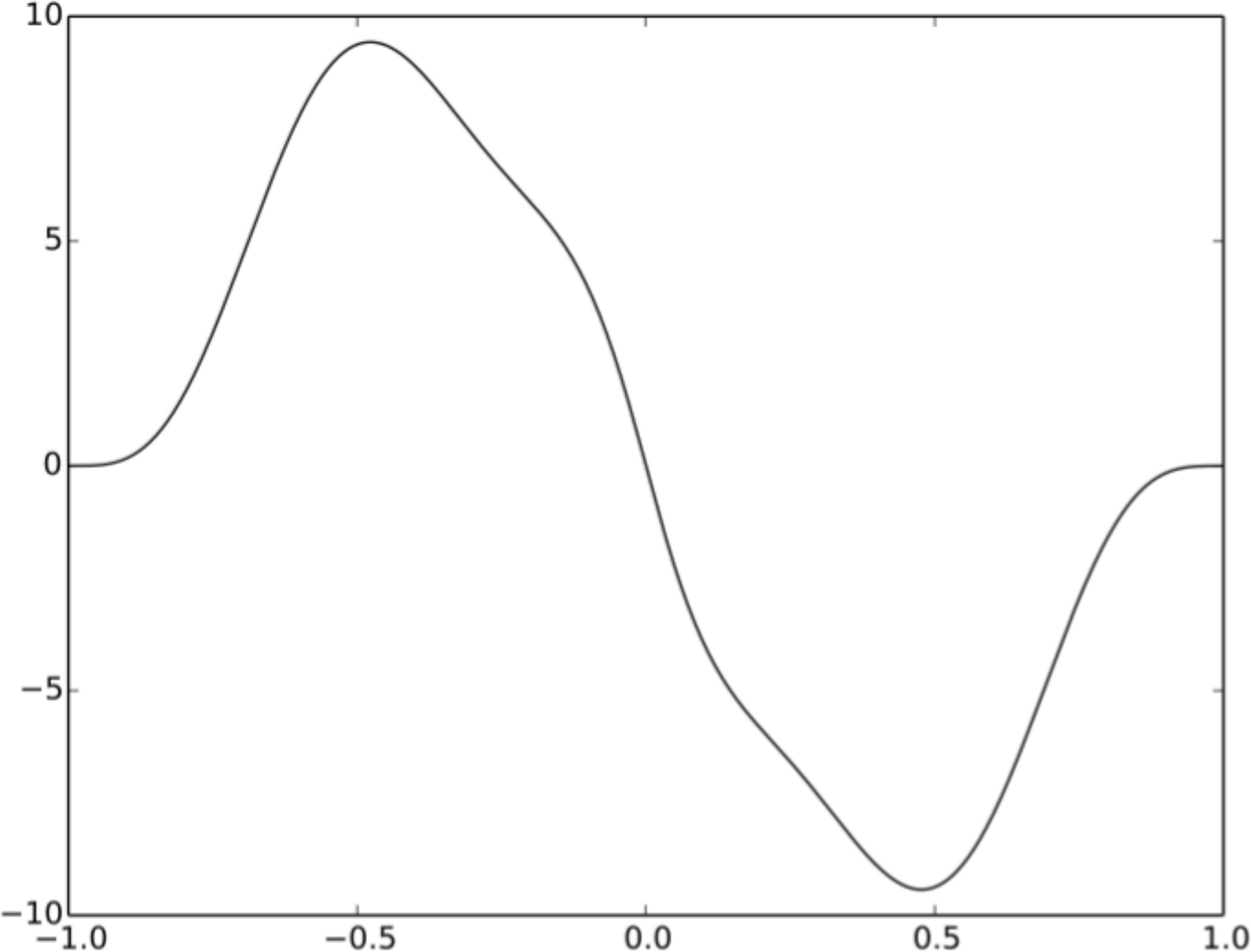}
\caption{\it The function $\ca H_d\widetilde\phi^{(d-1)}$ for $d=2$ (left) and $d=3$ (right).\\[-20pt]}\label{phi}
\end{center}
\end{figure}

\subsection{Inference about local monotonicity  of a bivariate density}\label{g27}
We follow the multiscale approach in Section \ref{subsec}
to obtain a graphical representation of the monotonicity behavior for a bivariate density of random coefficients. To test the hypotheses
\eqref{t5neu1} we use \eqref{t81} with $\ca T_n=\ca{T}_\bt\times\{h_0\}\times\ca T_\bv.$ Here,  $h=h_0=0.5$ is fixed and the set of test 
locations $\ca{T}_\bt$ is defined as the set of vertices on an equidistant grid in the square 
$[-1,2]^2$ with width one. Finally, the set of test directions is
$$\ca{T}_\bv=\big\{\bv_1=-\bv_3=\sqrt{2}^{-1}(1,1)^\top,\bv_2=-\bv_4=\sqrt{2}^{-1}(-1,1)^\top\big\}.$$
The data are simulated with $f_\bbeta$ the density of the normal mixture $\tfrac 13 {\cal N}((-0.4,-0.57)^\top,0.2I)+ \tfrac 13 {\cal N}((1.5,-0.52)^\top,0.2I) + \tfrac 13 {\cal N}((0.45,1.6)^\top,0.15I).$ The design is chosen such that $\bTheta_i$ is uniformly distributed
on the sphere $\mathbb S^1$. Figure \ref{g25} in Section \ref{subsec} displays the monotonicity behavior
of the density $f_\bbeta$ based on sample size $n=20000$. Each arrow
at a location $\bt$ in direction $\bv$ displays a rejection of a hypothesis
 \eqref{t5neu1}. The map indicates the existence of modes around the points $(-0.5,-0.5)^\top$, $(1.5,-0.5)^\top$, and $(0.5,1.5)^\top$ and thus detects the true modes fairly well.

\subsection{Influence of the design on the power}\label{loctest}
Given the random coefficients model with $d=3$, we study the power of the test for the existence of a mode
at a given location $\bb_0$ considering only few local tests. The postulated mode is given by the point
$\bb_0 =(0,0,0)^\top$ and we take $\ca T_n =\{ (\bt, h, \bv): h=1,\bt=\bv \in \{\pm \mathbf{e}_i, i=1,2,3\}\}$ with $\mathbf{e}_i$ the $i$-th standard unit vector in $\mathbb{R}^3.$ We conclude that $f_\bbeta$ has 
a  local maximum at the point $\bb_0$, whenever all hypotheses $H_{0,-}^{\bt_j,h_0,\bv_j},$ $j=1,\hdots,6,$ are rejected, i.e.
\beq{modtest}
\mathrm{sgn}(c_d)\widehat T_{\bt_j,h_0,\bv_j} >\kappa^{\bt_j,h_0,\bv_j}_n(\alpha), \quad \text{for all} \ j=1,\hdots,6,
 \eeq
 where $\kappa^{\bt_j,h_0,\bv_j}_n(\alpha) $ is defined by  \eqref{t11j}. Recall that the quantiles $\kappa^{\bt_j,h_0,\bv_j}_n(\alpha) $ are constructed in such a way that the probability of at least one false rejection within the six tests \eqref{modtest} is bounded by $\alpha$. However, the mode test detects the presence of a mode whenever all six tests \eqref{modtest} are rejected at the same time. The multiscale method is therefore rather conservative for the specific task of mode detection. In this simulation, we also study a calibrated version of the test where the quantiles are chosen such that the test keeps its nominal level  $\alpha=0.05$ and detects the presence of a non existing mode in about 5 percent of the simulation runs. For the calibration of the test we work under the null hypothesis assuming that $f_{\bbeta}$ is uniform. Therefore, knowledge about the true unknown density $f_\bbeta$ is not required. 

\textbf {Numerical simulations for random coefficients model without intercept:} At first, we consider model \eqref{eq:model} with uniform design $\bX_i \sim \Unif[-5,5]^3.$ To study the level of the test, we used $f_\bbeta(\bbeta) \propto \mathbf{1}(\bbeta \in [-5,5]^3).$ For the power we took $f_\bbeta$ as the density of a trivariate standard normal distribution. All results are based on the local tests \eqref{modtest} and 1000 repetitions. Level simulations with the theoretical quantiles confirm that the multiscale test keeps its nominal level as  the percentage of false rejections of at least one of the six hypotheses in  \eqref{modtest}  for sample size $n\in\{250,500,1000\}$ is nearly $5$ percent. The results of mode test are reported in Table \ref{ta5}. 

Next, we investigate an asymmetric distribution of the directions $\bTheta_i$ by sampling $\bX_i \sim \mathcal{N}((3,0,0)^\top,2I)$, with $I$ the $3\times 3$ identity matrix. We only consider the calibrated mode test. Results are reported in columns five and six in Table \ref{ta5}. Compared to the case of uniform design, we observe a decrease in the power of the mode test \eqref{modtest}. The explanation is that the uniform design on the $\bX_i$ induces a more uniform distribution of $\bTheta_i$ on the sphere which makes it simpler to recover information about the joint density as discussed in Section \ref{sec:inv_prob}.
\begin{table}[ht]
\begin{center}
\begin{tabular}{ cccc|cc}
&\multicolumn{3}{c|}{$\bX_i \sim \Unif[-5,5]^3$}&\multicolumn{2}{c}{$\bX_i \sim \mathcal{N}((3,0,0)^\top,2I)$}\\
$n$&power& level (cal.)&power (cal.)&level (cal.)&power (cal.) \\
\hline
250&9&4.5&91.0&4.7&66.1\\
500&15.7&4.6&99.1&4.5&78.8\\
1000&79.1&5.0&100&4.5&90.6\\
\end{tabular}
\caption{\it Simulated level and power of the test \eqref{modtest}  for  a mode considering uniform design $\bX_i \sim \Unif[-5,5]^3$ and normal design $\bX_i \sim \mathcal{N}((3,0,0)^\top,2I)$. Results with theoretical quantiles are in the second  column. Results where $\kappa_n^{\bt_j,h_0,\bv_j}(\alpha)$ in \eqref{modtest} are replaced by calibrated quantiles are in all other columns.\\[-40pt]}
\label{ta5}
\end{center}
\end{table}

\textbf{Numerical simulations for random coefficients model with intercept:}
We study model \eqref{eq:model_with_int} with $d=3.$
In a first simulation, we sample the random vectors $(X_{i,2},X_{i,3})^\top$ from a standard bivariate Cauchy distribution, such 
that the density $f_\bTheta$ is constant.  Except for the different design, we consider otherwise the same test settings as above. The simulated level and power of the calibrated version of the test \eqref{modtest} are reported in Table \ref{ta3}. To investigate the influence of the estimation of  the design density $f_\bTheta$ on the power of the test we also perform simulations in which we assume that the density $f_\bTheta$ is known to be constant.  These are shown  in Table \ref{ta3}, fourth and fifth column.
\begin{table}[ht]
\begin{center}
\begin{tabular}{ c cc | c c  }
&\multicolumn{2}{c|}{$f_\bTheta$ unknown}&\multicolumn{2}{c}{$f_\bTheta$ known}\\
$n$& level (cal.)&power (cal.)&level (cal.)&power (cal.) \\
\hline
250&4.8&91.3&5.0&93.3\\
500&5.2&99.0&5.2&99.7\\
1000&5.3&100&5.3&100\\
\end{tabular}
\caption{\it Same as Table \ref{ta5} but for random coefficients model with intercept and $(X_{i,2},X_{i,3})^\top$ from a standard bivariate Cauchy distribution. In the fourth and fifth column we assume that the density $f_\bTheta$ is known.\\[-40pt]}
\label{ta3}
\end{center}
\end{table}
Compared to the power approximations for unknown $f_\bTheta$, we observe only  a slight increase in power of the test for known $f_\bTheta$.

Finally, we consider two designs which do not satisfy Assumption \ref{p1}. Table \ref{ta6} reports the level and power for the same setting as above except that now $(X_{i,2},
X_{i,3})^\top$ is drawn from a standard normal distribution or $(X_{i,2},
X_{i,3})^\top \sim \Unif[-5,5]^2.$ We observe only a slight decrease in the power of the test for normally distributed design compared
to the setting where Assumption \ref{p1} holds. Even under uniform design, the test performs fairly well.
\begin{table}[ht]
 \centering
\begin{tabular}{ccc|cc }
&\multicolumn{2}{c|}{$(X_{i,2},
X_{i,3})^\top\sim \ca N (0,1)$}&\multicolumn{2}{c}{$(X_{i,2},
X_{i,3})^\top \sim \Unif[-5,5]^2$}\\
$n$& level (cal.)&power (cal.)&level (cal.)&power (cal.) \\
\hline
250&5.1&88.3&5.1&64.8\\
500&5.5&98.3&5.1&78.1\\
1000&5.1&100&5.5&89.4\\
\end{tabular}
\caption{\it Same as Table \ref{ta5} but for random coefficients model with intercept and $(X_{i,2},X_{i,3})^\top \sim \mathcal{N}((0,0)^\top, I)$ (second and third column) and $(X_{i,2},
X_{i,3})^\top \sim \Unif[-5,5]^2$ (fourth and fifth column).
}
\label{ta6}
\end{table}

\subsection{Multiscale mode testing}\label{multi}

For multimodal densities which have a second mode close to the test location $\bb_0$ testing  different bandwidths 
simultaneously can be advantageous to separate the modes. This is illustrated by the following example, where we consider the random coefficients model without
intercept with $d=2.$ The data are simulated with $f_\bbeta$ being the density of the normal mixture 
$$\frac 12 \ca{N}\Big((0,0)^\top,\Big( \;\begin{matrix} 0.05&0\\0&0.4 \end{matrix} \;\Big)\Big) + \frac 12 \ca{N}((2,0)^\top,0.1\cdot I).$$ We consider a design such that $\bTheta$ is uniformly distributed on the circle $\mathbb{S}^1$. The level $\alpha$ is fixed to five percent. We 
conducted simultaneously twelve tests with three different scales $h\in \{0.5, 1, 2.5\}$ for the hypotheses \eqref{t5neu2} with $\bb_0=(0,0)^\top$. The tests are given by $\{(\bt_i, h_j, \bv_i): i=1,\ldots,4; h_j \in \{0.5, 1, 2.5\}, \bv_i\in \{\pm \mathbf{e}_1,\pm \mathbf{e}_2\}, \bt_i=h_j\bv_i\}.$ We analyze the outcome of the twelve tests in two ways. Firstly, we investigate the performance of  each of the three tests for modality for the bandwidths $h=2.5$, $h=1$ and $h=0.5$ separately. Secondly, we consider the performance of a combined test for modality using two scales $h=1$ and $h=0.5$. The power approximations for sample sizes $n\in\{2000,5000,15000\}$ based on 1000 repetitions are displayed in Table \ref{ta8}. 

\begin{table}[ht]
 \centering
\begin{tabular}{ c c c  c | c  }
$n$& $h=2.5$&$h=1$&$h=0.5$&$h\in \{0.5, 1\}$\\
\hline
2000&100&0&0&25\\
5000&100&0&1&86.3\\
15000&100&0&68.7&100\\
\end{tabular}
\caption{\it Power of the multiscale test for modality. The mode tests for the bandwidths $h=2.5$, $h=1$ and $h=0.5$ are in the second, third and fourth column. The combined mode test using $h=1$ and $h=0.5$ is in the fifth column.
}
\label{ta8}
\end{table}
Figure \ref{g254} illustrates the results of the twelve tests for the hypotheses \eqref{t5neu2}
conducted simultaneously. Each arrow
at a location $\bt$ in direction $\bv$ displays a rejection of a hypothesis
 \eqref{t5neu2} and the length of the arrows corresponds to the respective bandwidths.
 
\begin{figure}
\begin{center}
\includegraphics[width=0.4\textwidth]{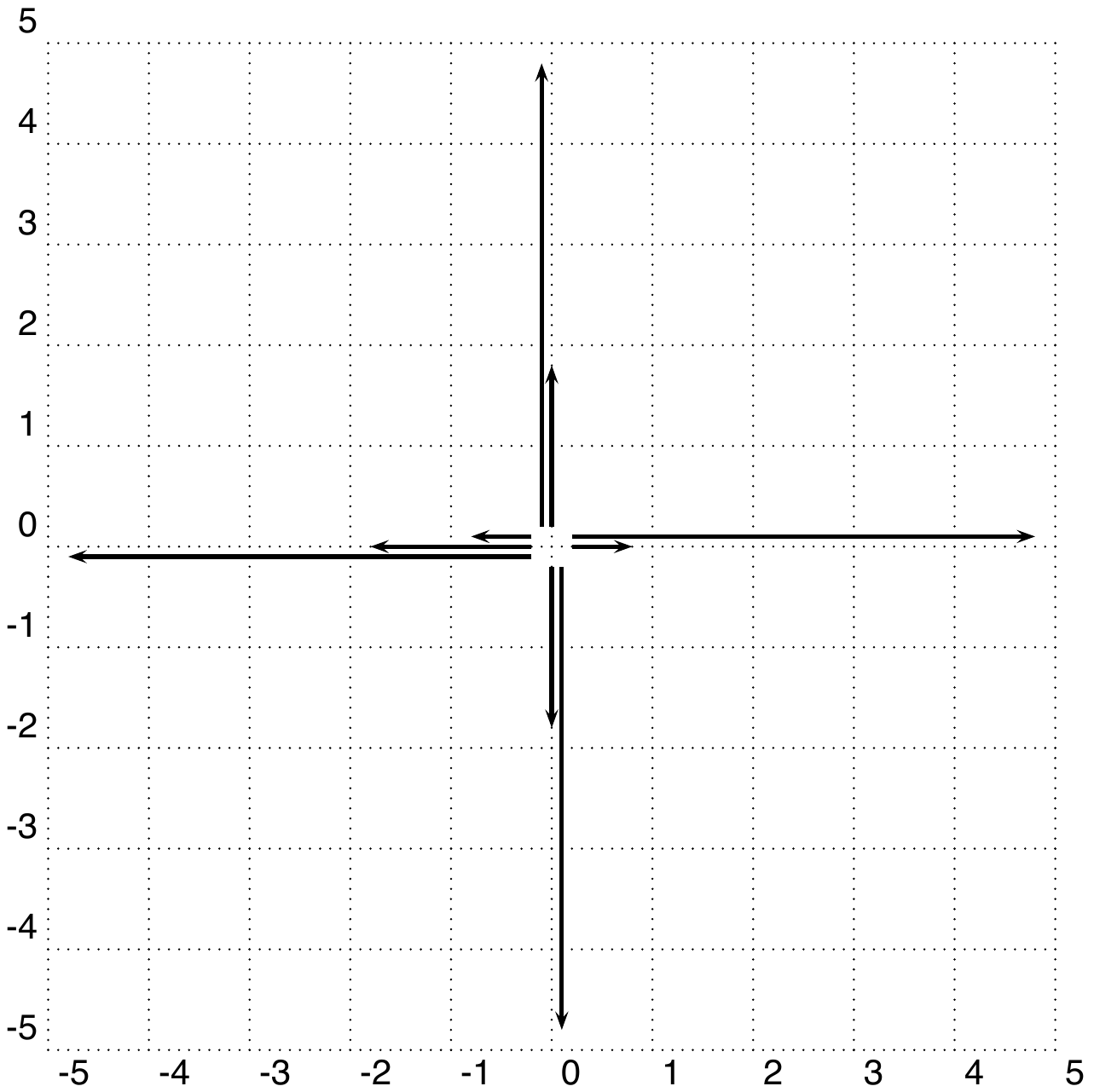}
\captionof{figure}{\it Rejected hypotheses of the twelve tests. Each arrow indicates a decrease on a scale that corresponds to the length of the arrow.} 
\label{g254}
\end{center}
\end{figure}
The results in Table \ref{ta8} show that the mode test for  $h=2.5$  detects in all cases a mode in a neighborhood of $(0,0)^\top$. However, the bandwidth $h=2.5$ is too large to distinguish between the underlying modes of $f_\bbeta$  at $(0,0)^\top$ and $(2,0)^\top$. The test for the bandwidth $h_0=1$ fails to detect the mode as $h=1$ is still too large to separate the two modes of $f_\bbeta.$
On the contrary, $h_0=0.5$ is too small to detect the decrease with small slope corresponding to the eigenvalue $0.4$ of 
the covariance matrix in the first mixture component of $f_\bbeta.$ Note that this effect vanishes for increasing sample size. By conducting the tests for the bandwidths $h=1$ and $h=0.5$ simultaneously, we are able to detect 
the mode at $(0,0)^\top$ in most of the simulations. 

\subsection{Comparison with the parametric random coefficient model}

It is common in the literature to use a parametric specification of the random coefficients. Usually, $\bbeta$ is assumed to have a multivariate normal distribution. In this section, we compare our nonparametric approach with mode estimation under a parametric specification of the random coefficient distribution. Our findings are similar to what is usually observed when parametric and nonparametric methods are compared. If the parametric specification of the model is consistent with the data generating process, the parametric approach outperforms the nonparametric one in terms of estimation errors, power of tests, and computation time. However, when the parametric specification and the data generating process differ, a model misspecification bias is present in the parametric estimation. This will not vanish, even if the sample size is large. The nonparametric model does not suffer from this problem and can in these cases perform better than the parametric model. We illustrate this effect in the following simulations.

We consider the parametric random coefficients model $Y=\tilde\bbeta^\top \bX$ where the density of $\tilde\bbeta$ belongs to some parametric family $f_{\tilde\bbeta} (\bb;\eta)$ with parameter $\eta$. 
Note that we can rewrite the equation as a mixed model $Y = \bgamma^\top \bX + \bu^\top \bX$. Here $\bgamma = \E(\tilde\bbeta)$ are fixed effects and $\bu = \tilde\bbeta - \E(\tilde\bbeta)$ are random effects with expectation $\bs 0$. If $f_{\tilde\bbeta} (\bb;\eta)$ is unimodal with a mode at $\E(\tilde\bbeta)$, we just need to derive confidence statements for $\bgamma$. This is in particular true if $f_{\tilde\bbeta} (\bb;\eta)$ is a normal density, which is the most common choice. A simple and efficient way to estimate $\bgamma$ is heteroscedasticity robust linear regression. Obviously, the computational complexity of this method is much smaller than the complexity of our algorithm.

Our test example is the random coefficients model with intercept \eqref{eq:model_with_int} with $d=3$ and $(X_{i,2},X_{i,3})^\top$ sampled from a standard bivariate Cauchy distribution. The distribution of the random coefficients $\bbeta_i$ in the data generating process is given by \textit{(i)} a standard Gaussian, \textit{(ii)} the normal mixture $0.5\ca N(\bs 0,0.1I)+0.5\ca{N}((2,0,0)^\top,0.1I)$ and \textit{(iii)} an exponential-2 distribution for $\beta_{i,1}$ and $(\beta_{i,2},\beta_{i,3})^\top\sim\ca N(\bs 0,0.1I)$ independent of the first component. Table \ref{OLS} reports estimates for $\bgamma$ obtained by transforming $Y$ and $\bX$ to $S$ and $\bTheta$ and running a heteroscedasticity robust regression.

\begin{table}[ht]
\begin{center}
\begin{tabular}{ccc|cc|cc}
&\multicolumn{2}{c|}{\textit{(i)} ($n=1000$)}&\multicolumn{2}{c|}{\textit{(ii)} ($n=1000$)}&\multicolumn{2}{c}{\textit{(iii)} ($n=1000$)}\\
\hline
$\widehat\gamma_1$&0.00&(0.02) &1.00&(0.02)&2.03&(0.03)\\
$\widehat\gamma_2$&0.02&(0.02)&0.00&(0.01)&-0.02&(0.02)\\
$\widehat\gamma_3$&0.00&(0.02)&0.01&(0.01)&0.02&(0.02)
\end{tabular}
\caption{\textit{Results of OLS for model $Y = \bgamma^\top \bX + \bu^\top \bX$ 
for $d=3$ and $(X_{i,2},X_{i,3})^\top$ drawn from a standard bivariate Cauchy distribution. The distribution of the true random coefficients $\bbeta_i$ is   a standard Gaussian in column \textit{(i)}, the normal mixture $0.5\ca N(\bs 0,0.1I)+0.5\ca{N}((2,0,0)^\top,0.1I)$ in column \textit{(ii)} and  an exponential-2 distribution for $\beta_{i,1}$ and $(\beta_{i,2},\beta_{i,3})^\top\sim\ca N(\bs 0,0.1I)$ independent of the first component in column \textit{(iii)}.\\[-20pt]}}
\label{OLS}
\end{center}
\end{table}
 
In column \textit{(i)}  of Table \ref{OLS} the parametric assumption holds and the procedure detects the true mode $\bs 0$ of the density  with high precision compared to the bandwidth choice $h=1$ in the simulations presented in Table \ref{ta3} in Section \ref{loctest}. In contrast, for the bimodal density in column \textit{(ii)} 
the coefficient vector $(1,0,0)^\top$ does not describe a representative member of the population because the misspecification bias is too large. For the skewed distribution of column \textit{(iii)} the estimator also fails to detect the mode of the density for the same reason. 

By applying our testing procedure in the setting of columns \textit{(ii)} and \textit{(iii)} we can show that the OLS results do not represent modes of the density. To this end, we set in \textit{(ii)} 
\[\bt=(0.5,0,0)^\top, \quad h=0.5, \quad \bv =(1,0,0)^\top\]
and in \textit{(iii)}
\[\bt=(1,0,0)^\top, \quad h=1, \quad \bv =(1,0,0)^\top.\]
Both tests reject $H_{0,-}^{\bt,h,\bv}$ (in \textit{(ii)} in $100\%$ and in \textit{(iii)} in $95.5\%$ percent of 1000 repetitions). Therefore, our procedure shows that neither  $(1,0,0)^\top$ in \textit{(ii)} nor $(2,0,0)^\top$ in \textit{(iii)} are modes of the underlying density. Of course, the parametric model would be able to detect the modes in cases \textit{(ii)} and \textit{(iii)} with a different parametric specification. However, this would require considerable a priori knowledge about the data. If we did not interpret the results as estimators for the mode but as estimators for $\E(\bbeta)$, the parametric method would perform well.

\section{Application to consumer demand data}
\label{sec:real_data}
Heterogeneity of consumers is a major challenge in modeling and estimating consumer demand. In several different demand models random coefficients were proposed to account for the heterogeneity in the population of consumers. 

\subsection{Model and data}

In this section we are interested in the almost ideal demand system (AIDS) which was initially proposed by \cite{Deaton:80} with fixed coefficients. This model does not explain demand for a product itself but explains the budget share spent on a product by a linear equation. The explanatory variables are log prices and the log of total expenditure divided by a price index. A detailed discussion of the model is contained in \cite{Lewbel:97}. 

Fixed coefficients in this model mean that all consumers are assumed to react in the same way when the price of a product changes. It is well known that some consumers are very price sensitive and change their behavior significantly with small variations in prices while other consumers are less price sensitive. This type of heterogeneity can be modeled by a random coefficient on log prices which is assumed to vary across the population of consumers. A similar argument suggests a random coefficient on log total expenditure. Recently, applications of the AIDS using a nonparametric random coefficient specification instead of fixed coefficients were presented in \cite{HHM:15} and \cite{Breunig16}.

We apply our multiscale test to detect modes in the random coefficients for budget shares for food at home (BSF)
\begin{align}\label{eq:bsf_model}
BSF_i = \beta_{i,1} + \beta_{i,2} \ln(TotExp_i) + \beta_{i,3} \ln(FoodPrice_i).
\end{align}
Food expenditure is a large fraction of total expenditure and is roughly about 20\%.


We analyze the data of the British Family Expenditure Survey which ran from 1961 to 2001. It reported yearly cross sections for household income, expenditure and other characteristics of roughly 7000 households. We use data of the years 1997--2001 only which gives a sample size of about 33000. 
Budget shares of food are generated by dividing the expenditure for all food by total expenditure. Food prices are reported as relative prices in comparison to a general prize index. The variable $TotExp$ is normalized to January 2000 real prizes.

Assumption \ref{p1} and the numerical simulations in Section \ref{Sec4} suggest that our test has more power when the normalized regressors are approximately uniform on the sphere. We can achieve this by symmetrizing the design in model \eqref{eq:bsf_model} as follows:
\begin{align}\label{eq:bsf_model_mod}
BSF_i = \tilde \beta_{i,1} + \tilde \beta_{i,2} \left(\ln(TotExp_i)-5\right) + \tilde \beta_{i,3} \left(25\ln(FoodPrice_i) - 0.3\right).
\end{align}
The relation of the modified model to the random coefficients in \eqref{eq:bsf_model} is $\beta_{i,1} = \tilde \beta_{i,1} - 5 \tilde \beta_{i,2} - 0.3  \tilde \beta_{i,3}$, $\beta_{i,2} = \tilde \beta_{i,2}$, $\beta_{i,3} = 25 \tilde \beta_{i,3}$.
Observations of the new variable $\ln(TotExp_i)-5$ lie between $-5$ and $3.7$. The observations  of $25\ln(FoodPrice_i) - 0.3$ range from $-1$ to $1.3$.

\subsection{Results}
For a first evaluation of the data we assumed fixed coefficients in model \eqref{eq:bsf_model_mod} and estimated the model with ordinary least squares (OLS).
\begin{table}[ht]
\begin{center}
\begin{tabular}{c|c|c}
$\tilde \beta_{1}$ & $\tilde \beta_{2}$ & $\tilde \beta_{3}$\\
\hline
$0.1940$ & $-0.0587$ & $0.0005$\\
$(0.000)$ & $(0.001)$ & $(0.001)$
\end{tabular}
\caption{\textit{Results of OLS for model \eqref{eq:bsf_model_mod} with fixed coefficients.\\[-20pt]}}
\label{tab:bsf_ols}
\end{center}
\end{table}

In order to find modes of the density we conducted simultaneously tests on the $5\%$ level of the form \eqref{t5neu2} on the two scales $h_1=0.75$ and $h_2=0.5$. Recall from Section \ref{loctest} that our testing procedure also performs well when the $\bX_i$ are not Cauchy distributed. Therefore, to obtain a testing procedure which is more flexible with respect to the design, we use the nonparametric density estimator $\widehat{f}_\bTheta$ instead of a parametric estimation procedure. We were testing for modes on the equidistant grid covering $[-1,1]^3$ with grid  width $1$. Hence, the grid had 27 nodes. For every grid point $\bb\in\R^3$ tests of the hypotheses \eqref{t5neu2} were conducted  for the
directions and locations
\begin{align*}
\begin{array}[t]{lllll}
 \bt_1=\bb+h_j\mathbf{e}_1,\;&\bv_1=\mathbf{e}_1,&\quad &\bt_4=\bb-h_j\mathbf{e}_2,\;&\bv_4=-\mathbf{e}_2,\\
\bt_2=\bb-h_j\mathbf{e}_1,\;&\bv_2=-\mathbf{e}_1,&\quad & \bt_5=\bb+h_j\mathbf{e}_3,\;&\bv_5=\mathbf{e}_3,\quad\quad\quad(j=1,2)\\
\bt_3=\bb+h_j\mathbf{e}_2,\;&\bv_3=\mathbf{e}_2,&\quad &\bt_6=\bb-h_j\mathbf{e}_3,\;&\bv_6=-\mathbf{e}_3,
\end{array}
\end{align*} 
where $\mathbf{e}_1,\mathbf{e}_2,\mathbf{e}_3 \in \mathbb{R}^3$ denote the standard unit vectors of $\R^3$. We detected a single mode in the neighborhood of the grid point $(0,0,0)^\top$ for the tests with bandwidth $h_1=0.75$. The test for the bandwidth $h_2=0.5$ did not detect a mode.

In the following we use nonparametric density estimation to motivate hypotheses for the testing procedure. It is important that this estimate and the test are independent, otherwise the testing procedure would be biased and could not guarantee a bound on the error rate. We meet the requirement by splitting the sample in two independent equally sized sub-samples. The first sub-sample is used for nonparametric estimation of the random coefficient density in model \eqref{eq:bsf_model_mod} with the estimator in \cite{hoderlein2008}. Figure \ref{Fig:2} gives contour plots for the joint densities of $f_{\tilde\beta_1,\tilde\beta_2}$, $f_{\tilde\beta_1,\tilde\beta_3}$, and $f_{\tilde\beta_2,\tilde\beta_3}$ based on about 16500 observations. We chose the smoothing parameters $h$ and $g$ in the estimator in \cite{hoderlein2008} equal to 0.05 and 0.1, respectively. Note that these bandwidth choices do not affect the level of the test performed below as the test does not depend on this estimator. The nonparametric estimate suggests that the random coefficient density of $f_{\tilde\beta_1,\tilde\beta_2,\tilde\beta_3}$ has one (well-pronounced) mode close to 
\begin{align}\label{eq:bsf_location}
(\beta_1,\beta_2,\beta_3) = (0.25,-0.07,0.02).
\end{align}
This is consistent with the results of the test above which found a mode close to $(0,0,0)$. Since the marginal densities of $\beta_1,\beta_2,\beta_3$ are nearly symmetric it is also consistent that the mode is close to the OLS estimates given in Table \ref{tab:bsf_ols}. With a significantly skewed or with a multimodal random coefficient density location of modes would differ from OLS.
\begin{figure}[ht]
\begin{center}
\includegraphics[width=0.32\textwidth]{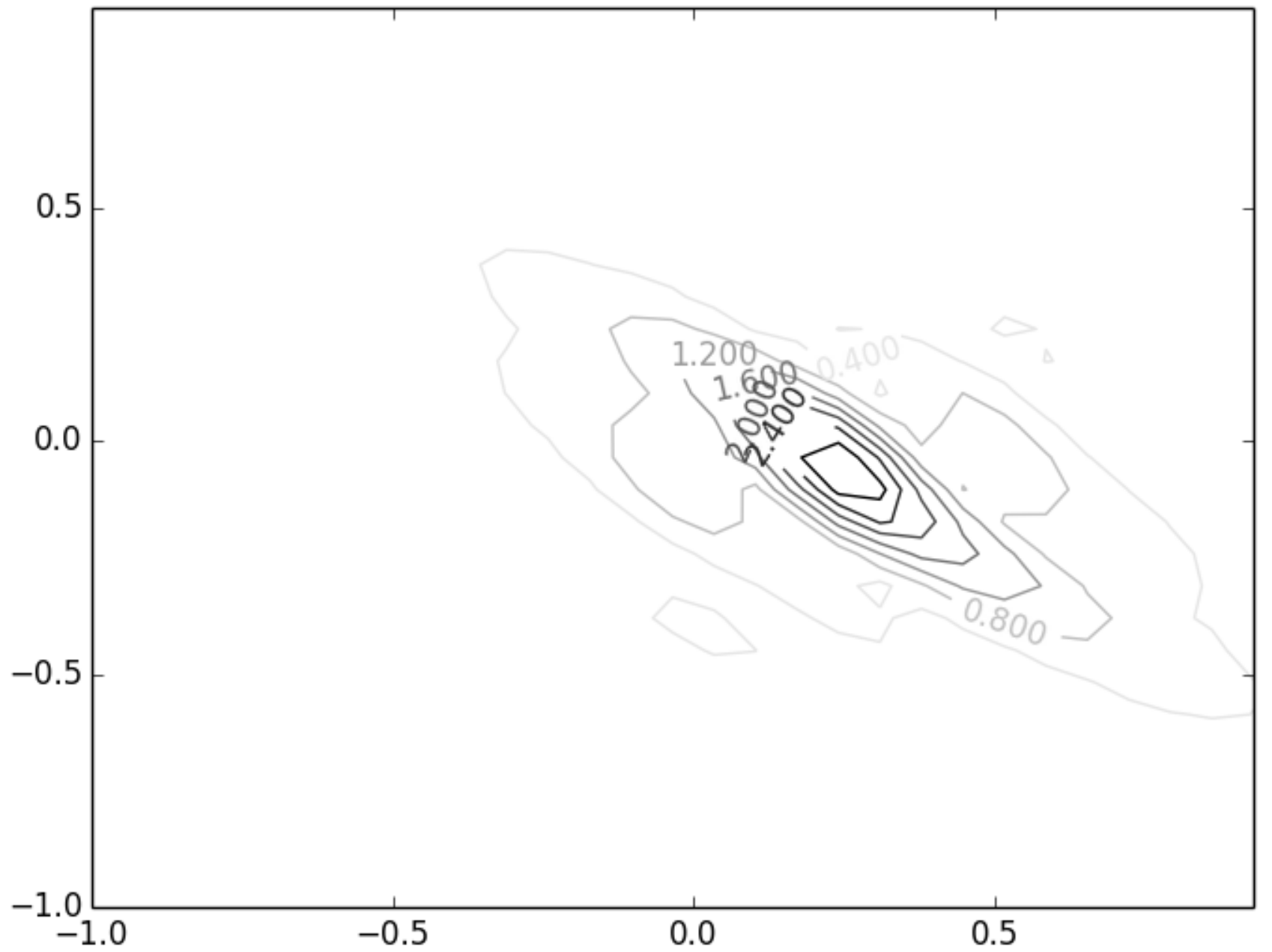}\;
\includegraphics[width=0.32\textwidth]{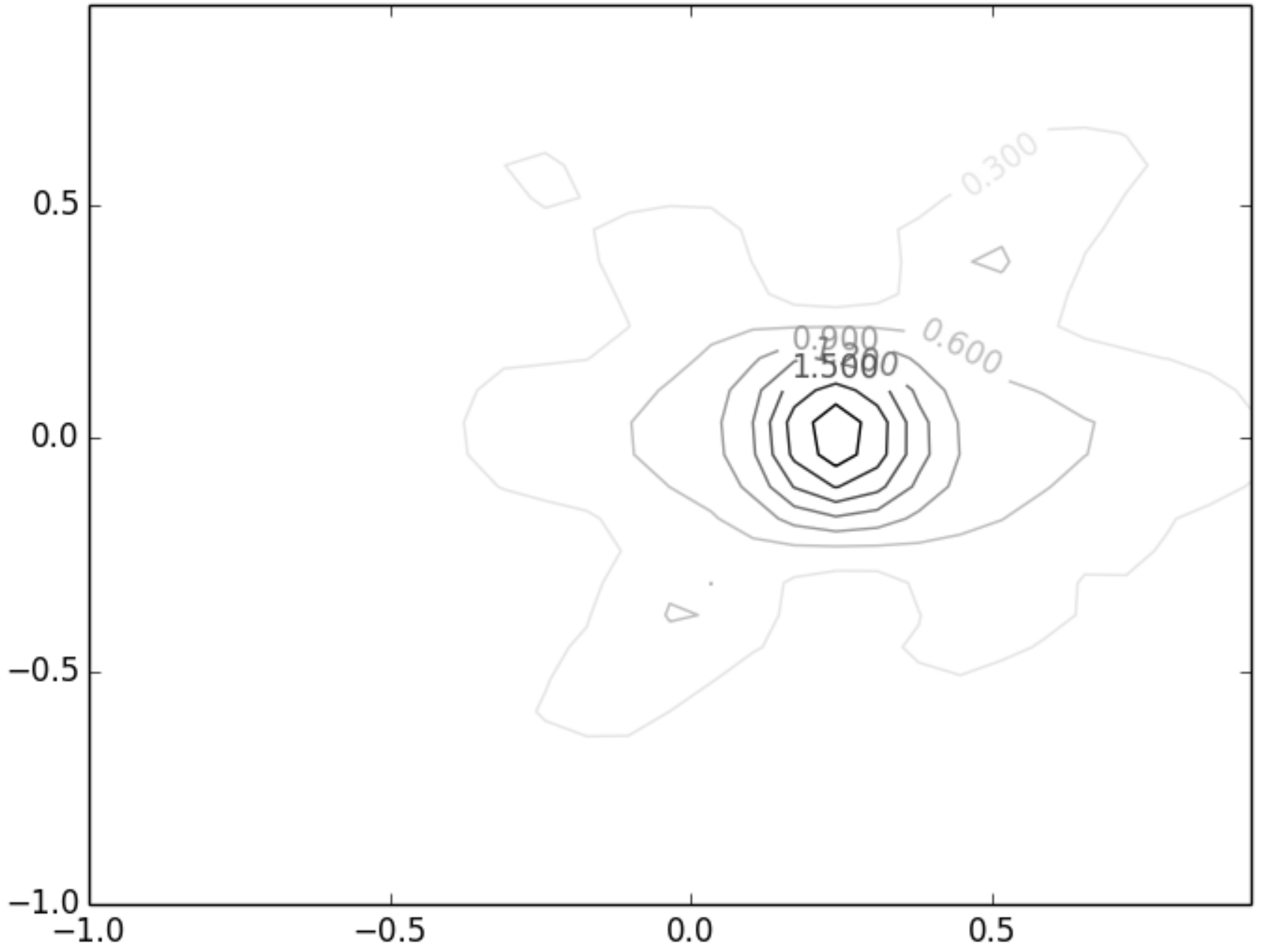}\;
\includegraphics[width=0.32\textwidth]{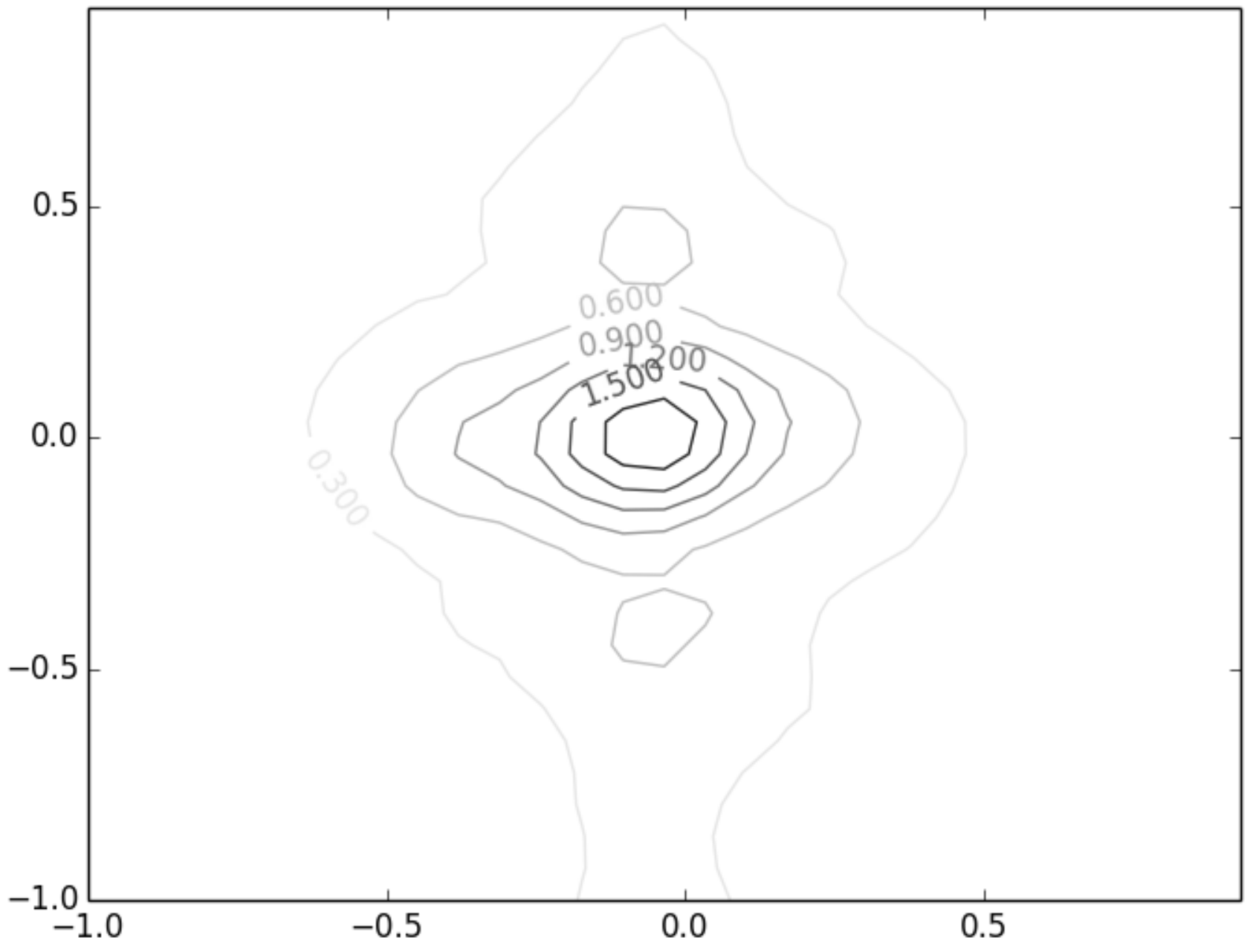}
\caption{\it Nonparametric estimates of the joint densities of $f_{\tilde\beta_1,\tilde\beta_2}$ (left), $f_{\tilde\beta_1,\tilde\beta_3}$ (middle), and $f_{\tilde\beta_2,\tilde\beta_3}$ (right).\\[-40pt]}
\label{Fig:2}
\end{center}
\end{figure}

With the second sub-sample we studied whether a mode can be found for the location given by \eqref{eq:bsf_location} on a smaller scale than in the test above. This would then indicate that the location of the mode is not $(0,0,0)^\top$. Our primary interest is in the coefficients on $\tilde\beta_2$ and $\tilde\beta_3$ on total expenditure and food prices. In order to see if the mode is indeed in a location where $\tilde\beta_2<0$ and $\tilde\beta_3>0$, we conduct two mode tests \eqref{eq:bsf_location} simultaneously for the bandwidths $h_1=0.07$ and $h_2=0.02$, that is, we test twelve hypotheses \eqref{t5neu2} for the following  locations and directions $(\bt_i, \bv_i).$ On scale $h_1=0.07$, we tested
\begin{align*} 
\begin{array}[t]{ll}
\bt_1=(0.32,-0.07,0.02)^\top,\;\bv_1=(1,0,0)^\top,\quad &\bt_4=(0.25,-0.14,0.02)^\top,\;\bv_4=-\bv_3,\\
\bt_2=(0.18,-0.07,0.02)^\top,\;\bv_2=-\bv_1,\quad &\bt_5=(0.25,-0.07,0.09)^\top,\;\bv_5=(0,0,1)^\top,\\
\bt_3=(0.25,0,0.02)^\top,\;\bv_3=(0,1,0)^\top,\quad &\bt_6=(0.25,-0.07,-0.05)^\top,\;\bv_6=-\bv_5\\
\end{array}
\end{align*} 
and on scale $h_2=0.02,$
\begin{align*} 
\begin{array}[t]{ll}
\bt_7=(0.27,-0.07,0.02)^\top,\;\bv_7=\bv_1,\quad &\bt_{10}=(0.25,-0.09,0.02)^\top,\;\bv_{10}=-\bv_3,\\
\bt_8=(0.23,-0.07,0.02)^\top,\;\bv_8=-\bv_1,\quad &\bt_{11}=(0.25,-0.07,0.04)^\top,\;\bv_{11}=\bv_5,\\
\bt_9=(0.25,-0.05,0.02)^\top,\;\bv_9=\bv_3,\quad &\bt_{12}=(0.25,-0.07,0)^\top,\;\bv_{12}=-\bv_5.\\
\end{array}
\end{align*} 
The test rejected all local hypotheses on scale $h_1=0.07$ but not on scale $h_2=0.02.$ This gives evidence that the mode is in a location where $\tilde\beta_2$ is negative but we cannot decide whether $\tilde\beta_3$ is positive at the mode.

Let us return to the initial model \eqref{eq:bsf_model}. The results of our test give evidence that a mode exists close to
\[
(\beta_1,\beta_2,\beta_3) = (0.5,-0.07,0.5) 
\] 
with strong evidence that $\beta_2$ is indeed negative. This vector of coefficients describes a representative member of the majority of consumers. It suggests that in the majority group food budget shares decrease with increasing log total expenditure. The nonparametric estimate in Figure \ref{Fig:2} shows that there is considerable variance among consumers around this representative member.

\section*{Acknowledgments} J. Schmidt-Hieber was partially funded by a TOP II grant from the Dutch science foundation.
F. Dunker acknowledges support by the Ministry of Education and Cultural Affairs of Lower Saxony in the project Reducing
Poverty Risk. K. Proksch acknowledges financial support by the German Research Foundation DFG through subproject A07 of CRC
755. K. Eckle has been supported by the Collaborative Research Center ``Statistical modeling of nonlinear dynamic processes''
(SFB 823, Project C4) of the German Research Foundation (DFG). We are very grateful to two referees and an associate editor for their constructive comments, which led to substantial improvement of an earlier version of this manuscript.

\bibliographystyle{apalike}
\bibliography{bibRC}

\begin{thebibliography}{}

\bibitem[Adler and Taylor, 2007]{Adler2007}
Adler, R.~J. and Taylor, J.~E. (2007).
\newblock {\em Random fields and geometry}.
\newblock Springer Monographs in Mathematics. Springer, New York.

\bibitem[Andrews, 2001]{andrews2001}
Andrews, D. (2001).
\newblock Testing when a parameter is on the boundary of the maintained
  hypothesis.
\newblock {\em Econometrica}, 69(3):683--734.

\bibitem[Bai et~al., 1988]{MR971170}
Bai, Z.~D., Rao, C.~R., and Zhao, L.~C. (1988).
\newblock Kernel estimators of density function of directional data.
\newblock {\em J. Multivariate Anal.}, 27(1):24--39.

\bibitem[Bates et~al., 2015]{sdsdsd}
Bates, D., M{\"a}chler, M., Bolker, B., and Walker, S. (2015).
\newblock Fitting linear mixed-effects models using {lme4}.
\newblock {\em Journal of Statistical Software}, 67(1):1--48.

\bibitem[Beran, 1993]{Beran1993}
Beran, R. (1993).
\newblock Semiparametric random coefficient regression models.
\newblock {\em Ann. Inst. Statist. Math.}, 45(4):639--654.

\bibitem[Beran et~al., 1996]{Beran1996}
Beran, R., Feuerverger, A., and Hall, P. (1996).
\newblock On nonparametric estimation of intercept and slope distributions in
  random coefficient regression.
\newblock {\em Ann. Statist.}, 24(6):2569--2592.

\bibitem[Beran and Hall, 1992]{Beran1992}
Beran, R. and Hall, P. (1992).
\newblock Estimating coefficient distributions in random coefficient
  regressions.
\newblock {\em Ann. Statist.}, 20(4):1970--1984.

\bibitem[Berry et~al., 1995]{BLP:95}
Berry, S., Levinsohn, J., and Pakes, A. (1995).
\newblock Automobile prices in market equilibrium.
\newblock {\em Econometrica}, 63(4):841--890.

\bibitem[Berry and Pakes, 2007]{Berry07}
Berry, S. and Pakes, A. (2007).
\newblock The pure characteristics demand model.
\newblock {\em Internat. Econom. Rev.}, 48(4):1193--1225.

\bibitem[Breunig and Hoderlein, 2018]{Breunig16}
Breunig, C. and Hoderlein, S. (2018).
\newblock {Specification Testing in Random Coefficient Models}.
\newblock {\em Quantitative Economics}, forthcoming.

\bibitem[Butucea et~al., 2007]{butucea2007}
Butucea, C., Guţă, M., and Artiles, L. (2007).
\newblock Minimax and adaptive estimation of the wigner function in quantum
  homodyne tomography with noisy data.
\newblock {\em Ann. Statist.}, 35(2):465--494.

\bibitem[Chernozhukov et~al., 2017]{Chernozhukov2017}
Chernozhukov, V., Chetverikov, D., and Kato, K. (2017).
\newblock Central limit theorems and bootstrap in high dimensions.
\newblock {\em Ann. Probab.}, 45(4):2309--2352.

\bibitem[Davison, 1983]{Davison1983}
Davison, M.~E. (1983).
\newblock The ill-conditioned nature of the limited angle tomography problem.
\newblock {\em SIAM J. Appl. Math.}, 43(2):428--448.

\bibitem[Deaton and Muellbauer, 1980]{Deaton:80}
Deaton, A. and Muellbauer, J. (1980).
\newblock An almost ideal demand system.
\newblock {\em American Economic Review}, 70:312--326.

\bibitem[Dubé et~al., 2012]{Dube12}
Dubé, J.-P., Fox, J.~T., and Su, C.-L. (2012).
\newblock Improving the numerical performance of static and dynamic aggregate
  discrete choice random coefficients demand estimation.
\newblock {\em Econometrica}, 80(5):2231--2267.

\bibitem[{D\"umbgen} and Spokoiny, 2001]{duembgen2001}
{D\"umbgen}, L. and Spokoiny, V.~G. (2001).
\newblock Multiscale testing of qualitative hypotheses.
\newblock {\em Ann. Statist.}, 29(1):124--152.

\bibitem[D{\"u}mbgen and Walther, 2008]{duembgen2008}
D{\"u}mbgen, L. and Walther, G. (2008).
\newblock Multiscale inference about a density.
\newblock {\em Ann. Statist.}, 36(4):1758--1785.

\bibitem[Dunker et~al., 2013]{DHK:13}
Dunker, F., Hoderlein, S., and Kaido, H. (2013).
\newblock Random coefficients in static games of complete information.
\newblock {\em cemmap Working Papers}, CWP12/13.

\bibitem[Dunker et~al., 2017]{DHK:17}
Dunker, F., Hoderlein, S., and Kaido, H. (2017).
\newblock Nonparametric identification of random coefficients in endogenous and
  heterogeneous aggregate demand models.
\newblock {\em cemmap Working Papers}, CWP11/17.

\bibitem[Dunker et~al., 2018]{DHKS:18}
Dunker, F., Hoderlein, S., Kaido, H., and Sherman, R. (2018).
\newblock Nonparametric identification of the distribution of random
  coefficients in binary response static games of complete information.
\newblock {\em Journal of Econometrics}, forthcoming.

\bibitem[Eckle et~al., 2017a]{Eckle}
Eckle, K., Bissantz, N., and Dette, H. (2017a).
\newblock Multiscale inference for multivariate deconvolution.
\newblock {\em Electron. J. Stat.}, 11(2):4179--4219.

\bibitem[Eckle et~al., 2017b]{Eckle2017}
Eckle, K., Bissantz, N., Dette, H., Proksch, K., and Einecke, S. (2017b).
\newblock Multiscale inference for a multivariate density with applications to
  x-ray astronomy.
\newblock {\em Annals of the Institute of Statistical Mathematics},
  https://doi.org/10.1007/s10463-017-0605-1.

\bibitem[Feuerverger and Vardi, 2000]{Feuerverger2000}
Feuerverger, A. and Vardi, Y. (2000).
\newblock Positron emission tomography and random coefficients regression.
\newblock {\em Ann. Inst. Statist. Math.}, 52(1):123--138.

\bibitem[Fox and Gandhi, 2016]{Fox16}
Fox, J.~T. and Gandhi, A. (2016).
\newblock Nonparametric identification and estimation of random coefficients in
  multinomial choice models.
\newblock {\em The RAND Journal of Economics}, 47(1):118--139.

\bibitem[Frikel, 2013]{Frikel2013}
Frikel, J. (2013).
\newblock Sparse regularization in limited angle tomography.
\newblock {\em Appl. Comput. Harmon. Anal.}, 34(1):117--141.

\bibitem[Gautier and Hoderlein, 2012]{Gautier12}
Gautier, E. and Hoderlein, S. (2012).
\newblock {A triangular treatment effect model with random coefficients in the
  selection equation}.
\newblock {\em cemmap Working Papers}, CWP39/12.

\bibitem[Gautier and Kitamura, 2013]{Gautier13}
Gautier, E. and Kitamura, Y. (2013).
\newblock Nonparametric estimation in random coefficients binary choice models.
\newblock {\em Econometrica}, 81(2):581--607.

\bibitem[Gin{\'e} and Guillou, 2001]{MR1876841}
Gin{\'e}, E. and Guillou, A. (2001).
\newblock On consistency of kernel density estimators for randomly censored
  data: rates holding uniformly over adaptive intervals.
\newblock {\em Ann. Inst. H. Poincar\'e Probab. Statist.}, 37(4):503--522.

\bibitem[Greenland, 2000]{Greenland00}
Greenland, S. (2000).
\newblock When should epidemiologic regressions use random coefficients?
\newblock {\em Biometrics}, 56(3):915--921.

\bibitem[Gustafson and Greenland, 2006]{Greenland06}
Gustafson, P. and Greenland, S. (2006).
\newblock The performance of random coefficient regression in accounting for
  residual confounding.
\newblock {\em Biometrics}, 62(3):760--768.

\bibitem[Helgason, 2011]{Helgason2011}
Helgason, S. (2011).
\newblock {\em Integral geometry and {R}adon transforms}.
\newblock Springer, New York.

\bibitem[Hoderlein et~al., 2015]{HHM:15}
Hoderlein, S., Holzmann, H., and Meister, A. (2015).
\newblock The triangular model with random coefficients.
\newblock {\em cemmap Working Papers}, CWP33/15.

\bibitem[Hoderlein et~al., 2010]{hoderlein2008}
Hoderlein, S., Klemel{\"a}, J., and Mammen, E. (2010).
\newblock Analyzing the random coefficient model nonparametrically.
\newblock {\em Econometric Theory}, 26(3):804--837.

\bibitem[Hohmann and Holzmann, 2016]{Hohmann2016}
Hohmann, D. and Holzmann, H. (2016).
\newblock Weighted angle radon transform: Convergence rates and efficient
  estimation.
\newblock {\em Statistica Sinica.}, 26:157--175.

\bibitem[Hsiao, 2014]{Hsiao14}
Hsiao, C. (2014).
\newblock {\em Analysis of Panel Data}.
\newblock Cambridge University Press.
\newblock Cambridge Books Online.

\bibitem[Hsiao and Pesaran, 2004]{Hasio04}
Hsiao, C. and Pesaran, M.~H. (2004).
\newblock {Random Coefficient Panel Data Models}.
\newblock CESifo Working Paper Series 1233, CESifo Group Munich.

\bibitem[Ichimura and Thompson, 1998]{Ichimura98}
Ichimura, H. and Thompson, T. (1998).
\newblock Maximum likelihood estimation of a binary choice model with random
  coefficients of unknown distribution.
\newblock {\em Journal of Econometrics}, 86(2):269 -- 295.

\bibitem[Lewbel, 1997]{Lewbel:97}
Lewbel, A. (1997).
\newblock Consumer demand systems and household equivalence scales.
\newblock In Pesaran, M.~H. and Schmidt, P., editors, {\em Handbook of applied
  econometrics}, volume~2, chapter~4, pages 167--201. Blackwell, Oxford.

\bibitem[Masten and Torgovitsky, 2014]{Masten14}
Masten, M. and Torgovitsky, A. (2014).
\newblock Instrumental variables estimation of a generalized correlated random
  coefficients model.
\newblock {\em cemmap Working Papers}, CWP02/14.

\bibitem[Masten, 2017]{Masten15}
Masten, M.~A. (2017).
\newblock Random coefficients on endogenous variables in simultaneous equations
  models.
\newblock {\em The Review of Economic Studies}, page rdx047.

\bibitem[Nevo, 2001]{Nevo:01}
Nevo, A. (2001).
\newblock Measuring market power in the ready-to-eat cereal industry.
\newblock {\em Econometrica}, 69(2):307--342.

\bibitem[Petrin, 2002]{Petrin:02}
Petrin, A. (2002).
\newblock Quantifying the benefits of new products: The case of the minivan.
\newblock {\em Journal of Political Economy}, 110(4):705--729.

\bibitem[{Proksch} et~al., 2016]{proksch16}
{Proksch}, K., {Werner}, F., and {Munk}, A. (2016).
\newblock {Multiscale scanning in inverse problems}.
\newblock {\em ArXiv Preprint}, arXiv:1611.04537.

\bibitem[Schmidt-Hieber et~al., 2013]{schmidthieber13}
Schmidt-Hieber, J., Munk, A., and D{\"u}mbgen, L. (2013).
\newblock Multiscale methods for shape constraints in deconvolution: confidence
  statements for qualitative features.
\newblock {\em Ann. Statist.}, 41(3):1299--1328.

\bibitem[Swamy, 1970]{swamy1970}
Swamy, P. (1970).
\newblock Efficient inference in a random coefficient regression model.
\newblock {\em Econometrica}, 38(2):311--323.

\end{thebibliography}

\newpage
\setcounter{page}{1}
\appendix

\section*{Supplementary material to ``Tests for qualitative features in the random coefficients model''}

\section{Nonparametric estimators for the densities $f_\mathbf{\Theta}$ and $f_{S,\mathbf{\Theta}}$}\label{3.1}
In this section we discuss the estimation of the densities $f_\bTheta$ and $f_{S,\bTheta}$ and related quantities. We use kernel density estimators based on the second half of the observations
$(S_i,\bTheta_i),$ $i=n+1,\hdots,2n$. The density of the random vector  $\bX_1$ is denoted by $ f_\bX$. In the random coefficients model without intercept $f_\bX$ is a $d$-variate density and a $(d-1)$-variate density in the random coefficients model with intercept. Throughout the following, $K:\R\rightarrow\R$ is assumed to be Lipschitz continuous, non-negative and $\int K=1.$

In the random coefficients model without intercept, we introduce the kernel density estimator
\beq{p3}
 \widehat f_\bTheta(\btheta)=\frac{C(h_*)}{nh_*^{d-1}}\sum_{i=n+1}^{2n}K\Big
 (\frac{1-\langle \bTheta_i,\btheta\rangle}{h_*^2}\Big),\quad h_*>0,
 \eeq
with normalization constant
\[
 C(h_*) :=h_*^{d-1}\Big(\int_{\sd}K\Big(\frac{1-\langle \btheta',\btheta\rangle}{h_*^2}\Big)d\btheta'\Big)^{-1}.
\]
As shown in \cite{MR971170}, the integral does not depend on $\btheta$ and $C(h_*)$ converges to some positive constant
 as $h_*\rightarrow0$. For the joint density of $(S, \bTheta),$ we propose the kernel density estimator
\beq{1.12}
 \widehat f_{S,\bTheta}(s,\btheta)=\frac{1}{nh_+^d} C(h_+)\sum_{i=n+1}^{2n}K\Big(\frac{1-\langle\bTheta_i,
 \btheta\rangle}{h_+^2}\Big)K\Big(\frac{S_i-s}{h_+}\Big),\quad h_+>0.
\eeq

In the random coefficients model with intercept the symmetrizations $S_i = \zeta_i Y_i/\|\mathbf{X}_i\|$ and $\bTheta_i = \zeta_i \mathbf{X}_i/\|\mathbf{X}_i\|$ with Rademacher variables $\zeta_i$ correspond to   point reflections of the densities at the origin. Thus, the density $f_\bTheta$ is in general not continuous on the boundary of the hemisphere $\sd_+$ (see also \eqref{eq.ftheta_explicit}). Smoothness is, however, necessary to control the bias. Therefore, we use a two step procedure for the estimation of $f_\bTheta$ and $f_{S,\bTheta}$ in the random coefficients model with intercept. First, we estimate the density of the non-symmetrized samples $Y_i/\|\mathbf{X}_i\|$ and $\bX_i/\|\bX_i\|$, $i=n+1,\hdots,2n$ on the hemisphere $\sd_+$ and on $\R\times\sd_+$, respectively. The  estimators for $f_\bTheta$ and  $f_{S,\bTheta}$ are then the same as \eqref{p3} and \eqref{1.12} except that now $\btheta\in\sd_+$ and the normalization constant $C(h_*)$ is replaced by a function $\btheta\mapsto C(h_*,\btheta)$, defined by
\[
  C(h_*,\btheta) :=h_*^{d-1}\Big(\int_{\sd_+}K\Big(\frac{1-\langle \btheta',\btheta\rangle}{h_*^2}\Big)d\btheta'\Big)^{-1}\quad\big(\btheta\in\sd_+\big).
\] In a second step, we symmetrize the estimators and divide by two to get estimators of the densities $f_\bTheta$ and $f_{S,\bTheta}$ on the whole domain. 

In the next lemma we establish convergence of these estimators. If in Lemma \ref{g22} $f_\bX$ follows a multivariate Cauchy distribution, then $\gamma :=1$. Otherwise, $\gamma$ comes from Assumption \ref{p1} \textit{(iii)}.

\begin{lem2}\label{g22}
Suppose Assumption \ref{p1} is satisfied for some $\gamma>0$  and set $\gamma=1$ in the case of the random coefficients model with intercept. In both models, the estimator $\widehat f_\bTheta$ with bandwidth $h_*=O(\log(n)^{-3/\gamma})$
and  $h_*\geq\log(n)^{7/(d-1)}n^{-1/(d-1)}$ satisfies

\begin{tabular}{ll}
(i) & \  $\sup_{\btheta \in \sd} \big |\E\big[  \widehat  f_\bTheta(\btheta)\big] - f_\bTheta(\btheta) \big| 
	= O\big( h_*^\gamma \big) ;$ \\[0.3cm] 
(ii) & \  $\sup_{\btheta \in \sd} \big | \widehat  f_\bTheta(\btheta) - \E\big[  \widehat  f_\bTheta(\btheta) \big] \big| 
	= O_\P\Big( \sqrt{\frac{\log(n)}{nh_*^{d-1}}}\Big);$ \\[0.3cm] 
(iii) & \   $\sup_{\btheta \in \sd} \big | \widehat  f_\bTheta(\btheta) -    f_\bTheta(\btheta)  \big| 
	= O\big(\log(n)^{-1}\big),\ \text{for }n\rightarrow\infty,\text{ almost surely}.$ \\
\end{tabular}
\end{lem2}

The proof is delayed to the end of this section. Let us now discuss properties of the density $f_{S,\bTheta}(s,\btheta).$ By \eqref{1.3}, $f_{S,\bTheta}(s,\btheta)=f_\bTheta(\btheta) Rf_\bbeta(s,\btheta).$ Under Assumption \ref{ass:f_beta}, $f_\bbeta$
is compactly supported and consequently, $f_{S,\bTheta}(s,\btheta)$ and $s\mapsto \log(|s|)^2f_{S,\bTheta}(s,\btheta)$ for $|s|\geq1$ are uniformly bounded. Moreover, $f_{S,\bTheta}$ is H\"older continuous with H\"older constant
$\gamma$. This is a straightforward consequence of the  H\"older $\gamma$-continuity of $f_\bTheta$  shown in the proof of Lemma  \ref{g22} and the identity
\[
 Rf_\bbeta(s,\btheta)=\int_{\R^{d-1}}f_\bbeta( s\btheta+x_1\btheta_1+\hdots+x_{d-1}\btheta_{d-1})d\bs x,
\]
where $\btheta_1,\hdots,\btheta_{d-1}$ denote an orthonormal basis of the orthogonal complement of $\rm{span}\{\btheta\}$, together with the 
compact support  and the Lipschitz-continuity of $f_\bbeta$ (following from Assumption \ref{ass:f_beta}). Moreover, the properties of $f_{S,\bTheta}$ discussed above also hold in quantum homodyne tomography under Assumption 2'. We point out that the marginal densities of the Wigner function, which are 
given by the Radon transform, are nonnegative. This will be used later to bound the standard deviation $\sigma_{\bt,h,\bv}$ away from zero.

If in Lemma \ref{p2} $f_\bX$ follows a multivariate Cauchy distribution, then $\gamma :=1$. Otherwise, $\gamma$ comes from Assumption  \ref{p1} \textit{(iii)}.
\begin{lem2}\label{p2}
Let Assumptions \ref{ass:f_beta} resp. 2' and \ref{p1} hold and consider the estimator $\widehat f_{S,\bTheta}$ in \eqref{1.12} with bandwidth choice $h_+=O(\log(n)^{-3/\gamma})$ and  $h_+\geq\log(n)^{3/d}n^{-1/(2d)}$. 
 Then
 \[
  \sup_{(s,\btheta)\in\R\times\sd}\big|\widehat f_{S,\bTheta}(s,\btheta)-f_{S,\bTheta}(s,\btheta)\big|=O(\log(n)^{-2})\quad\text{for }n\rightarrow\infty\text{ almost surely}.
 \]
\end{lem2}

For the estimation of the test statistic $\widehat{T}_{\bt,h,\bv}$ and the limiting process $\widehat{X}_{\bt,h,\bv}$ the 
quantities $1/f_\bTheta$ and $\sqrt{f_{S,\bTheta}}$ need to be estimated. The functions $(\cdot)^{-1}$ and $\sqrt{\cdot}$ are
not smooth in zero and we therefore introduce the cut-off estimators
\beq{1.5}
\widetilde f_\bTheta:=\widehat f_\bTheta\vee \log(n)^{-1}\quad\tn{and}\quad \widetilde f_{S,\bTheta}:=\widehat f_{S,\bTheta}\vee
\log(n)^{-2}.
\eeq
By the boundedness from below of $f_\bTheta$ and Lemma \ref{g22} it holds
$\widetilde f_\bTheta=\widehat f_\bTheta$ almost surely for $n$ sufficiently large.

\begin{proof}[Proof of Lemma \ref{g22}]
We only consider the case without intercept, that is, \textit{(i)-(iii)} in Assumption \ref{p1} hold. In the case with intercept, we use $C(h_*)\leq C(h_*,\btheta)\leq 2C(h_*)$ and the fact that $C(h_*,\btheta)$ is Lipschitz-continuous with respect to $\btheta$ to arrive at the same conclusion. 

To prove \textit{(i)} observe that
 \bal
 \big|\E\big[  \widehat  f_\bTheta(\btheta)\big] - f_\bTheta(\btheta)\big|
 =&
 \Big|\frac{C(h_*)}{h_*^{d-1}}  \int_{\sd} K\Big(\frac{1-\langle \btheta',\btheta\rangle}{h_*^2}\Big)f_\bTheta(\btheta')d\btheta'
-f_\bTheta(\btheta)\Big|\\
 \leq& \frac{C(h_*)}{h_*^{d-1}}  
 \int_{\|\btheta'-\btheta\|\lesssim h_*} K\Big(\frac{1-\langle \btheta',\btheta\rangle}{h_*^2}\Big)
 \big|f_\bTheta(\btheta')-f_\bTheta(\btheta)\big|d\btheta'.
 \end{align*}
 Here, we used the compact support of $K$ and  the identity $1-\langle\btheta',\btheta\rangle=\|\btheta'-\btheta\|^2/2$. Since $f_\bTheta(\btheta)=\int_{0}^\infty r^{d-1}f_\bX(r\btheta)dr$, we have by Assumption \ref{p1} \textit{(iii)} $|f_\bTheta(\btheta')-f_\bTheta(\btheta)|\leq \int_0^\infty r^{d-1}|f_\bX(r\btheta')-f_\bX(r\btheta)|dr
 \lesssim h_*^\gamma$ for $\|\btheta'-\btheta\|\lesssim h_*$. By definition of the constant $C(h_*)$,
we obtain $\big|\E\big[  \widehat  f_\bTheta(\btheta)\big] - f_\bTheta(\btheta)\big|\lesssim
 h_*^\gamma $ and this proves \textit{(i)}.

Next, we bound the stochastic error term \textit{(ii)} using an entropy argument and Bernstein's inequality. Observe that by the Lipschitz-continuity of $K,$ $\widehat f_\bTheta-\E[\widehat f_\bTheta]$ is Lipschitz-continuous with Lipschitz constant of order $h_*^{-d-1}.$ For 
$a_n :=\sqrt{\frac{\log(n)}{nh_*^{d-1}}},$ let $\{\btheta_j: j=1,\hdots,M\}$ be defined as the set of smallest cardinality such that
$\bigcup_{j=1}^MB_{c'h_*^{d+1}a_n}(\btheta_j)\supset \sd$ for some constant $c'>0.$ If $c'>0$ is chosen small enough, then
\begin{align}
\P\Big(\sup_{\btheta \in \sd} \big | \widehat  f_\bTheta(\btheta) - \E\big[  \widehat  f_\bTheta(\btheta) \big] \big|>ca_n\Big)
\leq  \sum_{j=1}^M \P\Big( \big | \widehat  f_\bTheta(\btheta_j) - \E\big[  \widehat  f_\bTheta(\btheta_j) \big] \big|>\frac{c} 2 a_n\Big).
\label{eq.sup_to_max}
\end{align}
In order to bound the probability, we apply Bernstein's inequality to $\widehat  f_\bTheta(\btheta_j) - \E [  \widehat  f_\bTheta(\btheta_j) ]= \sum_{i=n+1}^{2n} Z_i$ with
\[
 Z_i:=\frac{1}{nh_*^{d-1}}C(h_*)K\Big(\frac{1-\langle \bTheta_i,\btheta\rangle}{h_*^2}\Big)
 -\frac{1}{n}\E\big[\widehat f_\bTheta(\btheta_j) \big].
\]
We find $|Z_i|\leq \frac{C_1}{nh_*^{d-1}}$ for some constant $C_1>0,$ and for some constant $C_2>0,$
 \bal
\E\big[ Z_i^2\big]\leq&
\frac{C(h_*)^2}{n^2h_*^{2d-2}}\int_{\sd}K^2\Big(\frac{1-\langle \btheta',\btheta\rangle}{h_*^2}\Big)f_\bTheta(\btheta')d\btheta'
\leq \frac{C_2}{n^2h_*^{d-1}}
\end{align*}
using the boundedness of $K$ and $f_\bTheta$ as well as the definition of $C(h_*)$. Hence, an application of Bernstein's inequality yields with \eqref{eq.sup_to_max},
\bal
  \P\Big( \sup_{\btheta \in \sd} \big | \widehat  f_\bTheta(\btheta) - \E\big[  \widehat  f_\bTheta(\btheta) \big] 
  \big|>ca_n\Big)\leq & 2M\exp\Big(
  \frac{-a_n^2c^2/8}{C_2n^{-1}h_*^{-d+1}+cC_1a_n(6n)^{-1}h_*^{-d+1}}\Big).
\end{align*}
Since $M$ is a polynomial power of $n,$ the claim follows by choosing the constant $c$ large enough. 

For \textit{(iii)} one proceeds similarly with the choice $a_n=(\log(n)\log\log(n))^{-1}$ using the summability of the probabilities.
 \end{proof}

To prove Lemma \ref{p2}, we make use of a slightly modified version of Proposition 2.2 in \cite{MR1876841} which we state below as 
 Proposition \ref{.6}. If $\mathcal{F}$ is a uniformly bounded class of measurable functions on a measurable space $(S, \ca{S})$ with a measurable and bounded envelope $F$, then $\ca F$ is said to be a measurable uniformly bounded VC class of functions  if there are constants
$A, v > 0$ such that 
\[
 \sup_Q N(\mathcal{F}, L_2(Q) ,\varepsilon\|F\|_{L^2(Q)} ) \leq \Big(\frac{A}{\varepsilon}\Big)^v
\]
 for all $0 < \varepsilon < 1$, where $N(T,d,\ve)$ denotes the $\ve$-covering number of the metric space $(T,d)$ and the supremum is taken over all  probability measures on $(S,\ca{S})$.
 
\begin{prop2}\label{.6}
Let $P$ be any probability measure on $(S,\ca S)$ and let $\xi_i,\;i=1,\hdots,n,$ be  independent with common law $P$.
 Let further $\ca{F}$ be a measurable uniformly bounded VC class of functions and let $\sigma^2$ and $U$ be any numbers such 
 that $\sigma^2\geq\sup_{f\in\ca{F}}\Var_Pf,\;U\geq\sup_{f\in\ca{F}}\|f\|_\infty$ and $0<\sigma\leq U$. Then there exist 
  universal constants $C,K',L>0$ such that the exponential inequality
 \begin{align}\begin{split}\label{1.1} 
 &\P\bigg(\sup_{f\in\ca{F}}\Big|\sum_{i=1}^n\big(f(\xi_i)-\E[f(\xi_i)]\big)\Big|>t\bigg)\\
\leq  &K'\exp\bigg(-\frac{1}{K'}\frac{t}{U}\log\bigg(1+\frac{tU}{\big(\sqrt{n}\sigma+L\sqrt{v}U\sqrt{\log(AU\sigma^{-1}}\big)^2}\bigg)\bigg)
 \end{split}\end{align}
 is valid for all $t\geq C(vU\log(AU\sigma^{-1})+\sqrt{vn}\sigma\sqrt{\log(AU\sigma^{-1})}).$
\end{prop2} 

In contrast to Proposition 2.2 in \cite{MR1876841}, Proposition \ref{.6} contains the explicit  dependence of the right hand side of \eqref{1.1} on the constants $A$ and $v$.

\begin{proof}[Proof of Lemma \ref{p2}]
Similarly as in the proof of Lemma \ref{g22}, it is enough to consider the random coefficients model without intercept only and to work under Assumption \ref{p1}, \textit{(i)-(iii)}. If the design density is multivariate
Cauchy, we can derive the properties in a similar way for   $\R\times\sd_+$. An upper bound of the bias can be derived  similarly to Lemma \ref{g22} \textit{(i)}. For the stochastic error
\bal
\P\bigg( \sup_{(s,\btheta)\in\R\times\sd}\Big|\widehat f_{S,\bTheta}(s,\btheta)-\E\big[\widehat f_{S,\bTheta}(s,\btheta)\big]\Big|
>\log\log(n)^{-1}\log(n)^{-2}\bigg)
\end{align*}
we apply Proposition \ref{.6} to the function class
\[
\ca F_n:= \bigg\{(S,\bTheta)\mapsto K\Big(\frac{1-\langle\bTheta,
 \btheta\rangle}{h_+^2}\Big)K\Big(\frac{S-s}{h_+}\Big),\;(s,\btheta)\in\R\times\sd\bigg\},
\]
which depends on $n$ via $h_+$.
By the boundedness of $K$ we find $U,\sigma\lesssim1$.
To show that $\ca F_n$ is a  VC class of functions, we introduce a discretization of $\R\times\sd$ as follows: Let $c>0$ be a sufficiently small constant only
depending on the kernel $K$. We chose a grid $\{\btheta_j:j=1,\hdots,M_1\}$ of $\sd$ with grid
width at most $c\ve h_+^2$. Obviously, this is possible with $M_1 \lesssim (\ve^{-1}h_+^{-2})^{d-1}$. Moreover, introduce the set of intervals $I_k=[k,k+1)$, $k\in\mathbb{Z}$. For each probability measure $Q$ there are at most $\lceil(c\ve)^{-2}\rceil$
sets $I_{i_j}\times\sd,\;j=1,\hdots, \lceil (c\ve)^{-2}\rceil,$ such that $Q(I_{i_j}\times\sd)\geq (c\ve)^2$.
Let  $\{s_j:j=1,\hdots,M_2\}$ be an equidistant grid of 
\[
 \td I_{h_+}:=\Bigg\{s\in\R:\mathrm{dist}\Bigg(s,\bigcup_{j=1}^{\lceil(c\ve)^{-2}\rceil}I_{i_j}\Bigg)\leq 1\Bigg\}
\]
with grid width $ch_+\ve$ 
and let
$ s_{M_2+1}$ denote an arbitrary point in $  \td I_{h_+}^C.$
Basic calculations show 
$M_2\lesssim \ve^{-3}h_+^{-1}.$ Moreover, the subset of  $\ca F_n$ indexed by 
\[
 \big\{s_j:j=1,\hdots M_2+1\big\}\times\big\{\btheta_j:j=1,\hdots,M_1\big\}=:\big\{(s_j,\btheta_j):j=1,\hdots,M_1(M_2+1)\big\}
\]
is an $\ve$-covering set of $\ca F_n$. To see this,
fix $(s,\btheta)\in \td I_{h_+}\times\sd$. Then
\bal
&\int_\R\int_{\sd} \Big|K\Big(\frac{1-\langle\bTheta,
 \btheta\rangle}{h_+^2}\Big)K\Big(\frac{S-s}{h_+}\Big)- K\Big(\frac{1-\langle\bTheta,
 \btheta_j\rangle}{h_+^2}\Big)K\Big(\frac{S-s_j}{h_+}\Big)\Big|^2dQ(S,\bTheta)\\\lesssim&
 \frac{\|\btheta-\btheta_j\|^2}{h_+^4}
 +\frac{|s-s_j|^2}{h_+^2}
\end{align*}
by the Lipschitz continuity of $K$. Hence, by construction of the set $\td I_{h_+}\times\sd$ there exists $j\in
\{1,\hdots,M_1(M_2+1)\}$ such that
\[
  \frac{\|\btheta-\btheta_j\|^2}{h_+^4}
 +\frac{|s-s_j|^2}{h_+^2}< \ve^2.
\]
For $(s,\btheta)\in (\td I_{h_+}\times\sd)^C$ we obtain
\bal
&\int_\R\int_{\sd} \Big(K\Big(\frac{S-s}{h_+}\Big)+K\Big(\frac{S-s_{M_2+1}}{h_+}\Big)\Big)^2dQ(S,\bTheta)< \ve^2
\end{align*}
since the support of $K(\frac{\cdot-s}{h_+})$ is compact  and does not intersect with any of the sets
$I_{i_j}\times\sd,\;j=1,\hdots, \lceil (c\ve)^{-2}\rceil$ for $h_+$ sufficiently small.
A similar argument applies to $K(h_+^{-1}(\cdot-s_{M_2+1}))$. Hence,
\[
 N(\mathcal{F}, L_2(Q) ,\varepsilon ) \lesssim \Big(\varepsilon^{-1}h_+^{(-2d+1)/(d+2)}\Big)^{d+2}
\]
and $\ca F_n$ is a VC class of functions with $v=d+2$ and $A=A_n=h_+^{(-2d+1)/(d+2)}$.
An application of Proposition \ref{.6} yields
\bal
&\P\Big(  \sup_{(s,\btheta)\in\R\times\sd}\big|\widehat f_{S,\bTheta}(s,\btheta)-\E[\widehat f_{S,\bTheta}(s,\btheta)]\big|
>\log\log(n)^{-1}\log(n)^{-2}\Big)\\
=&\P\bigg(\sup_{f\in\ca{F}_n}\Big|\sum_{i=n+1}^{2n}\big(f(S_i,\bTheta_i)-\E[f(S_i,\bTheta_i)]\big)\Big|>\frac{nh_+^d}{C(h_+)\log\log(n)\log(n)^2}\bigg)\\
 \lesssim& \exp\bigg(-\frac{1}{4K'\sigma^2C(h_+)^2}\frac{nh_+^{2d}}{\log\log(n)^{2}\log(n)^4}\bigg)
\end{align*}
for $n$ sufficiently large. We have used that $\log(1+x)=x(1+o(1))$ for $x\rightarrow 0$. The last line of the equation converges to zero 
at a summable rate since $h_+\geq\log(n)^{3/d}n^{-1/(2d)}$ by assumption which concludes the proof of the uniform almost sure
convergence of $\widehat f_{S,\bTheta}$.
\end{proof}

Let us turn to the standard deviation
\begin{align}
	\sigma_{\bt,h,\bv} =  \Big( \int_{\mathbb{S}^{d-1}}\int_{\mathbb{R}} \langle \btheta, \bv \rangle^2
	\big((\mathcal{H}_d\widetilde \phi^{(d-1)})(s) \big)^2 \frac{ f_{S,\bTheta}(\langle \bt, \btheta \rangle+hs,
	\btheta)}{ f_{\bTheta}(\btheta)^2} dsd\btheta  \Big)^{1/2}
	\label{eq.sigma_def}
\end{align}
and its estimator defined in \eqref{eq.sigma_th_def}. The following lemma shows that it is uniformly bounded from above and below. The proof is deferred to Appendix \ref{B}.
\begin{lem2}\label{1.6}
Under Assumptions \ref{ass:radial_symmetric}-\ref{p1} there exist universal constants $C_1,C_2,n_0>0$ such that for any $n>n_0,$
 \[
  C_1\leq \sigma_{\bt,h,\bv}\leq C_2.
 \]
\end{lem2}
The proof is given in Appendix \ref{B}. The consistency of the estimates $\widehat f_\bTheta$ and $\widehat f_{S,\bTheta}$ shows that $\widehat\sigma_{\bt,h,\bv}$
is a consistent estimator of the standard deviation $\sigma_{\bt,h,\bv}$.

\begin{lem2}\label{1.9}
Under Assumptions \ref{ass:radial_symmetric}- \ref{p1},
 \[
  \sup_{(\bt,h,\bv)\in \ca{T}_n} \big|\widehat{\sigma}_{\bt,h,\bv}-{\sigma}_{\bt,h,\bv}\big|
=O\big(\log(n)^{-1}\big)\quad\text{for }n\rightarrow\infty,\text{ almost surely}.
 \]
\end{lem2}
\begin{proof}
By Lemma \ref{1.6},
 \bal
  \big|\widehat \sigma_{\bt,h,\bv}-\sigma_{\bt,h,\bv} \big|&\leq
  \frac{ \big|\widehat \sigma_{\bt,h,\bv}^2-\sigma_{\bt,h,\bv}^2 \big|}{\sigma_{\bt,h,\bv}}\\
  &\hspace{-15mm}\lesssim \int_{\mathbb{S}^{d-1}}\int_{\mathbb{R}} \langle \btheta, \bv \rangle^2
	\big((\mathcal{H}_d\widetilde \phi^{(d-1)})(s) \big)^2\Big|
	\frac{ f_{S,\bTheta}(\langle \bt, \btheta \rangle+hs,
	\btheta)}{ f_{\bTheta}(\btheta)^2}-\frac{\widetilde f_{S,\bTheta}(\langle \bt, \btheta \rangle+hs,
	\btheta)}{\widetilde f_{\bTheta}(\btheta)^2}\Big| dsd\btheta.
 \end{align*}
By Assumption \ref{p1}, $f_\bTheta$ is uniformly bounded from below. Thus, $\widetilde f_\bTheta$ is almost surely uniformly bounded from below for sufficiently large $n$ by Lemma \ref{g22}. This shows that
 \begin{align*}
   &f_{S,\bTheta}(\langle \bt, \btheta \rangle+hs,
	\btheta) \Big|
	\frac{ 1}{ f_{\bTheta}(\btheta)^2}-\frac{ 1}{\widetilde f_{\bTheta}(\btheta)^2}\Big|
	+\frac{ 1}{\widetilde f_{\bTheta}(\btheta)^2}\big|f_{S,\bTheta}(\langle \bt, \btheta \rangle+hs,
	\btheta)-\widetilde f_{S,\bTheta}(\langle \bt, \btheta \rangle+hs,
	\btheta)\big| \\
	&=O\big((\log(n)^{-1}\big)\quad\text{almost surely}.
 \end{align*}
Here we used the boundedness of $f_{S,\bTheta}$ and
\begin{align*}
&\big|f_{S,\bTheta}(\langle \bt, \btheta \rangle+hs,
\btheta)-\widetilde f_{S,\bTheta}(\langle \bt, \btheta \rangle+hs,
\btheta)\big|\\
\leq \; & \big|f_{S,\bTheta}(\langle \bt, \btheta \rangle+hs,
\btheta)-\widehat f_{S,\bTheta}(\langle \bt, \btheta \rangle+hs,
\btheta)\big|+\big|\widehat f_{S,\bTheta}(\langle \bt, \btheta \rangle+hs,
\btheta)-\widetilde f_{S,\bTheta}(\langle \bt, \btheta \rangle+hs,
\btheta)\big|\\
= \; & O\big(\log(n)^{-2}\big)\quad\text{almost surely}
\end{align*}
by Lemma \ref{p2}. The claim follows now from the integrability of 
$\big((\mathcal{H}_d\widetilde \phi^{(d-1)})(s) \big)^2$ proved in Lemma \ref{lem.rad_symm_kernel_explicit}
\textit{(ii)}.
\end{proof}

\begin{lem2}\label{1.8}
Under  Assumptions \ref{ass:f_beta} resp. 2' and  \ref{p1} we have
  \[
  \sup_{(s,\btheta)\in\R\times\sd}\Big|\sqrt{\widetilde f_{S,\bTheta}(s,\btheta)}-\sqrt{f_{S,\bTheta}(s,\btheta)}\Big|=O(\log(n)^{-1})\quad\text{for }n\rightarrow\infty\text{ almost surely}.
 \]
\end{lem2}
\begin{proof}
 This is a direct consequence of
  \[
  \sup_{(s,\btheta)\in\R\times\sd}\Big|{\widetilde f_{S,\bTheta}(s,\btheta)}-{f_{S,\bTheta}(s,\btheta)}\Big|
  =O(\log(n)^{-2})\quad\text{for }n\rightarrow\infty\text{ almost surely}
 \]
 as shown in the proof of Lemma \ref{1.9} and
 \[
  \big|\sqrt{\widetilde f_{S,\bTheta}(s,\btheta)}-\sqrt{f_{S,\bTheta}(s,\btheta)}\big|=
  \frac{ \big|{\widetilde f_{S,\bTheta}(s,\btheta)}-{f_{S,\bTheta}(s,\btheta)} \big|}
  {\sqrt{\widetilde f_{S,\bTheta}(s,\btheta)}+\sqrt{f_{S,\bTheta}(s,\btheta)}}
  \leq \frac{ \big|{\widetilde f_{S,\bTheta}(s,\btheta)}-{f_{S,\bTheta}(s,\btheta)} \big|}{\log(n)^{-1}}.
 \]
\end{proof}
We discussed in Section \ref{sec:multiscale_test} 
that the test statistic $T_{\bt,h,\bv}$ relies on the unknown density
$f_\bTheta$ and therefore we introduced the statistic $\widehat T_{\bt,h,\bv}$, where the density $f_\bTheta$ is replaced
by the estimate $\widetilde f_\bTheta$. An important part of the proof 
of Theorem \ref{thm.main} consists of showing that this replacement is asymptotically negligible. To this end,
the bias of the estimate $1/\widetilde f_\btheta(\btheta)$ has to be controlled.
\begin{lem2}\label{g3}
Under Assumption \ref{p1},
\begin{align*}
	\sup_{\btheta \in \mathbb{S}^{d-1}} \Big| \frac{1}{f_\bTheta(\btheta)} - \E\Big[ \frac{1}{\widetilde f_\bTheta(\btheta)} \Big] \Big|
	=O\big(h_*^\gamma\big)\quad\text{for }n\rightarrow\infty.
\end{align*}
\end{lem2}
\begin{proof}
Uniformly over $\btheta \in \mathbb{S}^{d-1},$
 \bal
 &\Big|\E\Big[\frac{1}{f_\bTheta(\btheta)}-\frac{1}{\widetilde{f}_\bTheta(\btheta)}\Big]\Big|=
\Big| \E\Big[\frac{\widetilde{f}_\bTheta(\btheta)-f_\bTheta(\btheta)}
 {\widetilde{f}_\bTheta(\btheta)f_\bTheta(\btheta)}\Big]\Big|\\
 & \hspace{2cm}\lesssim\Big| \E\Big[\frac{\widetilde{f}_\bTheta(\btheta)-{f}_\bTheta(\btheta)}{{f}_\bTheta(\btheta)^2}\Big]\Big|
 +\Big|\E\Big[\frac{\widetilde{f}_\bTheta(\btheta)-{f}_\bTheta(\btheta)}{\widetilde{f}_\bTheta(\btheta){f}_\bTheta(\btheta)}
  \mathbbm{1}\big\{\exists\btheta':\widehat{f}_\bTheta(\btheta')\leq {f}_\bTheta(\btheta')/2\big\}\Big]\Big|\\
  &\hspace{2cm} \lesssim \big|\E\big[\widetilde{f}_\bTheta(\btheta)-{f}_\bTheta(\btheta)\big]\big|+
  \log(n)h_*^{-d+1}\mathbb{P}\big(\exists\btheta':\widehat{f}_\bTheta(\btheta')\leq {f}_\bTheta(\btheta')/2\big).
 \end{align*}
Following the line of arguments in the proof of Lemma \ref{g22}, it is easy to see that $  \mathbb{P}\big(\exists\btheta':\widehat{f}_\bTheta(\btheta')\leq {f}_\bTheta(\btheta')/2\big)$ decays at a rate which is faster than polynomial. In particular, $\widetilde f_\bTheta=\widehat{f}_\bTheta$ except on a set with probability decaying faster than any polynomial, which concludes the proof using Lemma \ref{g22} \textit{(i)}.
\end{proof}

\section{Proofs related to properties of the Radon transform}\label{B}

 {\it Proof of \eqref{eq.rcm_with_i_fthete_fX}:} Recall that in the random coefficients model with intercept
\[
\bTheta_1 = \zeta_1\frac{(1, X_{1,2}, X_{1,3}, \ldots, X_{1,d})}{\|(1, X_{1,2}, X_{1,3}, \ldots, X_{1,d})\|} 
,
\]
where $\zeta_1$ is a Rademacher variable.
Hence, we obtain
\begin{align*}
 f_\bTheta(\btheta)&=\frac{1}{2}\int_0^\infty r^{d-1} \delta(r\theta_1-1)
 f_\bX(r\theta_2,\hdots,r\theta_d)dr\\&\quad+ \frac{1}{2}\int_0^\infty r^{d-1} \delta(r\theta_1+1)
 f_\bX(-r\theta_2,\hdots,-r\theta_d)dr\\
 &=\frac 1 2\int_0^\infty \frac{r^{d-1}}{\theta_1^{d}} \delta(r-1) f_\bX\big(\tfrac{r}{\theta_1}\theta_2,\hdots,
 \tfrac{r}{\theta_1}\theta_d)dr\mathbbm{1}\{\theta_1> 0\}\\&\quad+
 \frac 1 2\int_0^\infty \frac{r^{d-1}}{\theta_1^{d-1}|\theta_1|} \delta(r+1) f_\bX\big(-\tfrac{r}{\theta_1}\theta_2,\hdots,
 -\tfrac{r}{\theta_1}\theta_d)dr\mathbbm{1}\{\theta_1< 0\}
\\
 &=\frac{1}{2|\theta_1|^d}f_\bX\Big(\frac{\theta_2}{\theta_1},\hdots,\frac{\theta_d}{\theta_1}\Big).
\end{align*}\qed

\begin{proof}[Proof of Lemma \ref{lem.rad_symm_kernel_explicit}] 
By assumption, $\phi_{\bt,h}$ is radially symmetric and satisfies \eqref{eq.def_radially_symm}. We fix a direction $\bv \in \mathbb{S}^{d-1}$ and consider the directional derivative
\[
\partial_{\bv} \phi_{\bt,h}(\bb) = \frac 1{h^{d+1}\Vol(\mathbb{S}^{d-2})} \phi'\left(\frac{\|\bb-\bt\|}{h}\right)\frac{\langle \bb-\bt,\bv \rangle}{\|\bb-\bt\|},
\]
where $\phi'$ is the usual derivative of $\phi$. The Radon transform of this directional derivative is
\begin{align*}
R(\partial_{\bv} \phi_{\bt,h})(s,{\btheta}) &= \int_{\langle \bb,{\btheta}\rangle=s} \partial_{\bv} \phi_{\bt,h}(\bb) d\mu_{d-1}(\bb)\\
&= \frac 1{h^{d+1}\Vol(\mathbb{S}^{d-2})} \int_{\langle \bb,{\btheta}\rangle=s}  \phi'\left(\frac{\|\bb-\bt\|}{h}\right)\frac{\langle \bb-\bt,\bv \rangle}{\|\bb-\bt\|} d\mu_{d-1}(\bb)\\
&=  \frac 1{h^{2}\Vol(\mathbb{S}^{d-2})} \int_{\langle \bb,{\btheta}\rangle = h^{-1}(s-\langle \bt, {\btheta} \rangle)} \phi'\left(\left\|\bb\right\|\right)\frac{\langle \bb,\bv \rangle}{\|\bb\|} d\mu_{d-1}(\bb).
\end{align*}
Set $\widetilde{s} = h^{-1}(s-\langle \bt, {\btheta} \rangle).$ For $d>2,$ using the definition of $\widetilde \phi$ in \eqref{eq.tilde_phi_n_def}, 
\begin{align*}
R(\partial_{\bv} \phi_{\bt,h})(s,{\btheta}) &= \frac 1{h^{2}\Vol(\mathbb{S}^{d-2})} \int_{\langle \bb,{\btheta}\rangle = \widetilde{s}} \frac{\phi'\left(\left\|\bb\right\|\right)}{\|\bb\|}\langle \bb,\bv \rangle d\mu_{d-1}(\bb)\\
& = \frac 1{h^{2}\Vol(\mathbb{S}^{d-2})} \int_0^\infty \frac{\phi'\big(\sqrt{\widetilde{s}^2+r^2}\big)}{\sqrt{\widetilde{s}^2+r^2}} \int_{\mathbf{w} \perp {\btheta}, \|\mathbf{w}\| = r}  \langle {\btheta} \widetilde{s} + \mathbf{w}, \bv \rangle d\mathbf{w} dr\\
& = \frac 1{h^{2}\Vol(\mathbb{S}^{d-2})} \int_0^\infty \frac{\phi'\big(\sqrt{\widetilde{s}^2+r^2}\big)}{\sqrt{\widetilde{s}^2+r^2}} \int_{\mathbf{w} \perp {\btheta}, \|\mathbf{w}\| = r}  \langle {\btheta} \widetilde{s}, \bv \rangle d\mathbf{w} dr\\
& =\frac{\langle {\btheta} , \bv \rangle}{h^2}  \int_0^\infty r^{d-2} \phi'\left(\sqrt{\widetilde{s}^2+r^2}\right)\frac{\widetilde{s}}{\sqrt{\widetilde{s}^2+r^2}} dr\\
& = \frac{\langle {\btheta} , \bv \rangle}{h^2}  \int_0^\infty r^{d-2} \frac{\partial}{\partial \widetilde s} \phi\left(\sqrt{\widetilde{s}^2+r^2}\right) dr \\
 &=  \frac{\langle {\btheta} , \bv \rangle}{h^2} \widetilde{\phi}(\widetilde{s})\\
&=  \frac{\langle {\btheta} , \bv \rangle}{h^2} \widetilde{\phi}\left(\frac{s - \langle \bt, {\btheta} \rangle}{h}\right).
\end{align*}
For $d=2$ let $\mathbf{w}\perp\btheta$ with $\|\mathbf{w}\|=1$ and write $\bb=\btheta\widetilde{s}+r\mathbf{w}$ for $r\in\mathbb{R}$. Then
\begin{align*}
R(\partial_{\bv} \phi_{\bt,h})(s,{\btheta})
& = \frac 1{h^{2}\Vol(\mathbb{S}^{0})} \int_{-\infty}^\infty \frac{\phi'\big(\sqrt{\widetilde{s}^2+r^2}\big)}{\sqrt{\widetilde{s}^2+r^2}}   \langle {\btheta} \widetilde{s} + r\mathbf{w}, \bv \rangle  dr\\
& = \frac 1{h^{2}\Vol(\mathbb{S}^{0})} \int_{-\infty}^\infty \frac{\phi'\big(\sqrt{\widetilde{s}^2+r^2}\big)}{\sqrt{\widetilde{s}^2+r^2}} \langle {\btheta} \widetilde{s}, \bv \rangle  dr\\
& =\frac{\langle {\btheta} , \bv \rangle}{h^2}  \int_0^\infty  \phi'\left(\sqrt{\widetilde{s}^2+r^2}\right)\frac{\widetilde{s}}{\sqrt{\widetilde{s}^2+r^2}} dr,
\end{align*}
as $r\mapsto \frac{\phi'\left(\sqrt{\widetilde{s}^2+r^2}\right)}{\sqrt{\widetilde{s}^2+r^2}} $ is an even function. Now we can proceed similarly as in the case $d>2$.

If $d$ is odd, the proof of the representation of $A(\partial_{\bv} \phi_{\bt,h})(s,{\btheta})$ is completed by taking the $(d-1)$-th derivative with respect to the variable $s$. 
If $d$ is even, for any function $f:\mathbb{R}\rightarrow\mathbb{R}$, for any fixed $z\in\mathbb{R}$, and for $h>0$
\begin{align*}
	\mathcal{H}_d\Big(s \mapsto f\Big(\frac{s-z}{h}\Big)\Big)(u)
	&=\frac{1}{\pi} \text{p.v.}  \int_{-\infty}^{\infty} f\Big( \frac{s-z}{h}\Big) \frac{1}{u-s} ds \\
	&= \frac{1}{\pi}\lim_{\epsilon\rightarrow 0^+} 
	\int_{(-\infty, u-\epsilon] \cup [u+\epsilon, \infty) }  f\Big( \frac{s-z}{h}\Big) \frac{1}{u-s} ds \\
	&=\frac{1}{\pi} \lim_{\epsilon\rightarrow 0^+} 
	\int_{(-\infty, (u-z)/h-\epsilon] \cup [(u-z)/h+\epsilon, \infty) } f(s) \frac{1}{(u-z)/h -s} ds \\
	&= \big(\mathcal{H}_df\big) \Big(\frac{u-z}{h}\Big)\quad\tn{for }u\in\R,
\end{align*}
by substitution. That $\ca{H}_df$ exists is shown below for the choice $f=\widetilde{\phi}^{(d-1)}$. Hence, we obtain $A(\partial_{\bv}\phi_{\bt,h})(s,\btheta) = \langle \btheta, \bv \rangle h^{-d-1} (\mathcal{H}_d\widetilde \phi^{(d-1)})(\tfrac{s-\langle \bt, \btheta\rangle}{h})$ for $d$ even.\\

Next we prove $\|\widetilde{\phi}^{(k)}\|_\infty<\infty$ for $k=0,\hdots,d+1.$ The case $k=0$ is obvious. We use the chain rule for higher order 
derivatives given by Fa\`a di Bruno's formula
\begin{equation}
\frac{d^k}{dz^k} f_1(f_2(z)) = \sum_{(m_1,...,m_k) \in \mathcal{M}_k} \frac{k!}{m_1!...m_k!} f_1^{(m_1+...+m_k)}(f_2(z)) 
\prod_{j=1}^k \Big( \frac{f_2^{(j)}(z)}{j!} \Big)^{m_j} ,
\label{eq.FaaDiBruno}
\end{equation}
where $\mathcal{M}_k$ is the set of all $k$-tuples of non-negative integers satisfying $\sum_{j=1}^k j m_j = k$. 
Since $z\mapsto\phi(\sqrt{z^2+r^2})$ is a.e. $(k+1)$-times continuously differentiable, we can interchange the integral with
the $k$-fold differentiation for the variable $z$ provided that
\begin{align*}
 \int_0^\infty\Big| r^{d-2} \frac{\partial^{k+1}}{\partial z^{k+1}} \phi\left(\sqrt{z^2+r^2}\right)\Big| dr
\end{align*}
exists for all $k=1,\hdots,d+1$. Applying \eqref{eq.FaaDiBruno} with $f_1=  \phi$ and $f_2=\sqrt{\cdot^2+r^2}$ gives
\begin{align*}
	\frac{\partial^{k+1}}{\partial z^{k+1}} \phi\left(\sqrt{z^2+r^2}\right) 
	= \sum_{(m_1,...,m_{k+1}) \in \mathcal{M}_{k+1}} C_{m_1,\ldots,m_{k+1}}
	\phi^{(M)}\left(\sqrt{z^2+r^2}\right) \prod_{j=1}^{k+1} \big( f_2^{(j)}(z)\big)^{m_j}
\end{align*}
for suitable constants $C_{m_1,\ldots,m_{k+1}}$ and $M= \sum_{j=1}^{k+1} m_j.$ Applying the chain rule to $f_2^{(j)}$ yields
\begin{align*}
	f_2^{(j)}(z) = \sum_{\{\ell_j, k_j: \ell_j+2k_j =j\}} C_{\ell_j,k_j} z^{\ell_j} (z^2+r^2)^{1/2-\ell_j-k_j}
\end{align*}
for non-negative integers $\ell_j,k_l$ and suitable constants $C_{\ell_j,k_j}$. As $\phi$ is compactly supported, it remains to show that each of the functions
\begin{align}\label{u1}
	z\mapsto \int_0^{\sqrt{1-z^2}} r^{d-2} \phi^{(M)}\left(\sqrt{z^2+r^2}\right) |z|^{\sum_{j=1}^{k+1} \ell_j m_j}(z^2+r^2)^{M/2-\sum_{j=1}^{k+1} (\ell_j+k_j) m_j} dr
\end{align}
for $|z|\leq 1$ is uniformly bounded, where $\ell_j,k_j$ are arbitrary elements of the set $\{\ell_j, k_j: \ell_j+2k_j =j\},\;j=1,\hdots,k+1$. Notice that 
\[
 M/2-\sum_{j=1}^{k+1} (\ell_j+k_j) m_j=\sum_{j=1}^{k+1} (\frac{1}{2}-\ell_j-k_j) m_j<0.
\]
A uniform bound for the integral on the right hand side of \eqref{u1} can be found
easily when $z$ is bounded away from zero. We can thus assume that $|z|\leq\sqrt{1-z^2}$.
Splitting the integral  $\int_0^{\sqrt{1-z^2}}= \int_0^{|z|}+\int_{|z|}^{\sqrt{1-z^2}}$ and using that by Taylor expansion and Assumption \ref{ass:radial_symmetric},
\[
 \phi^{(j)}\left(\sqrt{z^2+r^2}\right)\lesssim (z^2+r^2)^{(3-j)/2}\text{ for }j=1,2\text{ and } \phi^{(M)}\lesssim1\text{ for }M\leq d+2
\]
as well as
$\max\{z^2,r^2\}\leq z^2+r^2\leq 2\max\{z^2,r^2\}$, we obtain an upper bound (up to some constant) for the integral on the right hand side of \eqref{u1} by
\begin{align*}
 & |z|^{\sum_{j=1}^{k+1} \ell_j m_j+M-2\sum_{j=1}^{k+1} (\ell_j+k_j) m_j+\max\{3-M,0\}}  \int_{0}^{|z|}r^{d-2}dr\\&+
 |z|^{\sum_{j=1}^{k+1}\ell_j m_j}\int_{|z|}^{\sqrt{1-z^2}} r^{d-2+\max\{3-M,0\}+M-2\sum_{j=1}^{k+1} (\ell_j+k_j) m_j}dr
\\
 \lesssim& |z|^{\sum_{j=1}^{k+1} \ell_j m_j+M-2\sum_{j=1}^{k+1} (\ell_j+k_j) m_j+d-1+\max\{3-M,0\}}+1.
\end{align*}
By the use of $\ell_j+2k_j=j$, $\sum_{j=1}^{k+1} jm_j ={k+1}$ and $k\leq d+1$, we find that this is bounded by
$ z^{-3+M+\max\{3-M,0\}}+1 $ which proves the result.

Next, we prove that $\mathcal{H}_d\widetilde{\phi}^{(d-1)}$ exists. Recall that $\|\widetilde{\phi}^{(d)}\|_\infty<\infty$  and consequently, $\widetilde{\phi}^{(d-1)}$ is Lipschitz continuous. For any 
Lipschitz continuous function $f$ with compact support,
\[
 \Big|\int_{-\infty}^{u-1}\frac{f(x)}{u-x}dx\Big| \vee \Big|\int_{u+1}^\infty\frac{f(x)}{u-x}dx\Big| \leq \|f\|_\infty\lambda(\tn{supp}f),
\]
where $\lambda(\tn{supp}f)$ denotes the Lebesgue measure of the support of $f$. Moreover,
\begin{align*}
 \lim_{\epsilon \rightarrow 0^+}\Big( \int_{u-1}^{u-\ve}\frac{f(x)}{u-x}dx +\int_{u+\ve}^{u+1}\frac{f(x)}{u-x} dx\Big)	
	= \lim_{\epsilon \rightarrow 0^+} \int_{\ve}^{1}\frac{f(u-x)-f(u+x)}{x}dx .
\end{align*} 
By the Lipschitz-continuity of $f$, $|f(u-x)-f(u+x)|\lesssim |x|$ such that the r.h.s. can be bounded by a constant that
does not depend on $u$. The result follows with $f=\widetilde{\phi}^{(d-1)}$. This proves assertion \textit{(i)} in the Lemma.

Finally, we prove \textit{(ii)}. As shown above, $\widetilde{\phi}^{(d-1)}$ is bounded. For odd dimension $d$ the claim therefore follows from substitution and the
compact support of $\widetilde{\phi}^{(d-1)}$. For $d$ even, substitution and the fact that the Hilbert transform $\mathcal{H}_d$ defines a 
bounded operator  $L^k(\mathbb{R})\rightarrow L^k(\mathbb{R})$ for all $1<k<\infty$ yield the required result.

\end{proof}

\begin{proof}[Proof of Lemma \ref{1.6}]
The existence of a uniform upper bound of $\sigma_{\bt,h,\bv}$ follows directly from the boundedness of $f_{S,\bTheta}$. The uniform lower bound of $f_\bTheta$ follows from Assumption \ref{p1}. The integrability of $\big(\ca{H}_d(\widetilde{\phi}^{(d-1)})(s)\big)^2$ is shown in the proof 
of Lemma \ref{lem.rad_symm_kernel_explicit} \textit{(ii)}. For the lower bound of $\sigma_{\bt,h,\bv}$ recall that
\bal
 \frac{f_{S,\bTheta}(\langle \bt,\btheta\rangle+hs,\btheta)}{f_\bTheta(\btheta)}
 &=f_{S|\bTheta}(\langle \bt,\btheta\rangle+hs,\btheta)
 = \int_{\langle  \bb,\btheta \rangle=\langle \bt,\btheta\rangle+hs  } f_{\bbeta}(\bb) d\mu_{d-1}(\bb).
\end{align*}
By Assumption \ref{ass:f_beta}, $f_\bbeta(\bb)\geq c_\bbeta>0$ for all $\bb\in[\mathbf a_1,\mathbf a_2]$ and $f_\bbeta$ is uniformly
continuous. Hence, there exists  $\delta>0$, 
which does not depend on $h$, such that $f_\bbeta$
is uniformly bounded from below in the ball $B_{\delta}(\bt)$ of radius $\delta$ around any $\bt\in[\mathbf a_1,\mathbf a_2]$,
say, $f_{\bbeta}(\bb)>c_{\bbeta}/2$ for all $\bb\in \bigcup_{\bt\in[\mathbf a_1,\mathbf a_2]}B_\delta(\bt)$.
 Define for $s^2<\delta^2/(dh^2)$
 \begin{align*} 
A_{\delta,\bt,h}:=\bigl\{\bb\in\R^d:\bb=\bt+hs\btheta+\rho_2\btheta_2^{\bot}+\hdots+\rho_d\btheta_d^\bot,\,\rho_j^2<\delta^2/d,\;j=2,\hdots,d
\bigr\},
 \end{align*}
 where $\btheta_2^\bot,\hdots,\btheta_d^\bot$ form an orthonormal basis of the orthogonal complement of $\rm span \{\btheta\}$.
 Clearly, $\mu_{d-1}(A_{\delta,\bt,h})=(2\delta)^{d-1}d^{(1-d)/2}>0$, and all $\bb\in A_{\delta,\bt,h}$ satisfy
 \begin{align*}
 \|\bt-\bb\|^2=(hs)^2+\rho_2^2+\hdots+\rho_d^2< \frac{\delta^2}{d}+\delta^2\frac{d-1}{d}=\delta^2 \quad\text{and}\quad \langle \bb ,\btheta \rangle=\langle \bt,\btheta\rangle+hs.
 \end{align*}
In particular, $A_{\delta,\bt,h}\subset B_{\delta}(\bt).$ Thus,
\begin{align*}
  \int_{\langle \bb,\btheta  \rangle=\langle \bt,\btheta\rangle+hs  } f_{\bbeta}(\bb) d\mu_{d-1}(\bb)\geq
  \int_{ A_{\delta,\bt,h}  } f_{\bbeta}(\bb) d\mu_{d-1}(\bb)\geq \frac{c_{\bbeta}}{2}\mu_{d-1}(A_{\delta,\bt,h})>0.
\end{align*}
Hence,
 $Rf_\bbeta(\langle \bt,\btheta\rangle+hs,\btheta)$ is uniformly bounded from below for all
 $\mathbf a_1+ h\leq \bt \leq \mathbf a_2-h$, $\btheta\in \mathbb{S}^{d-1}$ and $|s|<\delta/(\sqrt{d}h)$. Therefore,
\beq{qht3}
\sigma_{\bt,h,\bv}^2\gtrsim
\int_{-\delta/(\sqrt{d}h)}^{\delta/(\sqrt{d}h)}\big(\ca{H}_d(\widetilde{\phi}^{(d-1)})(s)\big)^2ds
\geq\int_{-\delta/(\sqrt{d}h_{\max})}^{\delta/(\sqrt{d}h_{\max})}\big(\ca{H}_d(\widetilde{\phi}^{(d-1)})(s)\big)^2ds,
\eeq
where the inequality holds uniformly over $\ca{T}$. 

In quantum homodyne tomography, Assumption 2' \textit{(iii)}
yields
\begin{align*}
  \int_{\langle \bb,\btheta  \rangle=\langle \bt,\btheta\rangle+hs  } f_{\bbeta}(\bb) d\mu_{d-1}(\bb)\geq c_\bbeta
\end{align*}
 for $s^2<\delta^2/(dh^2)$ if $\delta$ is sufficiently small. Hence, \eqref{qht3} holds in this case as well. 
  Furthermore, since $\ca{H}_d(\widetilde{\phi}^{(d-1)})\in L^2(\R),$ we obtain
\[
\sigma_{\bt,h,\bv}^2\gtrsim\int_\R\big(\ca{H}_d(\widetilde{\phi}^{(d-1)})(s)\big)^2ds
+o(1)\quad\text{for }n\rightarrow\infty.\]
If $\|\ca{H}_d(\widetilde{\phi}^{(d-1)})\|_2\neq0$  there exists $n_0=n_0(\delta,d,\phi)\in\mathbb{N}$ 
such that
\[
\sigma_{\bt,h,\bv}^2\gtrsim\frac{1}{2}\int_\R\big(\ca{H}_d(\widetilde{\phi}^{(d-1)})(s)\big)^2ds=\frac{1}{2}\big\|\widetilde{\phi}^{(d-1)}\big\|_2^2
\]
for all $n> n_0$.
The equality on the r.h.s. is trivial for odd dimensions $d$ and follows for even dimensions from the anti self-adjointness of the Hilbert transform and $\ca{H}_d\ca{H}_df=-f$. 
\end{proof}

\section{Proof of Theorem \ref{thm.main}}
If $\|\widetilde{\phi}^{(d-1)}\|_2 = 0$, Theorem \ref{thm.main} obviously holds. In the following we assume $\|\widetilde{\phi}^{(d-1)}\|_2\neq 0$ and define
\[
 a_{\bt,h,\bv}(s,\btheta):=h^{d+1} \Lambda(\partial_{\bv} \phi_{\bt,h})(s,{\btheta}) = \langle \btheta,\bv\rangle(\mathcal{H}_d\widetilde \phi^{(d-1)} )\Big(
 \frac{s - \langle \bt, {\btheta} \rangle}{h}\Big),
\]
where the equality follows from Lemma \ref{lem.rad_symm_kernel_explicit} \textit{(i)}.

\subsection{Controlling the effect of density estimation in the test statistic}

\begin{thm2}\label{1.16}
Under the assumptions of Theorem \ref{thm.main},
\[
\sup_{(\bt,h,\bv)\in \ca{T}_n} \beta_h\sqrt{n}\frac{\big||\widehat T_{\bt,h,\bv}-\E[T_{\bt,h,\bv}]|
-| T_{\bt,h,\bv}-\E[T_{\bt,h,\bv}]|\big|}
  	{\sigma_{\bt,h,\bv}}
 =o_{\mathbb{P}}(1), \ \text{as} \ n\rightarrow\infty.
 \]
\end{thm2}

\begin{proof}
By the triangle inequality
  \begin{align*}
  \big||\widehat T_{\bt,h,\bv}-\E[T_{\bt,h,\bv}]|-| T_{\bt,h,\bv}-\E[T_{\bt,h,\bv}]|\big|
  \leq U_{\bt, h, \bv} + V_{\bt, h, \bv}
   \end{align*}
   with $U_{\bt, h, \bv}:=  |\widehat T_{\bt,h,\bv}-T_{\bt,h,\bv}-\E[\widehat T_{\bt,h,\bv}-T_{\bt,h,\bv}]|$ and $V_{\bt, h, \bv}:=
  | \E[\widehat T_{\bt,h,\bv}-T_{\bt,h,\bv}]|.$  We first bound  $V_{\bt, h, \bv}$ using
\[
V_{\bt, h, \bv} = \Big|\frac{1}{\sqrt{h}}\int_{\mathbb{S}^{d-1}}\int_{\R}a_{\bt,h,\bv}(s,\btheta)\E
\Big[\frac{1}{\widetilde f_\bTheta(\btheta)}-\frac{1}{f_\bTheta
 (\btheta)}\Big]f_{S,\bTheta}(s,\btheta)dsd\btheta\Big|
\]
and
\begin{align}\label{1.11}
\int_{\R}\big|a_{\bt,h,\bv}(s,\btheta)f_{S,\bTheta}(s,\btheta)\big|ds
&\lesssim{h}\int_{\R}\big|(\mathcal{H}_d\widetilde \phi^{(d-1)} )(s)\big|
	f_{S,\bTheta}(hs+\langle \bt,\btheta\rangle,\btheta)ds \lesssim h \log( h)^2.
\end{align}
The last inequality follows for odd dimension $d$ by the boundedness of $f_{S,\bTheta}$ and the integrability of $\widetilde \phi^{(d-1)}.$
For even dimension, recall that $\mathcal{H}_d\widetilde \phi^{(d-1)}$ is bounded as shown in the  
proof of Lemma \ref{lem.rad_symm_kernel_explicit}. Notice that
\[
 \int_{2}^{4/h^2}\frac{\big|(\mathcal{H}_d\widetilde \phi^{(d-1)} )(s)\big|}{\log(s)^2}\log(s)^2
	f_{S,\bTheta}(hs+\langle \bt,\btheta\rangle,\btheta)ds\lesssim\log( h)^2
\]
by $ |(\mathcal{H}_d\widetilde{\phi}^{(d-1)})(s)|\lesssim(1+s^2)^{-1/2}$ (which holds for any function with compact support
and bounded Hilbert transform) and the integrability of 
$(1+s^2)^{-1/2}\log(s)^{-2}$ for $s\geq 2$. For the remainder, we find
\[
 \int_{4/h^2}^\infty\frac{\big|(\mathcal{H}_d\widetilde \phi^{(d-1)} )(s)\big|}{\log(s)^2}\frac{\log(s)^2}{\log(hs+\langle \bt,\btheta\rangle)^2}
 \log(hs+\langle \bt,\btheta\rangle)^2
	f_{S,\bTheta}(hs+\langle \bt,\btheta\rangle,\btheta)ds\lesssim1
\]
by the boundedness of $s\mapsto\log(|s|)^2f_{S,\bTheta}(s,\btheta)$
for all $|s|\geq 2,\btheta\in \mathbb{S}^{d-1},$  and
\[
 \frac{\log(s)}{\log(hs+\langle \bt,\btheta\rangle)}\leq \frac{\log(s)}{\log(hs/2)}= \frac{\log(s)}{\log(h/2)+\log(s)}\leq2,
\]
as $\log(h/2)\geq -\log(s)/2$ for $s\geq 4/h^2.$ A similar argument can be used to bound the integral $ \int_{-\infty}^{-2}|a_{\bt,h,\bv}(s,\btheta)f_{S,\bTheta}(s,\btheta)|ds.$ Applying Lemma \ref{g3} with bandwidth $h_* = \log(n)^{7/(d-1)} n^{-1/( d-1)}$ gives
  \beq{g5}
 \sup_{(\bt,h,\bv)\in \ca{T}_n}V_{\bt, h , \bv}\lesssim\log(h_{\tn{max}})^2
 \sqrt{h_{\tn{max}}}\log(n)^{7\gamma/(d-1)}n^{-\gamma/( d-1)}.
\eeq
Next, we prove $\rho_n:=\mathbb{P}(\sup_{(\bt,h,\bv)\in \ca{T}_n} U_{\bt, h , \bv} \geq \delta_n )\rightarrow 0$ as $n\rightarrow \infty,$ where $\delta_n :={(n\log(n))}^{-1/2}$. If for some positive constant $c$ 
\begin{align*}
	A_n :=\Big\{ (S_i, \bTheta_i)_{i=n+1, \ldots, 2n} : \sup_{\btheta\in \mathbb{S}^{d-1}}\big|\widetilde f_\bTheta(\btheta) -\mathbb{E}[\widetilde f_\bTheta(\btheta)]\big|\leq c\sqrt{\tfrac{\log n}{nh_*^{d-1}}} \Big\},
\end{align*}
then by Lemma \ref{g22}
\begin{align*}
 \rho_n\leq &\;\mathbb{E}\Big[\mathbb{P}\Big(\sup_{(\bt,h,\bv)\in\ca{T}_n} U_{\bt, h , \bv}\geq \delta_n\;\big|\; (S_i, \bTheta_i)_{i=n+1, \ldots, 2n} \Big) \mathbbm{1} (A_n)\Big] + \mathbb{P}(A_n^c)\\
 \leq & \sum_{(\bt,h,\bv)\in\ca{T}_n} \mathbb{E}\Big[\mathbb{P}\Big( U_{\bt, h , \bv}\geq \delta_n\;\big|\; (S_i, \bTheta_i)_{i=n+1, \ldots, 2n} \Big) \mathbbm{1} (A_n)\Big] + o(1)
\end{align*}
for sufficiently large $c.$ Now we apply Bernstein's inequality to
\begin{align*}
 U_{\bt,h,\bv}
 &=\Big| \sum_{i=1}^n\Big\{\frac{1}{n\sqrt{h}}
 a_{\bt,h,\bv}(S_i,\bTheta_i)\Big(\tfrac{1}{\widetilde f_\bTheta(\bTheta_i)}-\tfrac{1}{f_\bTheta(\bTheta_i)}\Big)
 -\frac{1}{n}\E[\widehat T_{\bt,h,\bv}-T_{\bt,h,\bv}]\Big\} \Big|.
 \end{align*}
 By Lemma \ref{lem.rad_symm_kernel_explicit} \textit{(i)}, $|a_{\bt,h,\bv}(s,\btheta)|$ can be bounded by a constant uniformly over $(s,\btheta)\in \R\times \mathbb{S}^{d-1}.$ Moreover,
\[
 \Big|\frac{1}{\widetilde f_\bTheta(\btheta )}-\frac{1}{f_\bTheta(\btheta)}
 \Big|\leq \frac{  \big|\widetilde f_\bTheta(\btheta)-\E[\widetilde  f_\bTheta(\btheta)] \big|
 + \big|\E[\widetilde f_\bTheta(\btheta)]-f_\bTheta(\btheta) \big|}
 {\widetilde f_\bTheta(\btheta)f_\bTheta(\btheta)} .
\]
The inequality $\widetilde f_\bTheta\geq \log(n)^{-1}$, the uniform lower bound of $f_\bTheta$,
 $\widetilde f_\bTheta=\widehat f_\bTheta$ almost surely for $n$ sufficiently large, Lemma \ref{g22}, and the definition of $h_*$ imply that each summand in $U_{\bt,h,\bv}$ is bounded on $A_n$ by
 \begin{align*}
 &\leq C\frac{\log(n)}{n\sqrt{h}}\Big(\sqrt{\frac{\log(n)}{nh_*^{d-1}}}+h_*^\gamma\Big)\leq C_1
 \frac{1}{n\sqrt{h_{\tn{min}}}\log(n)^{2}}
\end{align*}
for some constants $C,C_1>0$.
By a change of variables in the integral for the variable $s$, the uniform boundedness of $f_{S,\bTheta}$, and the integrability of $a_{\bt,h,\bv}^2$
  as shown in Lemma \ref{lem.rad_symm_kernel_explicit} \textit{(i)}, we find for the conditional variance with a similar argument as above
\bal
  \Var \big (U_{\bt,h,\bv} \, | \, (S_i, \bTheta_i)_{i=n+1, \ldots, 2n} \big)
&\leq C \frac{1}{n^2}\sup_{\btheta\in \mathbb{S}^{d-1}}\Big(
 \tfrac{1}{\widetilde f_\bTheta(\btheta)}-\tfrac{1}{f_\bTheta(\btheta)}\Big)^2
 \leq C_2 n^{-2}\log(n)^{-4}
\end{align*}
with some constants $C,C_2>0$. Bernstein's inequality yields
\bal
 \rho_n\lesssim& |\ca{T}_n|\exp\Big(-\frac{\delta_n^2/2}{C_2n^{-1} \log(n)^{-4} 
  +\frac{C_1\delta_n}{3n\sqrt{h_{\tn{min}}}\log(n)^{2}}}\Big)+o(1)\\=&
  |\ca{T}_n|\exp\Big(-\frac{(n\log(n))^{-1}/2}{C_2n^{-1} \log(n)^{-4} 
  +\frac{C_1}{3n^{3/2}\sqrt{h_{\tn{min}}}\log(n)^{5/2}}}\Big)+o(1)=o(1),
\end{align*}
as $h_{\tn{min}}\geq n^{-1}$. Finally, the claim follows from $\beta_h\lesssim \frac{\sqrt{\log(n)}}{\log\log(n)}$ and the boundedness from below of $\sigma_{\bt,h,\bv}$ shown in Lemma \ref{1.6}.  
 \end{proof}
 
\subsection{Approximation of the limit statistic}
Define the process
 \begin{align*}
 X_{\bt,h,\bv} =  h^{-1/2}\int_{\mathbb{S}^{d-1}}\int_{\mathbb{R}}  \langle {\btheta} ,
 \bv \rangle(\mathcal{H}_d\widetilde \phi^{(d-1)} )\left(\frac{s - \langle \bt, {\btheta} 
 \rangle}{h}\right) \frac{\sqrt{ f_{S,\bTheta}(s  ,\btheta)}}{ f_{\bTheta}(\btheta)} W(dsd\btheta).
\end{align*}
Note that $X_{\bt,h,\bv}$ corresponds to the process $\widehat X_{\bt,h,\bv}$ where the density estimators have been replaced
by the true densities.  
The proof of Theorem \ref{thm.main} relies on a recently obtained Gaussian approximation result which is reproduced here for convenience. 

\begin{thm2}[\cite{Chernozhukov2017}, Proposition 2.1]\label{1}
 Let $\mathbf{X}_1,\hdots,\mathbf{X}_n$ be independent random vectors in $\R^{2p}$ with  $\E[\mathrm{X}_{i,j}]=0$ and $\E[\mathrm{X}_{i,j}^2]<\infty$
 for $i=1,\hdots,n,\;j=1,\hdots,2p$. Moreover, let $\mathbf{Y}_1,\hdots,\mathbf{Y}_n$ be independent random vectors in $\R^{2p}$ 
 with $\mathbf{Y}_i\sim N(\mathbf 0,\E[\mathbf{X}_i\mathbf{X}_i^\top]),\;i=1,\hdots,n$. Let $b,q>0$ be some constants and let $B_n\geq1$ be a sequence of constants,
 possibly growing to infinity as $n\rightarrow\infty$. Denote further by $\ca{A}_{2p}'$  the set of all hyperrectangles in
 $\R^{2p}$ of the form $A=\big\{\bs x\in\R^{2p}:\bs a\leq\bs x \leq \bs b\big\}$
for $-\infty\leq \bs a\leq\bs b\leq\infty$. Assume that

\begin{tabular}{ll}
(i) & \  $n^{-1}\sum_{i=1}^n\E[\mathrm{X}_{i,j}^2]\geq b$ for all $1\leq j\leq 2p$; \\[0.3cm] 
(ii) & \  $n^{-1}\sum_{i=1}^n\E[|\mathrm{X}_{i,j}|^{2+k}]\leq B_n^k$ for all $1\leq j\leq 2p$ and $k=1,2$; \\[0.3cm] 
(iii) & \  $\E\big[\big(\max_{1\leq j\leq 2p}|\mathrm{X}_{i,j}|/B_n\big)^q\big]\leq 2$ for all $i=1,\hdots,n$\\
\end{tabular}

and define
\[
 D_n^{(1)}:=\Big(\frac{B_n^2\log^7(2pn)}{n}\Big)^\frac{1}{6},\quad D_{n,q}^{(2)}:=\Big(\frac{B_n^2\log^3(2pn)}{n^{1-2/q}}\Big)^\frac{1}{3}.
\]
Then there exists a constant $C$ only depending on $b$ and $q,$ such that
\[
 \sup_{A\in\ca{A}_{2p}'}\Big|\P\Big( \frac{1}{\sqrt{n}}\sum_{i=1}^n\mathbf{X}_i\in A\Big)-\P\Big(\frac{1}{\sqrt{n}}\sum_{i=1}^n\mathbf{Y}_i\in A\Big)\Big|
 \leq C(D_n^{(1)}+D_{n,q}^{(2)}).
\]
\end{thm2}

\begin{thm2}\label{1.15}
Under the assumptions of Theorem \ref{thm.main},
 \bal
 &\Big( \beta_h\Bigl(\sqrt{n} \frac{|T_{\bt,h,\bv}-E[T_{\bt,h,\bv}]|}{{\sigma}_{\bt,h,\bv}} - \alpha_h\Bigr)
  \Big)_{(\bt,h,\bv)\in\ca{T}_n} \leftrightarrow \Big( \beta_h\Bigl( \frac{| X_{\bt,h,\bv}|}{{\sigma}_{\bt,h,\bv}} - \alpha_h\Bigr)
  \Big)_{(\bt,h,\bv)\in\ca{T}_n}.
 \end{align*}
\end{thm2}
\begin{proof}
To take absolute values into account, we introduce the set
\beq{1.13}
 \ca{T}_n':=\ca{T}_n\cup\big\{(\bt,h,-\bv):(\bt,h,\bv)\in\ca{T}_n\big\}=:\big\{(\bt_j,h_j,\bv_j):j=1,\hdots,2p\big\}.
\eeq
Moreover, for $i=1,\hdots,n,$ let $\mathbf{X}_i
 :=( \mathrm{X}_{i,1},\hdots, \mathrm{X}_{i,2p})^\top$ with
\[
 \mathrm{X}_{i,j}:=\Upsilon_j(S_i,\bTheta_i)-\E[\Upsilon_j(S_i,\bTheta_i)],\ \  \text{and} 
 \ \ \Upsilon_j(s,\btheta):=\frac{a_{\bt_j,h_j,\bv_j}(s, \btheta)}{\sigma_{\bt_j,h_j,\bv_j}\sqrt{h}_j f_{\bTheta}(\btheta)}, \ \ \text{for} \ j=1,\hdots,2p.
\]
Notice that $\sum_{i=1}^nX_{i,j}=n\sigma_{\bt_j,h_j,\bv_j}^{-1}(T_{\bt_j,h_j,\bv_j}-\E[T_{\bt_j,h_j,\bv_j}])$. In a first step, we show that for $\mathbf{Z}\sim N(\mathbf 0,\E[ \mathbf{X}_1 \mathbf{X}_1^\top]),$
\begin{align}
	\sup_{A\in\ca{A}_{2p}'}\Big|\P\Big( \frac{1}{\sqrt{n}}\sum_{i=1}^n\mathbf{X}_i\in A\Big)-\P\Big(\mathbf{Z}\in A\Big)\Big|
 \rightarrow 0.
 \label{eq.to_prove_by_CCK}
\end{align}

Observe that by \eqref{1.11} and the uniform lower bound of $\sigma_{\bt_j,h_j,\bv_j}$ established in Lemma \ref{1.6},
\begin{align}\begin{split}\label{11}
 \big|\E[\Upsilon_j(S_1,\bTheta_1)]\big|
	\lesssim\log(h_j)^2\sqrt{h_j}.\end{split}
\end{align}
Because of this bound, the expectation $\E[\Upsilon_j(S_1,\bTheta_1)]$ in the definition of $X_{i,j}$ will only provide terms of negligible order if we check the conditions of Theorem \ref{1}. In particular,  condition \textit{(i)} is a direct consequence of the definition of $\sigma_{\bt_j,h_j,\bv_j}$ in \eqref{eq.sigma_def}. By Lemma \ref{lem.rad_symm_kernel_explicit} \textit{(ii)}, the uniform lower bound
of $\sigma_{\bt_j,h_j,\bv_j}$ in Lemma \ref{1.6}, the lower bound of $f_\bTheta$, and the boundedness of $f_{S,\bTheta}$, we find for $k=1,2,$ $\max_{j=1,\hdots,2p}\E[|\Upsilon_j(S_1,\bTheta_1)|^{2+k}] \lesssim h_{\min}^{-k/2}$. This implies condition \textit{(ii)} of Theorem \ref{1} with  $B_n\asymp h_{\tn{min}}^{-1/2}$. 

Lemma \ref{lem.rad_symm_kernel_explicit} \textit{(i)} implies $\max_{j=1,\hdots,2 p}|\mathrm{X}_{i,j}|\lesssim h_{\tn{min}}^{-1/2}$ which proves assertion \textit{(iii)} in the theorem for any $q>0 $ and $B_n=ch_{\tn{min}}^{-1/2}$, provided that the constant $c$ is
chosen sufficiently large. Consequently, Theorem \ref{1} applies and for $\mathbf{Z}\sim N(\mathbf 0,\E[\mathbf{X}_1\mathbf{X}_1^\top])$
\[
 \sup_{A\in\ca{A}_{2p}'}\Big|\P\Big( \frac{1}{\sqrt{n}}\sum_{i=1}^n\mathbf{X}_i\in A\Big)-\P\Big(\mathbf{Z}\in A\Big)\Big|
 \lesssim \Big(\frac{h_{\tn{min}}^{-1}\log^7(n)}{n}\Big)^\frac{1}{6}+\Big(\frac{h_{\tn{min}}^{-1}\log^3(n)}{n^{1-2/q}}\Big)^\frac{1}{3}
 \rightarrow 0
\]
choosing $q$ large enough and using Assumption \ref{1.20}. 

In a second step, we show that there exists a version of the Gaussian noise $W$ such that
\[
  \max_{j=1,\hdots,2p}\big|\mathrm{Z}_j-W(\Upsilon_j\sqrt{f_{S,\bTheta}})\big|
  =O_{\mathbb{P}}\big(|\log(h_{\tn{max}})|^{3}\sqrt{h_{\tn{max}}}\big).
\]
To this end, we define the Gaussian process $(\widetilde{W}(f))_{f\in L^\infty(\ca{Z})}$ indexed by $L^\infty(\ca{Z})$ as the centered Gaussian process with covariance function
\begin{align*}
 &\int_{\ca{Z}} f_1(s,\btheta)f_2(s,\btheta)f_{S,\bTheta}(s,\btheta)dsd\btheta- \int_{\ca{Z}}
 f_1(s,\btheta)f_{S,\bTheta}(s,\btheta)dsd\btheta \int_{\ca{Z}} f_2(s,\btheta)f_{S,\bTheta}(s,\btheta)dsd\btheta.
 \end{align*}
Thus, there exists a version of $\widetilde{W}(f)$ such that $\mathbf{Z}=\big(\widetilde{W}(\Upsilon_1),\hdots,\widetilde{W}(\Upsilon_{2p})\big)^\top.$ Recall that $(W(f))_{f\in L^2(\nu)}$ defines a Gaussian process whose mean and 
covariance functions are $0$ and $\int_{\mathbb{S}^{d-1}}\int_{\R}  f_1(s,\btheta)f_2(s,\btheta)dsd\btheta$, respectively. Basic calculations show that there exists a version of $W$ such that
\[
 \widetilde{W}(f)=W(f\sqrt{f_{S,\bTheta}})-\int_{\mathbb{S}^{d-1}}\int_{\R}f(s,\btheta)
 f_{S,\bTheta}(s,\btheta)dsd\btheta \;W(\sqrt{f_{S,\bTheta}}).
\]
Hence,
 \[
  \big|\widetilde{W}(\Upsilon_j)-W(\Upsilon_j\sqrt{f_{S,\bTheta}})\big|=\Big|\int_{\mathbb{S}^{d-1}}\int_{\R}
  \Upsilon_j(s,\btheta){f_{S,\bTheta}(s,\btheta)}dsd\btheta\; W(\sqrt{f_{S,\bTheta}})\Big|.
 \]
By \eqref{11}, $|\int_{\mathbb{S}^{d-1}}\int_{\R} \Upsilon_j(s,\btheta){f_{S,\bTheta}(s,\btheta)}dsd\btheta|\lesssim \log(h_j)^2\sqrt{h_j}.$ Furthermore, $W(\sqrt{f_{S,\bTheta}}) \sim N(0,1)$ which implies that $\E[  \max_{j=1,\hdots,2p} |\widetilde{W}(\Upsilon_j)-W(\Upsilon_j\sqrt{f_{S,\bTheta}})|]\lesssim \log(h_{\tn{max}})^{2}\sqrt{h_{\tn{max}}}.$ An application of Markov's inequality finally proves
 \[
   \max_{j=1,\hdots,2p}\big|\widetilde{W}(\Upsilon_j)-W(\Upsilon_j\sqrt{f_{S,\bTheta}})\big|=O_{\mathbb{P}}\big(|\log(h_{\tn{max}})|^{3}\sqrt{h_{\tn{max}}}\big).
 \]
The insertion of the bandwidth normalization terms has no influence on the convergence as
translation and multiplication preserve the interval structure.
\end{proof}

\subsection{Boundedness of the limit statistic}\label{1.14}
Recall from Lemma \ref{1.6} that $\sigma_{\bt,h,\bv}$ is uniformly bounded from below whenever $h$ is sufficiently small, where
the upper bound $\ol h$ for $h$ only depends on $\phi$, $d$ and $f_\bbeta.$ We therefore introduce the set 
\[
 \ol{\ca{T}}:=\big\{(\bt,h,\bv)\in\ca T : h\leq \ol h\big\}.
\]

\begin{thm2}
\label{thm2.bd_limit_stat}
Under the assumptions of Theorem \ref{thm.main}, $\sup_{(\bt,h,\bv)\in \ol{\mathcal{T}}} \beta_h (|X_{\bt,h,\bv}|/
\sigma_{\bt,h,\bv} - \alpha_h)$ is almost surely bounded.
\end{thm2}

\begin{proof}
We apply Theorem 6.1  in  \cite{duembgen2001} to the non-normalized process 
  \begin{align*}
    Y_{\bt,h,\bv }:=\frac{1}{\sigma_{\bt,h,\bv}}\int_{\mathbb{S}^{d-1}}\int_{\mathbb{R}}\langle
    \btheta, \bv\rangle (\mathcal{H}_d\widetilde \phi^{(d-1)})\Big(\frac{s-\langle \bt, \btheta \rangle}{h}\Big)
    \frac{\sqrt{f_{S,\bTheta}(s,{\btheta})}}{{f_{\bTheta}({\btheta})}} W(dsd\btheta).
  \end{align*}
Denote by $\rho$ the canonical pseudo-metric on $\ca{T}$, induced by $Y_{\bt,h,\bv }$
  \begin{align*}
    \rho:\begin{cases}
        \ol{\ca{T}}\times \ol{\ca{T}} \rightarrow \R_0^+\\
        \bigl((\bt,h,\bv),(\bt',h',\bv')\bigr)\mapsto \Bigl(\E\big|Y_{\bt,h,\bv }-Y_{\bt',h',\bv' }\big|^2\Bigr)^{\frac{1}{2}}.
      \end{cases}
  \end{align*}
In the next step, we prove   
  \begin{align}
    \rho\bigl(
       (\bt,h,\bv),(\bt',h',\bv')\bigr)\lesssim\big(\|\bv-\bv'\|^2+{\|\bt-\bt'\|}+{|h-h'|}\big)^{1/2}.
       \label{eq.rho_to_show}
  \end{align} 
By the uniform lower and upper bound for $\sigma_{\bt,h,\bv},$
    \begin{align}\begin{split}\label{u2}
 \big|Y_{\bt,h,\bv }-Y_{\bt',h',\bv' }\big|^2\lesssim&
 \big|\sigma_{\bt,h,\bv}Y_{\bt,h,\bv }-\sigma_{\bt',h',\bv'}Y_{\bt',h',\bv' }\big|^2+
  \big|Y_{\bt',h',\bv' }\big|^2\big| \sigma_{\bt',h',\bv'} - \sigma_{\bt,h,\bv}\big|^2.
  \end{split}\end{align}
In order to bound the expectation of the first term on the right hand side of \eqref{u2}, we use the boundedness properties of $f_\bTheta$ and $f_{S,\bTheta}$ 
  \begin{align*}
    &\E\big|\sigma_{\bt,h,\bv}Y_{\bt,h,\bv }-\sigma_{\bt',h',\bv'}Y_{\bt',h',\bv' }\big|^2\\
   \lesssim  &\int_{\mathbb{S}^{d-1}}\int_{\R} \biggl|\langle\btheta,\bv-\bv'\rangle(\mathcal{H}_d\widetilde \phi^{(d-1)})\left(\frac{s - \langle \bt, {\btheta} \rangle}{h}\right)\biggr|^2\,ds\,d\btheta   \\
    &+\int_{\mathbb{S}^{d-1}}\int_{\R} \biggl|(\mathcal{H}_d\widetilde \phi^{(d-1)})\left(\frac{s - \langle \bt, {\btheta} \rangle}{h}\right)-(\mathcal{H}_d\widetilde \phi^{(d-1)})\left(\frac{s - \langle \bt', {\btheta} \rangle}{h}\right)\biggr|^2\,ds\,d\btheta \\
    &+\int_{\mathbb{S}^{d-1}}\int_{\R} \biggl|(\mathcal{H}_d\widetilde \phi^{(d-1)})\left(\frac{s - \langle \bt', {\btheta} \rangle}{h}\right)-(\mathcal{H}_d\widetilde \phi^{(d-1)})\left(\frac{s - \langle \bt', {\btheta} \rangle}{h'}\right)\biggr|^2\,ds\,d\btheta \\
    %
        %
    =:&\;\rho_1+\rho_2+\rho_3.
  \end{align*}
We show that the three terms can be bounded by the squared  r.h.s. in \eqref{eq.rho_to_show}. From Lemma \ref{lem.rad_symm_kernel_explicit} \textit{(ii)} we obtain $\rho_1\lesssim h \|\bv-\bv'\|^2.$ For $\rho_2,$ we distinguish between the cases $\|\bt-\bt'\|> h$ and  $\|\bt-\bt'\|\leq h$. In the first case, the triangle inequality and Lemma \ref{lem.rad_symm_kernel_explicit} \textit{(ii)} give $\rho_2\lesssim h<\|\bt-\bt'\|.$ In the second case, the integral w.r.t. the variable $s$ in $\rho_2$ is equal to
\begin{align*}
2h\int_{\mathbb{R}}\big((\mathcal{H}_d\widetilde \phi^{(d-1)})(s)\big)^2ds-2h
\int_{\mathbb{R}}(\mathcal{H}_d\widetilde \phi^{(d-1)})(s)(\mathcal{H}_d\widetilde \phi^{(d-1)})
\Big(s+\frac{\langle \bt,\btheta\rangle-\langle\bt',\btheta\rangle}{h'}\Big)ds.
\end{align*}
Recall that the Hilbert transform and the differentiation operator commute. Therefore, using the differentiability of 
$\widetilde \phi^{(d-1)}$ which has been shown in the proof of Lemma \ref{lem.rad_symm_kernel_explicit}, we find that
 \[
h(\mathcal{H}_d\widetilde \phi^{(d-1)})
\Big(s+\frac{\langle \bt,\btheta\rangle-\langle\bt',\btheta\rangle}{h}\Big)=
h(\mathcal{H}_d\widetilde \phi^{(d-1)})(s)+\langle \bt-\bt',\btheta\rangle(\mathcal{H}_d\widetilde \phi^{(d)})(\xi)
 \]
for some $\xi$ between $s$ and $ s+\tfrac{\langle \bt-\bt',\btheta\rangle}{h}$. Hence, 
\bal
\rho_2\lesssim\|\bt-\bt'\|\int_{\mathbb{R}}(\mathcal{H}_d\widetilde \phi^{(d-1)})(s)(\mathcal{H}_d\widetilde \phi^{(d)})(\xi)ds
\lesssim\|\bt-\bt'\|\int_{\mathbb{R}}\big(1+(s/2)^2\big)^{-1}ds\lesssim\|\bt-\bt'\|.
\end{align*}
Here, we used the boundedness of $\mathcal{H}_d\widetilde \phi^{(d-1)}$ shown in Lemma 
\ref{lem.rad_symm_kernel_explicit} \textit{(i)}, $|\xi|\geq |s|/2$ for all $|s|\geq2$ in the case $\|\bt-\bt'\|\leq h$, and
$ |(\mathcal{H}_d\widetilde{\phi}^{(d)})(u)|\lesssim(1+u^2)^{-1}$. The latter is obvious for $d$ odd. For $d$ even we find that 
 $\widetilde{\phi}$ is an odd function and therefore $\widetilde{\phi}^{(d)}$
is an odd function. Moreover, for any odd function $f$ such that $\ca{H}_df$ exists, we have, up to some constant,
\begin{align*}
	(\mathcal{H}_df)(u) =\int_{-\infty}^0\frac{f(x)}{u-x}dx+ \int_0^\infty \frac{f(x)}{u-x} dx=
	\int_0^\infty {f(x)}\Big(\frac{-1}{u+x}+\frac{1}{u-x}\Big)dx=\int_0^\infty\frac{2xf(x)}{u^2-x^2}dx.
\end{align*}
Here all integrals are understood in the principal value sense. Finally, a similarly argument as in the proof of Lemma \ref{lem.rad_symm_kernel_explicit} shows that $\ca{H}_d\widetilde\phi^{(d)}$ exists and that 
 $ |(\mathcal{H}_d\widetilde{\phi}^{(d)})(u)|\lesssim(1+u^2)^{-1}$ by the compact support of $\widetilde\phi$.

We finally turn to $\rho_3$. Without loss of generality, we may assume $h\leq h'$. We study the cases $h\leq h'/2$ and 
$h>h'/2$, separately. In the first case, the triangle inequality and Lemma \ref{lem.rad_symm_kernel_explicit} \textit{(ii)} give $\rho_3\lesssim h+h'\lesssim |h'-h|.$ If $h'/2<h\leq h',$ we argue as for the upper bound of $\rho_2$ and find
\bal
\rho_3\lesssim(h'-h)\int_{\mathbb{R}}\big((\mathcal{H}_d\widetilde \phi^{(d-1)})(s)\big)^2ds-2h\big(-1+\tfrac{h}{h'}\big)
\int_{\mathbb{R}}(\mathcal{H}_d\widetilde \phi^{(d-1)})(s)s(\mathcal{H}_d\widetilde \phi^{(d)})(\xi)ds
\end{align*}
for some $\xi$ between $s$ and $\tfrac{h}{h'} s$.
Recall that $ |(\mathcal{H}_d\widetilde{\phi}^{(d)})(u)|\lesssim(1+u^2)^{-1}$ and 
$ |(\mathcal{H}_d\widetilde{\phi}^{(d-1)})(u)|\lesssim(1+u^2)^{-1/2}$. Thus,
\begin{align*}
 \int_{\mathbb{R}}\big|(\mathcal{H}_d\widetilde \phi^{(d-1)})(s)s(\mathcal{H}_d\widetilde \phi^{(d)})(\xi)\big|ds
 \lesssim \int_{\mathbb{R}}\big(1+s^2\big)^{-1/2}|s|\big(1+(s/2)^2\big)^{-1}ds<\infty,
\end{align*}
where we used that $|\xi|\geq \frac{ h}{h'}|s|>|s|/2$. Finally,  $|h^2/h'-h|\leq h'-h$ implies $\rho_3\lesssim | h'-h |.$

For the second term on the right hand side of \eqref{u2} we use that $\Var(Y_{\bt,h,\bv})=h$. Using again the uniform boundedness from above and below of $\sigma_{\bt,h,\bv}$ and the fact that $|\sqrt{x}-\sqrt{y}|^2 \leq |x-y|$ for all $x,y \geq 0$ gives
  \begin{align*}
  &\E\big|Y_{\bt',h',\bv' }\big|^2 |\sigma_{\bt,h,\bv}-\sigma_{\bt',h',\bv'}|^2\\
  =&h'\Big|\frac{(\E|\sigma_{\bt,h,\bv}Y_{\bt,h,\bv}|^2)^{1/2}}{\sqrt{h}}-\frac{(\E|\sigma_{\bt',h',\bv'}Y_{\bt',h',\bv'}|^2)^{1/2}}{\sqrt{h'}} \Big|^2\\
  \lesssim& \E\big|\sigma_{\bt,h,\bv}Y_{\bt,h,\bv}\big|^2\Big|\frac{\sqrt{h'}}{\sqrt{h}}-1\Big|^2+
  \big|(\E|\sigma_{\bt,h,\bv}Y_{\bt,h,\bv}|^2)^{1/2}-(\E|\sigma_{\bt',h',\bv'}Y_{\bt',h',\bv'}|^2)^{1/2}\big|^2\\
  \lesssim& |h-h'| + \E\big|\sigma_{\bt,h,\bv}Y_{\bt,h,\bv }-\sigma_{\bt',h',\bv'}Y_{\bt',h',\bv' }\big|^2.
\end{align*}
For the second term in the last line, the bounds above apply which completes the proof for \eqref{eq.rho_to_show}.

Set $\sigma^2(\bt,h,\bv):=h$ and $\widetilde{\rho}\bigl((\bt,h,\bv),(\bt',h',\bv')\bigr):=\big(\|\bv-\bv'\|^2+{\|\bt-\bt'\|}+{|h-h'|}\big)^{1/2},$ such that
  \begin{align*}
    \sigma^2(\bt,h,\bv)-\sigma^2(\bt',h',\bv')\leq \widetilde{\rho}^2\bigl((\bt,h,\bv),(\bt',h',\bv')\bigr)
    \ \ \text{for all} \  ((\bt,h,\bv),(\bt',h',\bv')\bigr)\in\ol{\ca{T}}\times\ol{\ca{T}}.
  \end{align*}  
For fixed $\bigl((\bt,h,\bv),(\bt',h',\bv')\bigr)\in \ol{\ca{T}}\times\ol{\ca{T}}$, the random variable $ Y_{\bt,h,\bv}-Y_{\bt',h',\bv'}$ follows a normal distribution
   with mean zero and variance bounded by a constant multiple of $\widetilde{\rho}^2\bigl(
     (\bt,h,\bv),(\bt',h',\bv')\bigr)$. 
Thus, there exists a constant $M>0$ such that for any $\eta> 0$,
\begin{align*}
   \mathbb{P}\bigl(|Y_{\bt,h,\bv}-Y_{\bt',h',\bv'}|\geq\widetilde{\rho}((\bt,h,\bv),(\bt',h',\bv'))\eta\bigr)
\lesssim\exp(-\eta^2/M).
\end{align*}   
Furthermore, $\mathbb{P}\bigl(Y_{\bt,h,\bv}>\sqrt{h}\eta\bigr)\lesssim\exp(-\eta^2/2),$ as $h^{-1/2}Y_{\bt,h,\bv}$ corresponds to a standard normal distributed random variable. Thus, conditions (i) and (ii) of 
Theorem 6.1  in \cite{duembgen2001} are satisfied. As in
\cite{Eckle} one shows that condition (iii) of  Theorem 6.1  in  \cite{duembgen2001} holds with $V=(3d-1)/2$ and that
 the process $Y_{\bt,h,\bv}$ is almost surely continuous on $\ol{\ca{T}}$ with respect to ${\rho}$.
The boundedness of 
$	\sup_{(\bt,h,\bv)\in \ol{\mathcal{T}}}\bigl(\beta_h \frac{|X_{\bt,h,\bv}|}{\sigma_{\bt,h,\bv}} - \alpha_h\beta_h\bigr)$ follows by an application of Theorem 6.1 and Remark 1 in  \cite{duembgen2001}.
\end{proof}

\subsection{Replacing the true densities in the limit process by estimators}
\begin{thm2}
Under the assumptions of Theorem \ref{thm.main},
 \begin{align*}
  \sup_{(\bt,h,\bv)\in \ca{T}_n}\beta_h \frac{\big||X_{\bt,h,\bv}|-|\widehat X_{\bt,h,\bv}|\big|}{\sigma_{\bt,h,\bv}}
  =o_{\mathbb{P}}(1)\quad\text{for }n\rightarrow\infty.
\end{align*}
\end{thm2}
\begin{proof}
 Recall the definition of the symmetrized set $ \ca{T}_n'$ in \eqref{1.13} and let  
\begin{align*}
    \widehat F(s,\btheta):= \frac{\sqrt{{\widetilde f_{S,\bTheta}(s,\btheta)}}}{\widetilde{f}_\bTheta(\btheta)}
    -\frac{\sqrt{ f_{S,\bTheta}(s,\btheta)}}{{f}_\bTheta(\btheta)}.
 \end{align*}
Lemma \ref{1.8} and an argument as in the proof of Lemma \ref{1.9}  show that $\sup_{(s,\btheta)\in\R\times\sd}\big|\widehat F(s,\btheta)\big|=O(\log(n)^{-1})$ for $n\rightarrow \infty$ almost surely. Define
  \begin{align*}
    \Delta_{\bt,h,\bv}:=X_{\bt,h,\bv}-\widehat X_{\bt,h,\bv}=\frac{1}{\sqrt{h}}\int_{\mathbb{S}^{d-1}}
    \int_{\mathbb{R}}\langle\btheta,\bv\rangle(\ca{H}_d\widetilde \phi^{(d-1)} )
    \biggl(\frac{s-\langle\bt,\btheta\rangle}{h}\biggr) \widehat F(s,\btheta)\,dW_{s,\btheta}
  \end{align*}
and $ \Delta_{\infty,\bt,h,\bv}:=\log(n)^{-1}X_{\bt,h,\bv}.$ We write $\widetilde {\mathbb{P}}$ and $\widetilde{\mathbb{E}}$ for the probability and expectation conditionally on
$(S_i,\bTheta_i)$, $i=n+1,\hdots,2n$. Under $\widetilde {\mathbb{P}}$, the vectors
$(\Delta_{\bt,h,\bv})_{(\bt,h,\bv)\in\ca{T}_n'}$ and  $(\Delta_{\infty,\bt,h,\bv})_{(\bt,h,\bv)\in\ca{T}_n'}$
are centered and normally distributed with
\begin{align*}
  \widetilde{\mathbb{E}}\bigl(\Delta_{\bt,h,\bv}-\Delta_{\bt',h',\bv'}\bigr)^2+ \widetilde{\mathbb{E}}\bigl(\Delta_{\infty,\bt,h,\bv}
  -\Delta_{\infty,\bt',h',\bv'}\bigr)^2\lesssim\log(n)^{-2}\quad \forall (\bt,h,\bv), (\bt',h',\bv')\in\ca{T}_n'
\end{align*}
almost surely. 
Hence, an application of Theorem 2.2.5 in \cite{Adler2007} gives
\begin{align*}
  \Big|\widetilde{\mathbb{E}}\Bigl(\sup_{(\bt,h,\bv)\in\ca{T}_n'}\Delta_{\bt,h,\bv}\Bigr)-
  \widetilde{\mathbb{E}}\Bigl(\sup_{(\bt,h,\bv)\in\ca{T}_n'}
  \Delta_{\infty,\bt,h,\bv}\Bigr)\Big| =O\big(\log(n)^{-1/2}\big) \quad \mbox{almost surely.}
\end{align*}
 Moreover, by the almost sure asymptotic boundedness of $\sup_{(\bt,h,\bv)\in {\mathcal{T}_n'}} \beta_h (|X_{\bt,h,\bv}|/\sigma_{\bt,h,\bv} - \alpha_h)$ proved in Theorem \ref{thm2.bd_limit_stat}, we have $|\widetilde{\mathbb{E}}(\sup_{(\bt,h,\bv)\in\ca{T}_n'}
  \Delta_{\infty,\bt,h,\bv})| =O(\log(n)^{-1/2})$ almost surely. Finally, for some constant $C>0$,
\begin{align*}
  \widetilde{\mathbb{P}}\Bigl(\sup_{(\bt,h,\bv)\in\ca{T}_n}\beta_h\frac{|\Delta_{\bt,h,\bv}|}{\sigma_{\bt,h,\bv}}>
  \log\log(n)^{-1/2}\Bigr)
  &\leq
  \widetilde{\mathbb{P}}\Bigl(\sup_{(\bt,h,\bv)\in\ca{T}_n'}\Delta_{\bt,h,\bv}>C\log\log(n)^{1/2}{\log(n)}^{-1/2}\Bigr)
  \\&=O\bigl(\log\log(n)^{-1/2}\bigr)
\end{align*}
for $n\rightarrow\infty$ almost surely, by Markov's inequality. The constants introduced above do not depend on the 
second sample $(S_i,\bTheta_i)$, $i=n+1,\hdots,2n$ and therefore the claim follows by an application of the law of iterated expectations.
\end{proof}

\subsection{Replacement of the standard deviation by an estimator}\label{appc5}
\begin{thm2}
Under the assumptions of Theorem \ref{thm.main},
 
 \begin{tabular}{ll}
(i) & \  $\sup_{(\bt,h,\bv)\in \ca{T}_n} \beta_h\sqrt{n} |\widehat T_{\bt,h,\bv}-E[T_{\bt,h,\bv}]|
  \Big|\frac{1}{\widehat{\sigma}_{\bt,h,\bv}} - \frac{1}{{\sigma}_{\bt,h,\bv}} \Big|
  =o_\P(1)\quad\text{for }n\rightarrow\infty
  \text{;}$ \\[0.3cm] 
(ii) & \  
 	$\sup_{(\bt,h,\bv)\in \ca{T}_n}\beta_h |\widehat X_{\bt,h,\bv}|
 \Big|\frac{1}{\widehat{\sigma}_{\bt,h,\bv}} - \frac{1}{{\sigma}_{\bt,h,\bv}} \Big|=
o_\P(1)\quad\text{for }n\rightarrow\infty
  \text{.}$\\
\end{tabular}
\end{thm2}
\begin{proof}
We only prove \textit{(i)} as \textit{(ii)} follows by a similar argument. By Lemma \ref{1.6} and Lemma \ref{1.9}, $\sigma_{\bt,h,\bv}$ and $\widehat \sigma_{\bt,h,\bv}$ are almost surely uniformly bounded from below for all sufficiently large $n$. Thus,
\begin{align*}
 &\sup_{(\bt,h,\bv)\in \ca{T}_n} \beta_h\sqrt{n} |\widehat T_{\bt,h,\bv}-E[T_{\bt,h,\bv}]|
 \Big|\frac{1}{\widehat{\sigma}_{\bt,h,\bv}} - \frac{1}{{\sigma}_{\bt,h,\bv}} \Big|\\\lesssim&
  \sup_{(\bt,h,\bv)\in \ca{T}_n }\beta_h\biggl(\sqrt{n}\frac{|\widehat T_{\bt,h,\bv}-\mathbb{E}[T_{\bt,h,\bv}]|}
 {\sigma_{\bt,h,\bv}}-\alpha_h\biggr)\sup_{(\bt,h,\bv)\in \ca{T}_n }|\sigma_{\bt,h,\bv}-\widehat{\sigma}_{\bt,h,\bv}|
 \\&+\frac{\log(n)}{\log\log(n)}\sup_{(\bt,h,\bv)\in \ca{T}_n }|\sigma_{\bt,h,\bv}-\widehat{\sigma}_{\bt,h,\bv}|
\end{align*}
almost surely. The claim follows from Lemma \ref{1.9}, Theorems \ref{1.16} and \ref{1.15} and the almost sure boun\-dedness of $\sup_{(\bt,h,\bv)\in {\mathcal{T}_n}} \beta_h (|X_{\bt,h,\bv}|/\sigma_{\bt,h,\bv} - \alpha_h)$ established in Theorem \ref{thm2.bd_limit_stat}.
\end{proof}

\section{Proofs of Theorems \ref{t10} and \ref{.1}}

\begin{proof}[{Proof of Theorem \ref{t10}}]
We have
\begin{align*}\P\Big(\exists  (\bt,h,\bv)\in\ca T_n:|\widehat T_{\bt,h,\bv}|>\kappa^{\bt,h,\bv}_n(\alpha)\Big)&=
 1-\mathbb{P}\Big(\sup_{(\bt,h,\bv)\in \ca{T}_n} \beta_h\Bigl(\sqrt{n} \frac{|\widehat T_{\bt,h,\bv}|}{\widehat{\sigma}_{\bt,h,\bv}}
- \alpha_h\Bigr)\leq\kappa_n(\alpha)\Big)
\\&=1-\mathbb{P}\Big(\sup_{(\bt,h,\bv)\in \ca{T}_n}\beta_h\Bigl( \frac{|\widehat X_{\bt,h,\bv}|}{\widehat{\sigma}_{\bt,h,\bv}} - \alpha_h\Bigr)\leq \kappa_n(\alpha)\Big)+o(1)
 \\&\leq\alpha+o(1)\end{align*}
for $n\rightarrow\infty$. Here we used \eqref{t11j} for the first equality and
 Theorem \ref{thm.main} for the second.
\end{proof}

\begin{proof}[{Proof of Theorem \ref{.1}}]
We assume in the following that $c_d>0$. The case $c_d<0$ can be treated similarly. The following statement can be derived similarly as in the proof of Theorem 3.3 in \cite{Eckle}. For a null sequence $0<(\alpha_n)_{n\in\mathbb{N}}< 1$ converging sufficiently slowly and for the set $\ca T_n'\subseteq\ca T_n$ of all triples for which the inequality
\beq{t20}
 \int_{\mathbb{R}^d} \phi_{\bt,h}(\bb)\partial_{\bv} f_{\bbeta} (\bb)\,d\bb<-2c_d^{-1}h^{-d-1/2}\kappa_n^{\bt,h,\bv}(\alpha_n)
\eeq
is satisfied it holds that 
 \[
\P\Big( \widehat T_{\bt,h,\bv}>\kappa^{\bt,h,\bv}_n(\alpha_n)\tn{ for all }(\bt,h,\bv)\in\ca T_n'\Big)=1-o(1).
 \]
Hence, the hypotheses \eqref{t5neu2} are rejected simultaneously on the set of scales $\ca T_n'$ with asymptotic probability one. Moreover, for a mode $\bb_0$  in $(\mathbf a_1,\mathbf a_2)$ of $f_\bbeta$ and any triple  $(\bt,h,\bv)\in\ca{T}_n^{\bb_0}$, one can 
prove that $\partial_{\bv}f_\bbeta(\bb)\lesssim
-h$ for all $\bb\in \tn{supp}\phi_{\bt,h}$ by following the arguments in the proof of Theorem 10 in \cite{Eckle2017}. Consequently, 
$\int_{\mathbb{R}^d} \phi_{\bt,h}(\bb)\partial_{\bv}f_\bbeta(\bb)\tn{d}\bb\lesssim -h.$

As seen in Appendices \ref{1.14}-\ref{appc5} 
\[\bigg|\sup_{(\bt,h,\bv)\in\ca T_n}\beta_h\bigg( \frac{|\widehat X_{\bt,h,\bv}|}{\widehat{\sigma}_{\bt,h,\bv}} - \alpha_h\bigg)
-\sup_{(\bt,h,\bv)\in\ca T_n}\beta_h\bigg( \frac{| X_{\bt,h,\bv}|}{{\sigma}_{\bt,h,\bv}} - \alpha_h\bigg)\bigg|=o_\P(1)\] 
for $n\rightarrow\infty$ and $\sup_{(\bt,h,\bv)\in\ca T_n}\beta_h( \frac{| X_{\bt,h,\bv}|}{{\sigma}_{\bt,h,\bv}} - \alpha_h)$ is
finite almost surely  for $n$ sufficiently large. Moreover,
$\widehat{\sigma}_{\bt,h,\bv}$ is almost surely uniformly bounded by Lemmas \ref{1.6} and \ref{1.9} for $n$ sufficiently large, such that
\[
h^{-d-1/2}\kappa_n^{\bt,h,\bv}(\alpha_n)\lesssim \sqrt{\frac{\log n}{n}} h^{-d-1/2}
\]
almost surely. In order to verify \eqref{t20}, we need to pick $h$ such that $h^{d+3/2}\gtrsim (\log (n)/n)^{1/2}.$ Thus, \eqref{t20} holds for $h\geq C\log(n)^\frac{1}{2d+3}n^{-\frac{1}{2d+3}}$ with some sufficiently large constant $C>0$.
\end{proof}


\end{document}